\theoremstyle{plain}
\newtheorem{theorem}[thm]{Theorem}
\newtheorem{lemma}[thm]{Lemma}
\newtheorem{corollary}[thm]{Corollary}
\theoremstyle{definition}
\newtheorem{definition}[thm]{Definition}
\newtheorem{property}[thm]{Property}
\newtheorem{example}[thm]{Example}
\newtheorem{notation}[thm]{Notation}
\long\def\ignore#1{\relax}
\newcommand{\s}{{\tt t}}
\newcommand{\uu}{{\tt u}}
\newcommand{\vv}{{\tt v}}
\newcommand{\x}{{\tt x}}
\newcommand{\y}{{\tt y}}
\newcommand{\z}{{\tt z}}
\newcommand{\w}{{\tt w}}
\newcommand{\ap}{{\tt a}}
\newcommand{\bp}{{\tt b}}
\newcommand{\cp}{{\tt c}}
\newcommand{\Ap}{{\mathcal A}}
\newcommand{\Lp}{{\Lambda_\p}}
\newcommand{\p}{{\tt p}}
\newcommand{\q}{{\tt q}}
\newcommand{\dup}{\delta}
\newcommand{\Rew}[1]{\rightarrow_{#1}}
\newcommand{\Rewn}[1]{\rightarrow^*_{#1}}
\newcommand{\reds}{\rightarrow^*}
\newcommand{\pair}[2]{\langle #1,#2 \rangle }
\newcommand{\A}{{\tt A}}
\newcommand{\D}{{\tt B}}
\newcommand{\E}{{\tt E}}
\newcommand{\arrow}{\rightarrow}
\newcommand{\der}{\vdash}
\newcommand{\sep}{\hspace{.5cm}}
\newcommand{\minisep}{\hspace{.4cm}}
\newcommand{\Pu}{{\mathcal P}}
\newcommand{\dem}{\triangleright}
\newcommand{\multiset}[1]{[#1]}
\newcommand{\emul}{\mult{\, }}
\newcommand{\set}[1]{\{ #1 \}}
\newcommand{\ie}{{\it i.e.}}
\newcommand{\wlg}{{without loss of generality}}
\newcommand{\eg}{{\it e.g.}}
\newcommand{\cf}{{\it cf.}}
\newcommand{\ih}{{\it i.h.}}
\newcommand{\isubs}[2]{\{ #1 / #2 \}}
\newcommand{\Gam}{\Gamma}
\newcommand{\Lam}{\Lambda}
\newcommand{\Del}{\Delta}
\newcommand{\del}{\del}
\newcommand{\fv}[1]{{\tt fv}(#1)}
\newcommand{\bv}[1]{{\tt bv}(#1)}
\newcommand{\dom}[1]{{\tt dom}(#1)}
\newcommand{\oprod}{o}
\newcommand{\pathd}[1]{\times_{2}}
\newcommand{\pder}{\Vdash}
\newcommand{\sm}{\setminus}
\newcommand{\es}{\emptyset}
\newcommand{\slist}{{\tt L}}
\newcommand{\id}{{\tt I}}
\newcommand{\fail}{{\tt fail}}
\newcommand{\deft}[1]{{\bf #1}}
\newcommand{\run}{{\tt dB}} 
\newcommand{\rdeux}{{\tt subs}} 
\newcommand{\rtrois}{{\tt match}_s} 
\newcommand{\rquatre}{{\tt match}_f} 
\newcommand{\rcinq}{{\tt rem}_f} 
\newcommand{\rsix}{{\tt la}_f} 
\newcommand{\rsept}{{\tt abs}_f} 
\newcommand{\rhuit}{{\tt app}_f} 
\newcommand{\rneuf}{{\tt lem}_f} 
\newcommand{\I}{{\cal I}}
\newcommand{\Reweq}[1]{\stackrel{=}{\Rew{#1}}}
\newcommand{\sig}{\sigma}
\newcommand{\canonical}{canonical form}
\newcommand{\canonicals}{canonical forms}
\newcommand{\can}{{\tt cf}}
\newcommand{\anf}{\mbox{anf}}
\newcommand{\Anf}{\mbox{Anf}}
\newcommand{\ccontext}{{\tt C}}
\newcommand{\hcontext}{{\tt H}}
\newcommand{\ccount}[1]{\#(#1)} 
\newcommand{\iI}{{i \in I}}
\newcommand{\kK}{k  \in K}
\newcommand{\bu}{\bigvee}
\newcommand{\C}{{\tt C}}
\newcommand{\fin}[1]{{\tt F}(#1)}
\newcommand{\donne}{\triangleright}
\newcommand{\Final}{{\tt Final}}
\newcommand{\Head}{{\tt Head}}
\newcommand{\Prefix}{{\tt Prefix}}
\newcommand{\Varp}{{\tt Varp}}
\newcommand{\Pairp}{{\tt Pairp}}
\newcommand{\Abs}{{\tt Abs}}
\newcommand{\Unionk}{{\tt Many}}
\newcommand{\Prod}{{\tt Prod}}
\newcommand{\Subs}{{\tt Subs}}
\newcommand{\Pa}[2]{{\tt P}^{#1}_{#2}}
\newcommand{\K}{{\tt T}}
\newcommand{\KI}{{\tt M}}
\newcommand{\LK}[2]{{\tt H}_{#1}^{#2}}
\newcommand{\V}{{\cal V}}
\newcommand{\meas}{{\tt meas}}
\newcommand{\altp}{{\cal A}_0}
\newcommand{\borne}{\triangledown}
\newcommand{\R}{{\cal R}}
\newcommand{\toc}[1]{{\tt toc}(#1)}
\newcommand{\munion}{\sqcup}
\newcommand{\trvarpat}{\mathtt{varpat}}
\newcommand{\trpairpat}{\mathtt{pairpat}}
\newcommand{\trsub}{\mathtt{sub}}
\newcommand{\quadd}{\hspace{.2cm}}
\newcommand{\quadt}{\hspace{.4cm}}
\newcommand{\appctx}[2]{#1 {[ }  #2 { ]} }
\newcommand{\PT}{ \pi }
\newcommand{\prodt}[2]{\times(#1,#2)}
\newcommand{\rew}{\rightarrow}
\newcommand{\mult}[1]{[#1]}
\newcommand{\seq}[2]{#1 \vdash #2}
\newcommand{\ax}{\mathtt{ax}}
\newcommand{\Gamk}{\Gam_k}  
\newcommand{\sigk}{\sigma_k} 
\newcommand{\many}{\mathtt{many}}
\newcommand{\introarrow}{{\tt abs}}
\newcommand{\app}{{\tt app}}
\newcommand{\trpair}{\mathtt{pair}}
\newcommand{\tree}{{\cal T}}
\newcommand{\cal}[1]{\mathcal #1}
\begin{document}

\title{Solvability = Typability + Inhabitation}

\author{Antonio Bucciarelli\rsuper{a}}
\address{\lsuper{a}Universit\'e de  Paris, CNRS, IRIF, France  }	
\email{buccia@irif.fr} 

\author{ Delia Kesner\rsuper{b} }	
\address{\lsuper{b}Universit\'e de  Paris, CNRS, IRIF and Institut Universitaire de France, France  }
\email{kesner@irif.fr}  

\author{Simona Ronchi Della Rocca\rsuper{c}}	
\address{\lsuper{c}Dipartimento di Informatica, Universit\`a di Torino, Italy}	
\email{ronchi@di.unito.it}  

\keywords{Pattern-calculus, Type-Assignement Systems, Non-idempotent Intersection Types, Solvability, Inhabitation.}
\subjclass{F.4.1 Mathematical Logic: Lambda calculus and related systems, Proof theory. F.3.1 Specifying and Verifying and Reasoning about Programs: Logics of programs}

\begin{abstract}
We extend the classical notion of solvability to a $\lambda$-calculus equipped with pattern matching.
  We prove
  that solvability can be characterized by means of {\it typability}
  and {\it inhabitation} in an {\it intersection type system} $\Pu$
  based on {\it non-idempotent} types.  We show first that the system
  $\Pu$ characterizes the set of terms having \canonical, \ie\ that a
  term is typable if and only if it reduces to a \canonical.  But the
  set of solvable terms is properly contained in the set of
  \canonicals. Thus, typability alone is not sufficient to
  characterize solvability, in contrast to the case for the
  $\lambda$-calculus.  We then prove that typability, together with
  inhabitation, provides a full characterization of solvability, in
  the sense that a term is solvable if and only if it is typable and
  the types of all its arguments are inhabited.  We complete the
  picture by providing an algorithm for the inhabitation problem of
  $\Pu$.
  \end{abstract}

\maketitle

\section{Introduction}

In these last years there has been a growing interest in {\it pattern
  $\lambda$-calculi}~\cite{PeytonJones,Kahl-2004a,CK04,KvOdV08,JK09,Petit09}
which are used to model the pattern-matching primitives of functional
programming languages (\eg\ OCAML, ML, Haskell) and proof assistants
(\eg\ Coq, Isabelle).  These calculi are extensions of the $\lambda$-calculus:
abstractions are written as $\lambda \p. \s$, where $\p$ is a {\it
  pattern} specifying the expected structure of the argument.
In this paper we restrict our
attention to {\it pair} patterns, which are expressive enough to
illustrate the challenging notion of solvability in the
framework of pattern $\lambda$-calculi.

We define a calculus with {\it
  explicit pattern-matching} called $\Lp$.  The use of
explicit pattern-matching becomes very appropriate to implement
different {\it evaluation strategies}, thus giving rise to different
{\it programming languages} with
pattern-matching~\cite{CK04,CirsteaFK07,Bal2012}.  In all of them, an
application $(\lambda \p. \s)\uu$ reduces to $\s[\p/\uu]$, where the
constructor $[\p/\uu]$ is an explicit matching, defined by means of
suitable reduction rules, which are used to decide if the argument
$\uu$ matches the pattern $\p$.  If the matching is possible, the
evaluation proceeds by computing a substitution which is applied to
the body $\s$. Otherwise, two cases may arise: either a successful
matching is not possible at all, and then the term $\s[\p/\uu]$
reduces to a {\it failure}, denoted by the constant $\fail$, or
pattern matching could potentially become possible after the
application of some pertinent substitution to the argument $\uu$, in
which case the reduction is simply {\it blocked}.  For example,
reducing $(\lambda \pair{\z_1}{\z_2}.\z_1)(\lambda \y.\y)$ leads to a failure,
while reducing $(\lambda \pair{\z_1}{\z_2}.\z_1)\y$ leads to a blocking
situation.

We aim to study {\it solvability} in the
$\Lp$-calculus. Let us first recall this  notion
in the framework of the $\lambda$-calculus: a closed
(\ie, without free variables) $\lambda$-term $\s$ is solvable if there
is $n\geq 0$ and there are terms $\uu_{1},...,\uu_{n}$ such that
$\s\uu_{1}...\uu_{n}$ reduces to the identity function.  Closed
solvable terms represent meaningful programs: if $\s$ is closed and
solvable, then $\s$ can produce any desired result when applied to a
suitable sequence of arguments. The relation between solvability and
meaningfulness is also evident in the semantics: it is sound to equate
all unsolvable terms, as in Scott's original model
$D_\infty$~\cite{Scott70}. This notion can be easily extended to open
terms, through the notion of {\it head context}, which does the job of
both closing the term and then applying it to an appropriate sequence
of arguments. Thus a $\lambda$-term $\s$ is solvable if there is a
head context $\hcontext$ such that, when $\hcontext$ is filled by
$\s$, then $\appctx{\hcontext}{\s}$ is closed and reduces to the
identity function.

In order to extend the notion of solvability to the 
$\Lp$-calculus, it is clear that pairs have to be taken into account.
A relevant question is whether a pair should be considered as
meaningful.  At least two choices are possible: a {\it lazy} semantics
considering any pair to be meaningful, or a {\it strict} one requiring
both of its components to be meaningful. We chose a lazy approach, in
fact in the operational semantics of $\Lp$ the constant $\fail$ is
different from $\pair{\fail}{\fail}$: if a term reduces to $\fail$ we
do not have any information about its result, but if it reduces to
$\pair{\fail}{\fail}$ we know at least that it represents a pair. In
fact, being a pair is already an observable property, which in
particular is sufficient to unblock an explicit matching,
independently from the solvability  of
its components.  As a consequence, a term $\s$ is {\it defined} to be
{\it solvable} iff there exists a head context
$\hcontext$ such that $\appctx{\hcontext}{\s}$ is closed and reduces
to a pair. Thus for example, the term $\pair{\s}{\s}$ is always
solvable, also when $\s$ is not
solvable. Our notion of solvability turns out to be conservative with respect
to the  same notion for the $\lambda$-calculus (see
Theorem~\ref{thm:con}).  \medskip

\medskip

In this paper we characterize solvability for
the $\Lp$-calculus through two different and complementary notions
related to a type assignment system with non-idempotent intersection
types, called $\Pu$. The first one is {\it typability}, that gives the
possibility to construct a typing derivation for a given term, and the
second one is {\it inhabitation}, which gives the possibility to
construct a term from a given typing.  More precisely, we first supply
a notion of {\it canonical form} such that reducing a term to some
canonical form is a {\it necessary} but {\it not a sufficient}
condition for being solvable. In fact, canonical
forms may contain blocking explicit matchings, so that we need to
guess whether or not there exists a substitution being able to {\it
  simultaneously} unblock all these blocked forms. Our type system
$\Pu$ {\it characterizes} canonical forms: a term $\s$ has a canonical
form if and only if it is typable in system $\Pu$
(Theorem~\ref{l:characterization-canonical}). Types are of the shape
$\A_1 \to \A_2 \to...\to \A_n \to \sigma$, for $n \geq 0$, where
$\A_{i}$ are multisets of types and $\sigma$ is a type. The use of
multisets to represent the non-idempotent intersection is standard,
namely $\mult{\sig_{1},...,\sig_{m}}$ is just a notation for
$\sig_{1}\cap...\cap\sig_{m}$. By using the type system $\Pu$ we can
supply the following {\it characterization} of
solvability (Theorem~\ref{t:main-result}): a closed term $\s$ in the
$\Lp$-calculus is solvable if and only if $\s$ is
typable in system $\Pu$, let say with a type of the shape $\A_1 \to
\A_2 \to...\to \A_n \to \sigma$ (where $\sigma$ is a type derivable for a pair),
and for all $1\leq i \leq n$ there is a term $\s_i$ inhabiting the
type $\A_i$. In fact, if $\uu_{i}$ inhabits the type $\A_{i}$, then
$\s \uu_{1}...\uu_{n}$, resulting from plugging $\s$ into the head
context $\Box \uu_{1}...\uu_{n}$, reduces to a pair.  The extension of
this notion to open terms is obtained by suitably adapting the notion
of head context of the $\lambda$-calculus to our pattern calculus.
         
The property of being solvable in our calculus is clearly
undecidable. More precisely, the property of having a canonical
form is undecidable, since $\Lp$ extends the $\lambda$-calculus, the
$\lambda$-terms having a $\Lp$-canonical form are exactly the solvable ones,
and solvability of $\lambda$-terms is an undecidable property.  But
our characterization of solvability through the inhabitation property
of $\Pu$ does not add a further level of undecidability: in fact we
prove that inhabitation for system $\Pu$ is {\it decidable}, by
designing a sound and complete inhabitation algorithm for it.  The
inhabitation algorithm presented here is a non trivial extension of
the one given in~\cite{bkdlr14, DBLP:journals/lmcs/BucciarelliKR18}
for the $\lambda$-calculus, the difficulty of the extension being due to
the explicit pattern matching.

\paragraph{\bf Relation with $\lambda$-calculus}

Let us
recall the existing characterizations of solvability for the $\lambda$-calculus:
\begin{description}
 \item[(1)] $\appctx{\hcontext}{\s}$ reduces to the identity for an appropriate head context $\hcontext$;  
\item[(2)] $\s$ has a head normal form; 
\item[(3)]  $\s$ can be typed in a
suitable intersection type system. 
\end{description}
Statement {\bf (1)} is the definition of solvability, Statement {\bf
  (2)} (resp. {\bf (3)}) is known as the {\it syntactical} (resp. {\it
  logical}) characterization of solvability.  The syntactical
characterization, \ie\ {\bf (2)} $\Leftrightarrow$ {\bf (1)} has been
proved in an untyped setting using the standardization theorem
(see~\cite{barendregt84nh}).  The logical characterization, \ie\ {\bf
  (3)} $\Leftrightarrow$ {\bf (1)}, uses the syntactical one: it is
performed by building an intersection type assignment system
characterizing terms having head normal form (see for
example~\cite{Dezani-Ghilezan:TYPES-2002}). Then the implication {\bf
  (3)} $\Rightarrow$ {\bf (2)} corresponds to the soundness of the
type system (proved by means of a subject reduction property), while
{\bf (2)} $\Rightarrow$ {\bf (3)} states its completeness (proved by
subject expansion).

Traditional systems in the literature characterizing solvability for
$\lambda$-calculus are for example~\cite{tipoA-BCD:JSL,krivine93book}, where
intersection is {\it idempotent}. Exactly the same
results hold for {\it non-idempotent} intersection types, for example
for the type system~\cite{deC09,bkdlr14}, which is a
restriction of $\Pu$ to $\lambda$-terms.

How does the particular case of the $\lambda$-calculus fit in the
``solvability = typability + inhabitation'' frame?  We address this
issue in the following digression.  Let $\iota$ be a type which is
peculiar to some subset of ``solvable'' terms, in the sense that any
closed term of type $\iota$ reduces to a solvable term in that set (in
the present work, such a subset contains all the pairs, in the case of
the $\lambda$-calculus, it is
  the singleton containing only the identity). Then a type $\tau$ of
the form $\A_1\arrow\ldots\arrow\A_n\arrow\iota$ may be viewed as a
certificate, establishing that, by applying a closed term $\s:\tau$ to
a sequence of closed arguments $\uu_i:\A_i$, one gets a term that
reduces to a term in such a subset.  This is summarized by the slogan
``solvability = typability + inhabitation''.
In the case of the call-by-name $\lambda$-calculus, however,
typability alone already guarantees
solvability. The mismatch is only apparent,
though: any closed, head normal term of the $\lambda$-calculus,
\ie\ any term of the shape $\lambda \x_1\ldots\x_n.\x_j\s_1\ldots\s_m\ (n,m
\geq 0)$, may be assigned a type of the form
$\A_1\arrow\ldots\arrow\A_n\arrow\iota$
where all the $\A_i$'s are empty except the one corresponding to the
head variable $\x_j$, which is of the shape
$[\underbrace{\emul\arrow\ldots\arrow\emul}_{m}\arrow\iota]$.  The problems of finding inhabitants of the empty type
and of $\underbrace{\emul\arrow\ldots\arrow\emul}_{m}\arrow\iota$ are
both trivial.  Hence, ``solvability = typability + inhabitation''
does hold for the $\lambda$-calculus, too, but the ``inhabitation'' part
is trivial in that particular case. This is due, of course, to the
fact that the head normalizable terms of the $\lambda$-calculus coincide
with both the solvable terms and the typable ones.

 But in other settings, a term may be both typable and non
 solvable, the types of (some of) its arguments
 being non-inhabited (Theorem~\ref{th:completess}).

\paragraph{\bf Related work} 

This work is an expanded and revised version of~\cite{DBLP:conf/tlca/BucciarelliKR15}. In particular: 
\begin{itemize}
\item The reduction relation on $\Lp$-terms in this paper is
  smaller. In particular, the new reduction system uses {\it reduction at a
    distance}~\cite{AK10}, implemented through the notion of list
  contexts.
  \item Accordingly, the  type system $\Pu$ in this paper
    and the corresponding inhabitation algorithm are much simpler.
 In particular, the use of idempotent/persistent  information
    on  the structure of patterns is no more needed. 
\end{itemize}
Non-idempotent intersection types are also used in~\cite{BernadetTh}
to derive strong normalization of a call-by-name calculus with
constructors, pattern matching and fixpoints. A similar result can be
found in~\cite{BBM18}, where the completeness proof of the (strong)
call-by-need strategy in~\cite{BBBK17} is extended to the case of
constructors. Based on~\cite{DBLP:conf/tlca/BucciarelliKR15}, the type
assignment system $\Pu$ was developed in~\cite{Alves} in order to
supply a quantitative analysis (upper bounds and exact measures) for
head reduction.

\medskip

\paragraph{\bf Organization of the paper.}
Section~\ref{s:calculus} introduces the pattern calculus and its main
properties. Section~\ref{s:type-system} presents the type system and
proves a characterization of terms having canonical forms by means of
typability. Section~\ref{s:inhabitation} presents a sound and complete
algorithm for the inhabitation problem associated with our typing
system. Section~\ref{s:charact} shows a complete characterization of
solvability using the inhabitation result and
the typability notion.  Section~\ref{s:conclusion} concludes by
discussing some future work.
\section{The Pair Pattern Calculus}
\label{s:calculus}

We now introduce the $\Lp$-calculus,
a generalization of the $\lambda$-calculus where
abstraction is extended to {\it patterns} and terms 
to {\it pairs}. Pattern matching is specified by means 
of  an  {\it explicit} 
operation.
Reduction is performed only if  the
argument matches the abstracted pattern. 

\medskip\noindent
\deft{Terms}  and \deft{contexts} of the $\Lp$-calculus are defined
by means of the following grammars: 
$$
\begin{array}{llll}
   \mbox{(\deft{Patterns})} &  \p,\q   & ::= & \x \mid  \pair{\p}{\q}  \\
  \mbox{(\deft{Terms})}    &  \s,\uu,\vv & ::= & \x \mid \lambda \p.\s \mid
                                                 \pair{\s}{\uu} \mid \s\uu \mid \s[\p/\uu] \mid \fail\\
 \mbox{(\deft{List Contexts})}     &  \slist & ::= & \Box  \mid \slist [\p/\s] \\
 \mbox{(\deft{Term Contexts})}    &  \ccontext & ::= & \Box   \mid \lambda \p. \ccontext \mid \pair{\ccontext}{\s} \mid \pair{\s}{\ccontext} \mid  \ccontext \s \mid \s\ccontext \mid \ccontext [\p/\s] \mid \s[\p/\ccontext] \\
\mbox{(\deft{Head Contexts})}    &  \hcontext & ::= & \Box   \mid \lambda \p. \hcontext \mid \hcontext \s  \mid  \hcontext [\p/\s] \\
  \end{array}  
  $$
\noindent where $\x,\y,\z$ range over a countable set of
variables, and every pattern $\p$ is {\it linear}, \ie\ every variable
appears at most once in $\p$. We denote by $\id$ the \deft{identity
  function} $\lambda \x. \x$ and by $\dup$ the auto applicative function $\lambda \x.\x\x$. As usual we use the abbreviation
$\lambda \p_1 \ldots \p_n. \s_1 \ldots \s_m $ for
$\lambda \p_1 (\ldots (\lambda \p_n. ((\s_1 \s_2) \ldots \s_m))\ldots)$,
$n\geq 0$, $m\geq 1$.  Remark that every $\lambda$-term is in particular a $\Lp$-term.


\noindent
The operator $[\p/\s]$ is called an \deft{explicit matching}. The
constant $\fail$ denotes the failure of the matching operation.
The sets of \deft{free} and \deft{bound} variables of a term $\s$, denoted respectively by $\fv{\s}$ and $\bv{\s}$, are defined as
expected, in particular $\fv{\lambda \p.\s} := \fv{\s} \setminus \fv{\p}$
and $\fv{\s[\p/\uu]} := (\fv{\s} \setminus \fv{\p}) \cup \fv{\uu}$.  A
term $\s$ is \deft{closed} if $\fv{\s} = \emptyset$. We write
$\p \# \q$ iff $\fv{\p} \cap \fv{\q} = \es$.  As usual, terms are
considered modulo $\alpha$-conversion.  Given a term
  (resp. list) context $\ccontext$ (resp. $\slist$) and a term $\s$,
  $\appctx{\ccontext}{\s}$ (resp. $\appctx{\slist}{\s}$) denotes the term
  obtained by replacing the unique occurrence of $\square$ in
  $\ccontext$ (resp. $\slist$) by $\s$, thus possibly capturing some free
  variables of $\s$. In this paper,
    an occurrence of a subterm $\uu$ in a term $\s$ is
    understood as the unique context $\ccontext$ such
    that $\s = \ccontext[\uu]$.

\medskip
The \deft{reduction relation} of the $\Lp$-calculus, denoted by $\Rew{}$, is the $\ccontext$-contextual closure of the following rewriting rules:
\[ \begin{array}{llllllllllllll}
  (\run) &    \appctx{\slist}{\lambda \p. \s} \uu  & \mapsto & \appctx{\slist}{\s[\p/\uu]}  \\
  (\rdeux) &    \s[\x/\uu]            & \mapsto & \s\isubs{\x}{\uu} \\
(\rtrois) &   \s[\pair{\p_1}{\p_2}/\appctx{\slist}{\pair{\uu_1}{\uu_2}}]  & \mapsto & \appctx{\slist}{\s[\p_1/\uu_1][\p_2/\uu_2]}  \\  
(\rquatre) &    \s[\pair{\p_1}{\p_2}/\appctx{\slist}{ \lambda \q. \uu}]  & \mapsto & \fail  \\
(\rhuit) &   \appctx{\slist}{ \pair{\s}{\uu}} \vv  & \mapsto & \fail  \\  
(\rcinq) &    \s[\pair{\p_1}{\p_2}/\fail ]  & \mapsto & \fail \\
(\rneuf) &   \appctx{\slist}{\fail}  & \mapsto & \fail  \\  
(\rsix) & \fail\  \s & \mapsto & \fail  \\
(\rsept) &    \lambda \p. \fail   & \mapsto & \fail  \\
\end{array} \]

\noindent where $\s\isubs{\x}{\uu} $ denotes the substitution of all
the free occurrences of $\x$ in $\s$ by $\uu$ and $\slist\neq\Box$ in
rule $(\rneuf)$.  By $\alpha$ -conversion, and \wlg, no reduction rule
captures free variables, so that in particular $\bv{\slist}\cap
\fv{\uu} = \es$ holds for rule $(\run)$ and $\bv{\slist} \cap \fv{\s}
= \es$ holds for rule $(\rtrois)$.  The rule $(\run)$ triggers the
pattern operation while rule $(\rdeux)$ performs substitution, rules
$(\rtrois)$ and $(\rquatre)$ implement (successful or unsuccessful)
pattern matching.  Rule $(\rhuit)$ prevents bad applications and rules
$(\rcinq)$, $(\rneuf)$, $(\rsix)$ and $(\rsept)$ deal with propagation
of failure in {\tt r}ight/{\tt l}eft {\tt e}xplicit {\tt m}atchings,
{\tt l}eft {\tt a}pplications and {\tt abs}tractions, respectively.
A \deft{redex} is a term having the form of the left-hand side
  of some rewriting rule $\mapsto$. The reflexive and transitive
closure of $\Rew{}$ is written $\Rewn{}$.

\begin{lemma} \label{l:confluence} The reduction relation $\Rew{}$ is confluent.
  \end{lemma}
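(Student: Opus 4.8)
The plan is to prove confluence by the Tait--Martin-L\"of method of \emph{parallel reduction}. Newman's lemma is unavailable, since $\Rew{}$ is non-terminating: it contains the $\l$-calculus, so e.g. $\dup\dup \Rew{} \dup\dup$ loops. First I would define a parallel reduction relation $\Rightarrow$ on $\Lp$-terms by induction on their structure, allowing every rewriting rule to be contracted simultaneously with parallel reductions in its immediate subterms. The delicate point of the definition is the \emph{at-a-distance} rules $(\run)$, $(\rtrois)$, $(\rquatre)$, $(\rhuit)$ and $(\rneuf)$: a single parallel step must be permitted to fire through an arbitrary list context $\slist$, while the pending matchings recorded in $\slist$ are themselves reduced in parallel. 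I would then verify the sandwich $\Rew{} \subseteq\ \Rightarrow\ \subseteq \Rewn{}$, so that the reflexive--transitive closures coincide, $\Rewn{} =\ \Rightarrow^{*}$, and confluence of $\Rew{}$ reduces to confluence of $\Rightarrow$.

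Because rule $(\rdeux)$ performs a meta-level substitution, and because $(\rtrois)$ may duplicate the matched components, the core algebraic ingredient is a \emph{substitution lemma} for parallel reduction: if $\s \Rightarrow \s'$ and $\uu \Rightarrow \uu'$, then $\s\isubs{\x}{\uu} \Rightarrow \s'\isubs{\x}{\uu'}$, proved by induction on the derivation of $\s \Rightarrow \s'$ (the $\alpha$-conventions attached to the rules, in particular $\bv{\slist}\cap\fv{\uu}=\es$ for $(\run)$, keep this lemma free of capture). With this in hand I would establish the diamond property of $\Rightarrow$ via Takahashi's complete-development technique: for each term $\s$ define its full parallel reduct $\s^{*}$, obtained by contracting every redex at once, and prove the \emph{triangle property} --- whenever $\s \Rightarrow \s'$ one has $\s' \Rightarrow \s^{*}$ --- by induction on the definition of $\s^{*}$. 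The diamond property is then immediate, and the standard fact that a relation with the diamond property has a confluent reflexive--transitive closure, combined with the sandwich above, yields confluence of $\Rew{}$.

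The main obstacle is not the $\beta$-style duplication, which is handled exactly as for the $\l$-calculus, but the interaction of the distance rules with the failure-propagation rules through list contexts when setting up $\s^{*}$. Defining $\s^{*}$ requires, for a term of the form $\appctx{\slist}{\cdots}$, peeling the list context in order to recognize whether an outer $(\run)$-, $(\rtrois)$-, $(\rquatre)$- or $(\rhuit)$-redex is present, and then contracting that outer redex coherently with the parallel reductions inside $\slist$ and in the subterms. One must also check every critical overlap: an outer $(\run)$- or $(\rhuit)$-redex whose list context already contains a matching $(\rtrois)$/$(\rquatre)$/$(\rdeux)$-redex, and a $\fail$ that can be absorbed by several of $(\rcinq)$, $(\rneuf)$, $(\rsix)$, $(\rsept)$ simultaneously. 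These $\fail$-overlaps are the only genuinely non-orthogonal cases, but they are benign, since each competing contraction reaches $\fail$ and is therefore trivially joinable. I expect the triangle property to go through once $\s^{*}$ is designed to push $\fail$ outward maximally and to traverse the list-context structure in a fixed order; the real labour is the case analysis of the induction for $\s^{*}$ across all nine rules and their list-context variants.
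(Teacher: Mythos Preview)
Your proposal is a valid route, but it is \emph{not} the one the paper takes, and it is worth seeing why the paper's decomposition is lighter.

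The paper does not use parallel reduction at all. Instead it isolates the only duplicating rule, $(\rdeux)$, calling it $s_0$, and lets $\altp$ be the remaining eight rules. Then: (i) $\Rew{s_0}$ is trivially confluent; (ii) $\Rew{\altp}$ is shown confluent by van~Oostrom's \emph{decreasing diagrams} technique, using a fixed well-founded order on the rule labels $s_1 < \dots < s_8$ and checking that every local peak $s_l \cdot {}_{s_m}\!\!\leftarrow$ closes by steps whose labels are bounded by $\{l,m\}$ in the required decreasing pattern; (iii) $\Rew{s_0}$ and $\Rew{\altp}$ commute, via an elementary square-closing lemma; (iv) Hindley--Rosen gives confluence of $\Rew{} = \Rew{s_0} \cup \Rew{\altp}$.

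What this buys compared to your plan: by pulling $(\rdeux)$ out, the paper never has to define a complete-development function $\s^{*}$ that simultaneously handles meta-level substitution, list-context traversal, and $\fail$-propagation. The decreasing-diagram check for $\altp$ is a finite list of labeled local peaks, each closed by one or two explicit steps; the list-context overlaps you correctly identify (e.g.\ nested $(\rtrois)$ inside the $\slist$ of an outer $(\run)$ or $(\rtrois)$) are handled case by case without any inductive machinery. Commutation with $s_0$ is almost for free because $s_0$ is left-linear and the other rules do not create or destroy variable occurrences in the substituted argument.

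Your Takahashi approach should go through, but the ``real labour'' you anticipate in the last paragraph is genuine: designing $\s^{*}$ so that the triangle property survives the interaction of $(\run)/(\rtrois)/(\rquatre)/(\rhuit)$ through nested list contexts, together with the several $\fail$-absorptions, is a large case analysis, and one small slip in the order of contractions can break the triangle. The paper's modular route sidesteps exactly this difficulty.
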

\begin{proof} The proof is given in the next subsection. \end{proof} 

\medskip

\deft{Normal forms} ${\cal N}$ are terms without occurrences of redexes; 
  they are formally defined by the following grammars:
\[
 \begin{array}{lll}
    {\cal N} & ::= &  \fail \mid {\cal O} \\
    {\cal O} & ::= & {\cal M} \mid \lambda \p. {\cal O}  \mid  \pair{\cal N}{\cal N}  \mid   {\cal O} [\pair{\p}{\q}/{\cal M}] \\
    {\cal M}  & ::= & \x \mid {\cal M}{\cal O} \mid  {\cal M} [\pair{\p}{\q}/{\cal M}]  \\
  \end{array}    
\]
\begin{lemma} A term $\s$ is an ${\cal N}$-normal form if and only if
  $\s$ is a $\Rew{}$-normal form, \ie\ if no rewriting rule is applicable to
 any subterm of $\s$.
  \end{lemma}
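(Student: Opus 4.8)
The plan is to prove the two implications separately, each by a structural induction guided by the three syntactic classes ${\cal N}$, ${\cal O}$ and ${\cal M}$, with induction hypotheses strengthened to track reduction \emph{at a distance}. First I would show that every term produced by the grammar is a $\Rew{}$-normal form, by simultaneous induction on the generation of ${\cal N}$, ${\cal O}$ and ${\cal M}$. The crucial observation is that the left-hand sides of rules $(\run)$, $(\rtrois)$, $(\rquatre)$ and $(\rneuf)$ do not require an abstraction, a pair, or $\fail$ to sit at the very root of a subterm, but only to be reachable through a \emph{list context} $\slist$, i.e.\ through a chain of explicit matchings. I would therefore carry the following negative invariants: an ${\cal M}$-term is never of the form $\appctx{\slist}{\l \p.\s_1}$, $\appctx{\slist}{\pair{\s_1}{\s_2}}$, nor $\appctx{\slist}{\fail}$ (a neutral term never exposes an abstraction, a pair, or $\fail$ through a list context), and an ${\cal O}$-term is distinct from $\fail$ and is never of the form $\appctx{\slist}{\fail}$. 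These invariants are preserved along the productions ${\cal M}{\cal O}$, ${\cal M}[\pair{\p}{\q}/{\cal M}]$ and ${\cal O}[\pair{\p}{\q}/{\cal M}]$, because a list context is either empty or ends in a matching, so exposing a value through $\slist$ forces the left component to do so already. Granting them, one checks production by production that no rule fires at the root: e.g.\ for ${\cal M}{\cal O}$ the only candidate root rules are $(\run)$, $(\rhuit)$ and $(\rsix)$, all excluded by the ${\cal M}$-invariant, while for ${\cal O}[\pair{\p}{\q}/{\cal M}]$ the pair pattern excludes $(\rdeux)$, the ${\cal M}$-invariant excludes $(\rtrois)$, $(\rquatre)$, $(\rcinq)$, and the ${\cal O}$-invariant excludes $(\rneuf)$. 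Since the immediate subterms are normal by the induction hypothesis, the whole term is normal.

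Conversely I would show that every $\Rew{}$-normal form is generated by the grammar, by structural induction on the term, proving simultaneously: (a) a normal term $\s$ lies in ${\cal N}$; and (b) if in addition $\s$ does not expose a value through a list context, then $\s \in {\cal M}$. The variable and $\fail$ cases are immediate; for an abstraction, normality excludes $(\rsept)$, so the body is not $\fail$ and lands in ${\cal O}$, placing the abstraction in ${\cal O}$, and for a pair the components stay in ${\cal N}$ by (a). For an application $\s_1\s_2$, which can never expose a value, normality excludes the root rules $(\run)$, $(\rhuit)$ and $(\rsix)$, so $\s_1$ exposes no value and lies in ${\cal M}$ by (b); together with the classification of $\s_2$ by (a) this yields $\s_1\s_2 \in {\cal M}$. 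The matching case $\s_1[\p/\s_2]$ is handled analogously: excluding $(\rdeux)$ forces $\p$ to be a pair pattern, excluding $(\rtrois)$, $(\rquatre)$, $(\rcinq)$ keeps $\s_2$ free of exposed values so that $\s_2 \in {\cal M}$ by (b), and excluding $(\rneuf)$ together with the relevant clause places the left subterm $\s_1$ in ${\cal O}$ when establishing (a) and in ${\cal M}$ when establishing (b).

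I expect the main obstacle to be exactly this reduction at a distance through list contexts. Since a buried abstraction, pair, or $\fail$ can trigger $(\run)$, $(\rtrois)$, $(\rquatre)$ or $(\rneuf)$, a naive invariant mentioning only the root constructor of a term is not preserved by the inductive steps; the auxiliary ``exposes a value through a list context'' predicate, together with the bookkeeping of $\fail$-propagation (which value is allowed in argument and pair positions), must be formulated precisely enough to thread through both directions of the induction. Once these invariants are fixed, the remaining work is routine rule-by-rule inspection.
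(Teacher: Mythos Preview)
The paper does not actually supply a proof of this lemma; it is stated and left to the reader. Your overall scheme is sound, and the invariants you carry---that an ${\cal M}$-term never has the shape $\appctx{\slist}{\lambda \p.\uu}$, $\appctx{\slist}{\pair{\uu}{\vv}}$, or $\appctx{\slist}{\fail}$, and that an ${\cal O}$-term is never $\appctx{\slist}{\fail}$---are exactly what is needed to handle reduction at a distance. The forward direction goes through cleanly along the lines you sketch.

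There is, however, a genuine gap in your converse argument, in the application case. You write that for a normal term $\s_1\s_2$, once $\s_1 \in {\cal M}$ is secured via (b), ``the classification of $\s_2$ by (a)'' yields $\s_1\s_2 \in {\cal M}$. But (a) only gives $\s_2 \in {\cal N}$, whereas the production is ${\cal M}{\cal O}$, so you would need $\s_2 \in {\cal O}$, i.e.\ $\s_2 \neq \fail$. No rule of the calculus excludes $\fail$ in argument position: none of $(\run)$, $(\rhuit)$, $(\rsix)$ fires on $\x\ \fail$, and both $\x$ and $\fail$ are normal, so $\x\ \fail$ is a $\Rew{}$-normal form. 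Yet $\x\ \fail$ is not generated by the grammar, since $\fail \notin {\cal O}$ and the only way to produce an application is via ${\cal M}{\cal O}$. The gap is therefore not a lapse in your reasoning but a defect in the statement: as written, the lemma fails on $\x\ \fail$. The obvious repair is to replace the production ${\cal M}{\cal O}$ by ${\cal M}{\cal N}$; your invariants are unaffected (an application never exposes anything through a list context), and with that change both directions of your induction close.
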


We define a term to be  \deft{normalizing} if it reduces to a normal form. 

Let us notice that in
a language like $\Lp$, where there is an explicit notion of failure,
normalizing terms are not interesting from a computation point of
view, since $\fail$ is a normal form, but cannot be considered as the
result of a computation. If we want to formalize the notion of 
programs yielding a result, terms reducing to $\fail$ cannot be taken
into consideration. 
Remark however that $\pair{\fail}{\fail}$ is not operationally equivalent to $\fail$, according to  the idea that a pair can always be observed, and so it can
be considered as a result of a computation. This suggests a notion of
 reduction which is \deft{lazy w.r.t. pairs}, 
 \ie\  that never reduces  inside pairs. 
Indeed: 
   \begin{definition}
     A term $\s$ is  \deft{solvable} 
 if  there exists a
 head context $\hcontext$ such that $\appctx{\hcontext}{\s} $ is closed
and $\appctx{\hcontext}{\s} \Rewn{} \pair{\uu}{\vv}$, for some terms $\uu$ and $\vv$.
\end{definition}

\medskip

Therefore, a syntactical class of terms which is particularly interesting from an operational point of view is that of canonical forms. \deft{Canonical forms} ${\cal J}$ (resp. \deft{pure canonical forms $\cal J'$}) can be formalized by the following grammar:

\[
 \begin{array}{l@{\hspace{2cm}}l}
    {\cal J}  ::=  \lambda \p. {\cal J}  \mid  \pair{\s}{\s}  \mid {\cal K} \mid   {\cal J} [\pair{\p}{\q}/{\cal K}] &
    {\cal J'}  ::=  \lambda \p. {\cal J'}  \mid  \pair{\s}{\s}  \mid {\cal K'} \mid   {\cal J'} [\pair{\p}{\q}/{\cal K}']     \\
{\cal K}   ::=   \x  \mid {\cal K} \s \mid {\cal K} [\pair{\p}{\q}/{\cal K}] & 
{\cal K'}   ::=   \x  \mid {\cal K'} \s 
  \end{array}
\]
where the notion of pure canonical form, \ie\ of canonical form  without nested matchings, is a technical notion that will be useful in the sequel.
A term $\s$ \deft{is in \canonical} (or it is \deft{canonical}), 
written  ${\can}$,  if it  is generated by 
${\cal J}$, and it \deft{has a \canonical} if it reduces to a term
in $\can$. Note that ${\cal K}$-canonical forms cannot be closed.
Also, remark that the $\can$  of a term is not unique, \eg\ 
both $\pair{\id}{\id\ \id}$ and $\pair{\id}{\id}$ are $\can$s  of $(\lambda \x \y. \pair{\x}{\y})\ \id\ (\id\ \id)$.
It is worth noticing that $\cal N \cap \cal J \not= \emptyset$
  but  neither $\cal N \subset \cal J $ nor $\cal J \subset \cal N $. 
Latter, we will prove that 
solvable terms are strictly contained in the canonical ones.
\begin{example} \mbox{}
  \begin{itemize}
\item The term $\pair{\fail}{\fail}$ is both in normal and canonical form.
    \item The term $\fail$ is in normal form, but not in canonical form.
\item The term $\pair{\dup \dup}{\dup \dup}$ is in canonical form, but not in normal form.
\item The term $\lambda \pair{\x}{\y}. \id [\pair{\z_1}{\z_2}/\y \id [\pair{\y_{1}}{\y_{2}}/\z] ]$ is in canonical form, 
but not in pure canonical form.
\item The term $\lambda \pair{\x}{\y}. \id [\pair{\z_1}{\z_2}/\y \id]  $ is in pure canonical form.
\end{itemize}
\end{example}

We end this section by stating a lemma about $(\rdeux)$-reduction that
will be useful in next Section. 

\begin{lemma}\label{lem:inf} \mbox{}
Every infinite $\Rew{}$-reduction sequence
contains an infinite number of $(\rdeux)$-reduction steps.
\end{lemma}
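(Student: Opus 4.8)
The plan is to exhibit a well-founded measure on terms that strictly decreases under every reduction rule \emph{except} $(\rdeux)$, and then to argue by contradiction. Concretely, I would take $\size{\s}$ to be the number of nodes in the syntax tree of $\s$, counting abstractions, applications, pairs, explicit matchings, the constant $\fail$, and every variable occurrence, \emph{including} the pair constructors and variables occurring inside patterns. The central claim is:

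\textbf{Claim.} If $\s \Rew{} \uu$ by any rule other than $(\rdeux)$, then $\size{\s} > \size{\uu}$.

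Since $\size{\cdot}$ is compositional (for any context $\ccontext$ one has $\size{\appctx{\ccontext}{\s}} - \size{\appctx{\ccontext}{\uu}} = \size{\s} - \size{\uu}$), it suffices to compare the left- and right-hand sides of each rule schema, the contextual closure then preserving the strict inequality. Most cases are immediate: rules $(\rquatre)$, $(\rhuit)$, $(\rcinq)$, $(\rneuf)$ (using $\slist \neq \Box$), $(\rsix)$ and $(\rsept)$ all rewrite a term of size strictly greater than $1$ to the single node $\fail$, hence decrease the size. The two cases requiring care are $(\run)$ and $(\rtrois)$, because both \emph{create} a new explicit matching. For $(\run)$, passing from $\appctx{\slist}{\l \p.\s}\uu$ to $\appctx{\slist}{\s[\p/\uu]}$ deletes one application node and one abstraction node while adding one matching node, a net change of $-1$. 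For $(\rtrois)$, passing from $\s[\pair{\p_1}{\p_2}/\appctx{\slist}{\pair{\uu_1}{\uu_2}}]$ to $\appctx{\slist}{\s[\p_1/\uu_1][\p_2/\uu_2]}$ deletes the pair-pattern node and the pair node and turns one matching into two, again a net change of $-1$. In every rule the metavariables $\s,\uu,\p,\slist,\ldots$ occur with the same multiplicity on both sides, so no hidden growth can occur; this establishes the Claim.

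The Claim shows that the restriction of $\Rew{}$ to the rules other than $(\rdeux)$ is strongly normalizing, since it embeds into $(\mathbb{N},>)$ through $\size{\cdot}$. Now suppose, for contradiction, that some infinite $\Rew{}$-reduction sequence $\s_0 \Rew{} \s_1 \Rew{} \cdots$ uses $(\rdeux)$ only finitely often. Then there is an index $k$ after which no $(\rdeux)$-step occurs, so $\s_k \Rew{} \s_{k+1} \Rew{} \cdots$ is an infinite reduction using only non-$(\rdeux)$ rules. By the Claim this produces a strictly decreasing infinite chain $\size{\s_k} > \size{\s_{k+1}} > \cdots$ of natural numbers, which is impossible. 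Hence every infinite reduction sequence must contain infinitely many $(\rdeux)$-steps.

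The only genuinely delicate point is the node-counting for $(\run)$ and $(\rtrois)$: one must check that the matchings these rules introduce are outweighed by the constructors they consume. This is precisely why the measure must count \emph{all} nodes, in particular the pair and pattern nodes, rather than a coarser quantity such as abstraction depth or the number of matchings alone, either of which would fail to decrease.
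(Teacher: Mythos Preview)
Your argument is correct and follows the same overall strategy as the paper: prove that the subsystem without $(\rdeux)$ is strongly normalizing by exhibiting a strictly decreasing measure, then conclude by the obvious tail-of-sequence argument. The only difference is the choice of measure: the paper uses a lexicographic pair (number of applications, total size of patterns), whereas you use the total syntactic size. Your single-number measure is arguably simpler and your node-count analysis of $(\run)$ and $(\rtrois)$ is accurate; the paper's pair has the minor expository advantage that each component is tailored to a specific group of rules, making it transparent which structural feature each rule consumes.
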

\begin{proof} 
  It is sufficient to show that the reduction system without the rule
  $(\rdeux)$, that we call $\altp$, is terminating.  Indeed, remark
  that $\s \Rew{\altp} \s'$ implies $\nu(\s) >\nu(\s')$, where
  $\nu(\s)$ is a pair whose first component is the number of
  applications of the form $\appctx{\slist}{\uu_1} \uu_2$ in $\s$
  (rules $\run, \rsix, \rhuit$), and whose second component is the sum
  of the sizes of patterns in $\s$ (rules
  $\rtrois, \rquatre, \rcinq, \rsept, \rneuf$).  These
  pairs are ordered lexicographically.
\end{proof}

\subsection{The Confluence Proof}

\newcommand{\szero}{s_0}
\newcommand{\suno}{s_1}
\newcommand{\sdos}{s_2}
\newcommand{\stres}{s_{3}}
\newcommand{\scuatro}{s_4}
\newcommand{\scinco}{s_{5}}
\newcommand{\sseis}{s_6}
\newcommand{\ssiete}{s_7}
\newcommand{\socho}{s_8}
\newcommand{\snueve}{s_9}
\newcommand{\sregle}[1]{s_{#1}}

\medskip

In order to show confluence of our reduction system $\Rew{}$ we first
simplify the system by erasing just one rule $\szero$ in such a way
that confluence of $\Rew{}$ holds if confluence of $\Rew{}$ 
deprived from $\Rew{\szero}$ holds. This last statement is proved by applying the
decreasing diagram technique~\cite{vOdecreasing}. We just change the name/order
of the rules to make easier the application of the decreasing technique.

\[
   \begin{array}{lllllllll}
(\szero) &    \s[\x/\uu]            & \mapsto & \s\isubs{\x}{\uu}\\
(\suno) &    \appctx{\slist}{\fail}  & \mapsto & \fail\ & (\slist \mbox{ non-empty}) \\ 
(\sdos) &  \fail\  \s & \mapsto & \fail  \\
(\stres) &    \lambda \p. \fail   & \mapsto & \fail  \\
(\scuatro) &   \appctx{\slist}{ \pair{\s}{\uu}} \vv  & \mapsto & \fail  \\  
(\scinco) &    \s[\pair{\p_1}{\p_2}/\fail ]  & \mapsto & \fail \\
(\sseis) &    \s[\pair{\p_1}{\p_2}/ \appctx{\slist}{\lambda \q. \uu}]  & \mapsto & \fail  \\
(\ssiete) &    \s[\pair{\p_1}{\p_2}/\appctx{\slist}{\pair{\uu_1}{\uu_2}}]  & \mapsto &  \appctx{\slist}{\s[\p_1/\uu_1][\p_2/\uu_2]}  \\
(\socho) &  \appctx{\slist}{\lambda \p. \s} \uu & \mapsto &  \appctx{\slist}{\s[\p/\uu]} \\
    \end{array} 
\]

We define $\altp := \Rew{} \!  \setminus \szero$. 
We write $\s \Reweq{\szero} \s'$ iff 
$\s \Rew{\szero} \s'$ or $\s = \s'$.

\begin{lemma}
\label{l:concrete-local-commmutation}
For  all $\s_0,  \s_1, \s_2$,  if $\s_0  \Rew{\szero} \s_1$  and $\s_0
\Rew{\altp} \s_2$,  then there  exists $\s_3$ s.t.  $\s_1 \Rewn{\altp}
\s_3$ and $\s_2 \Reweq{\szero} \s_3$.
\end{lemma}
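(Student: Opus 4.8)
The plan is to argue by case analysis on the relative positions, inside $\s_0$, of the two contracted redexes. I write the $\szero$-step as the contraction of a matching $\w[\x/\uu] \mapsto \w\isubs{\x}{\uu}$, and let the $\altp$-step contract a redex of one of the rules $\suno,\ldots,\socho$. The asymmetric shape of the diagram --- zero or more $\altp$-steps allowed from $\s_1$, but at most one $\szero$-step from $\s_2$ --- is precisely what is needed to absorb the duplication and the erasure of the $\altp$-redex that the substitution may cause. I would start with the case where the two redexes are disjoint: they do not interfere, and the diagram closes with a single step on each side.

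Next I would treat the case where the $\altp$-redex occurs inside the $\szero$-redex $\w[\x/\uu]$. If it sits in the body $\w$, the substitution $\isubs{\x}{\uu}$ leaves its shape untouched --- the rules $\suno,\ldots,\socho$ only inspect abstractions, pairs, matchings and the constant $\fail$, none of which substitution can erase --- so the square closes with one step on each side, by a routine commutation of $\altp$ with meta-substitution. If instead it sits in the argument $\uu$, say $\uu \Rew{\altp} \uu'$, then, writing $\s_0 = \appctx{\ccontext}{\w[\x/\uu]}$, the term $\s_1 = \appctx{\ccontext}{\w\isubs{\x}{\uu}}$ holds one residual of the $\altp$-redex for each free occurrence of $\x$ in $\w$; contracting all $k \geq 0$ of them yields $\appctx{\ccontext}{\w\isubs{\x}{\uu'}}$, which is exactly the $\szero$-reduct of $\s_2 = \appctx{\ccontext}{\w[\x/\uu']}$. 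This is the case that genuinely needs $\Rewn{\altp}$ (and, when $k=0$, the reflexive option $\s_1 = \s_3$). Symmetrically, when the $\szero$-redex lies inside the $\altp$-redex and the applicable rule is one of the erasing rules $\suno$--$\sseis$ (those whose right-hand side is $\fail$), the whole left-hand side collapses, so $\s_2$ is already $\fail$ in its context; I then close with zero $\szero$-steps (the reflexive option of $\Reweq{\szero}$) and a single $\altp$-step from $\s_1$, whose $\szero$-reduct is still an instance of the same erasing rule.

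The remaining, and most delicate, situations are the genuine overlaps with the two non-erasing rules $\ssiete$ and $\socho$, in which the $\szero$-redex is a variable-matching $[\x/\uu]$ belonging to the list context $\slist$ of the $\altp$-redex. (The analogous overlaps for the erasing rules, including the root critical pair between $\szero$ and $\suno$ on a term such as $\appctx{\slist}{\fail}$ whose outermost matching is a variable-matching, are harmless: both sides reach $\fail$.) Here the $\szero$-step simultaneously performs the substitution and removes one matching from $\slist$, so I must check that the $\altp$-redex re-forms correctly over the shortened list context and that the substitution propagates only inside the scope of $[\x/\uu]$. The latter holds because, by $\alpha$-conversion and the linearity conventions, $\x$ does not occur free in the components of the redex lying outside that scope --- the applied argument in the case of $\socho$, and the body of the outer matching in the case of $\ssiete$ --- so these components are untouched and each such overlap closes with exactly one step on each side. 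I expect this interplay between substitution and the list-context structure to be the main obstacle; once it is settled, assembling all the cases yields the stated diagram.
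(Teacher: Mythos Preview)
Your proposal is correct and follows essentially the same approach as the paper's proof: a case analysis on the relative positions of the $\szero$-redex and the $\altp$-redex (the paper phrases this as ``induction on $\s_0 \Rew{\szero} \s_1$'' and only displays three diagrams --- the body case, the argument case, and the root overlap with $\suno$). Your account is in fact more thorough than the paper's sketch, in particular your explicit treatment of the critical overlaps with $\ssiete$ and $\socho$ through the side conditions $\bv{\slist}\cap\fv{\uu}=\es$ and $\bv{\slist}\cap\fv{\s}=\es$; one very minor imprecision is that in the $\suno$ overlap the $\altp$-step from $\s_1$ may be \emph{zero} rather than one (when the single matching of $\slist$ is the variable-matching being fired), but this is covered by $\Rewn{\altp}$.
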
 

\begin{proof}
By induction on $\s_0 \Rew{\szero} \s_1$, we only show the most significant cases:
\[ \begin{array}{clc@{\hspace{2cm}}clc}
    \s[\x/\uu]  & \Rew{\szero} & \s\isubs{\x}{\uu} &
    \s[\x/\uu]  & \Rew{\szero} & \s\isubs{\x}{\uu} \\
    \mbox{}_{\altp}\downarrow{} & & \mbox{}_{\altp}\downarrow{} &
    \mbox{}_{\altp}\downarrow{} & & \mbox{}_{\altp}\downarrow_* \\
    \s' [\x/\uu]  & \Rew{\szero} & \s'\isubs{\x}{\uu} &
  \s [\x/\uu']  & \Rew{\szero} & \s\isubs{\x}{\uu'} \\ \\
 \end{array}\]

\[\begin{array}{cccc}
\appctx{\slist_2}{\appctx{\slist_1}{\fail} [\x/\uu]}   & \Rew{\szero} &  \appctx{\slist_2}{\appctx{\slist_1}{\fail}  \isubs{\x}{\uu}}   \\
 \mbox{}_{\suno}\downarrow & &  \mbox{}_{\suno}\downarrow_=   \\
     \fail  & = & \fail
  \end{array}\]
\par \vspace{-\baselineskip}
\qedhere
\end{proof}

The  following lemma can be found for example in~\cite{barendregt84nh}.

\begin{lemma}
\label{l:local-commutation}
Let $\Rew{\R_1}$ and $\Rew{\R_2}$ be two reduction relations. Suppose 
for any $\s_0, \s_1, \s_2$ such that $\s_0 \Rew{\R_2} \s_1$ and $\s_0 \Rew{\R_1} \s_2$, 
there exists $\s_3$ verifying $\s_1 \Rewn{\R_1} \s_3$ and $\s_2 \Reweq{\R_2} \s_3$.
Then $\Rew{\R_2}$ and $\Rew{\R_1}$ {\bf commute}, \ie\ 
$\forall \s_0, \s_1, \s_2$ if $\s_0 \Rewn{\R_2} \s_1$ and $\s_0 \Rewn{\R_1} \s_2$, 
$\exists \s_3$ s.t. $\s_1 \Rewn{\R_1} \s_3$ and $\s_2 \Rewn{\R_2} \s_3$.
\end{lemma}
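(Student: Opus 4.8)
The plan is to prove Lemma~\ref{l:local-commutation}, the abstract commutation statement, by a standard two-layer induction that upgrades the local ``diamond-like'' hypothesis to full commutation of the transitive closures. The hypothesis gives us, for a single $\R_1$-step against a single $\R_2$-step, a closing term $\s_3$ with $\s_1 \Rewn{\R_1} \s_3$ (possibly many $\R_1$-steps) and $\s_2 \Reweq{\R_2} \s_3$ (at most one $\R_2$-step). The asymmetry here — many steps on one side, at most one on the other — is exactly what makes the usual ``strip lemma'' argument go through, so I would first isolate this pattern and then iterate it in two stages.

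\begin{proof}
We prove the statement in two steps, following the classical argument (see~\cite{barendregt84nh}).

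\emph{Step 1 (strip against a single $\R_1$-step).} First I would show: if $\s_0 \Rewn{\R_2} \s_1$ and $\s_0 \Rew{\R_1} \s_2$, then there exists $\s_3$ with $\s_1 \Rewn{\R_1} \s_3$ and $\s_2 \Rewn{\R_2} \s_3$. This is proved by induction on the length of the reduction $\s_0 \Rewn{\R_2} \s_1$. The base case (length $0$) is immediate, taking $\s_3 := \s_2$. For the inductive case, write $\s_0 \Rew{\R_2} \s_0' \Rewn{\R_2} \s_1$. Applying the hypothesis to the peak $\s_0' \mbox{}_{\R_2}\!\!\leftarrow \s_0 \Rew{\R_1} \s_2$ yields a term $\s'$ with $\s_0' \Rewn{\R_1} \s'$ and $\s_2 \Reweq{\R_2} \s'$. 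The induction hypothesis applied to $\s_0' \Rewn{\R_2} \s_1$ and $\s_0' \Rewn{\R_1} \s'$ — after first iterating the hypothesis once more along the possibly multiple $\R_1$-steps, or equivalently by a parallel inner induction on the number of those $\R_1$-steps — produces a common reduct $\s_3$ closing the diagram, and one concatenates $\s_2 \Reweq{\R_2} \s' \Rewn{\R_2} \s_3$.

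\emph{Step 2 (iterate over the $\R_1$-reduction).} Given the result of Step~1, I would then prove the full statement by induction on the length of $\s_0 \Rewn{\R_1} \s_2$. The base case is again trivial. For the inductive case, write $\s_0 \Rew{\R_1} \s_0' \Rewn{\R_1} \s_2$; apply Step~1 to the pair $\s_0 \Rewn{\R_2} \s_1$, $\s_0 \Rew{\R_1} \s_0'$ to obtain $\s_1'$ with $\s_1 \Rewn{\R_1} \s_1'$ and $\s_0' \Rewn{\R_2} \s_1'$; then apply the induction hypothesis to $\s_0' \Rewn{\R_2} \s_1'$ and $\s_0' \Rewn{\R_1} \s_2$ to close the diagram with some $\s_3$, and finally prepend $\s_1 \Rewn{\R_1} \s_1'$ on the $\R_1$-side.
\end{proof}

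The only delicate point is bookkeeping the reduction lengths so that both nested inductions are well-founded; the essential mismatch to handle carefully is that a single $\R_1$-step may be answered by several $\R_1$-steps in the hypothesis, which is why Step~1 needs an auxiliary induction on the number of $\R_1$-steps appearing on the left branch (equivalently, one proves a symmetric ``many against many'' strip). Once this is threaded correctly, the argument is entirely routine, and no properties specific to the $\Lp$-rules are used — the lemma is purely abstract rewriting, as signalled by its attribution to~\cite{barendregt84nh}.
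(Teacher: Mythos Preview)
The paper does not prove this lemma at all --- it simply states that it ``can be found for example in~\cite{barendregt84nh}'' and moves on. Your proposal supplies the standard strip-lemma argument that Barendregt gives, and it is essentially correct.

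One remark on presentation: your Step~1 (many $\R_2$-steps against a single $\R_1$-step) is the awkward direction to prove first. When you apply the local hypothesis you obtain $\s_0' \Rewn{\R_1} \s'$ with possibly many $\R_1$-steps, and closing $\s_0' \Rewn{\R_2} \s_1$ against $\s_0' \Rewn{\R_1} \s'$ then needs the inner induction you mention --- but each use of the outer induction hypothesis may produce an $\R_2$-sequence of unbounded length on the far side, so that inner induction only goes through if you also track (and bound) the $\R_2$-length in the conclusion. You hint at this in your closing paragraph, but do not spell it out. The cleaner route is to swap the roles: first prove ``one $\R_2$-step against many $\R_1$-steps'' by induction on the $\R_1$-length (here the $\Reweq{\R_2}$ bound in the hypothesis keeps the $\R_2$ side at most one step throughout, so no nesting is needed), and then iterate over the $\R_2$-length. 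This is the order Barendregt uses, and it avoids the bookkeeping you flag as ``delicate''.
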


By Lemma~\ref{l:local-commutation} and~\ref{l:concrete-local-commmutation} we obtain:

\begin{corollary}
\label{l:commutation}
The reduction relations $\Rew{\altp}$ and $\Rew{\szero}$ commute. 
\end{corollary}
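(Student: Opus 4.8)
The plan is to obtain the corollary as a direct instantiation of the abstract commutation lemma (Lemma~\ref{l:local-commutation}), feeding it the concrete local diagram already established in Lemma~\ref{l:concrete-local-commmutation}. Concretely, I would apply Lemma~\ref{l:local-commutation} with $\R_1 := \altp$ and $\R_2 := \szero$. The hypothesis demanded by that lemma --- namely, that whenever $\s_0 \Rew{\szero} \s_1$ and $\s_0 \Rew{\altp} \s_2$ there exists $\s_3$ with $\s_1 \Rewn{\altp} \s_3$ and $\s_2 \Reweq{\szero} \s_3$ --- is verbatim the conclusion of Lemma~\ref{l:concrete-local-commmutation}. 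Thus the premise is discharged immediately, and the conclusion of Lemma~\ref{l:local-commutation} yields precisely that $\Rew{\altp}$ and $\Rew{\szero}$ commute, which is the statement of the corollary (commutation being symmetric in its two arguments).

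The only point requiring attention is the orientation of the instantiation. The local diagram of Lemma~\ref{l:concrete-local-commmutation} is asymmetric: the $\szero$-step may be absorbed (the $\Reweq{\szero}$ leg permits equality), whereas the $\altp$-step is closed only under $\Rewn{\altp}$. In Lemma~\ref{l:local-commutation} the relation $\R_2$ is the one occurring with the $\Reweq{}$ option and $\R_1$ the one occurring under $\Rewn{}$, so I must assign $\R_2 := \szero$ and $\R_1 := \altp$, and not the reverse, for the hypotheses to match up. Once the roles are aligned there is nothing further to verify: all the combinatorial content lives in Lemma~\ref{l:concrete-local-commmutation}, while Lemma~\ref{l:local-commutation} (a standard strip-lemma--style result, \cf~\cite{barendregt84nh}) merely promotes the local diagram to full commutation of the reflexive--transitive closures. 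I do not anticipate any genuine obstacle here; this step is pure bookkeeping, the real work having been done in proving local commutation.
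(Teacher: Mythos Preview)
Your proposal is correct and matches the paper's own proof exactly: the paper simply states that the corollary follows from Lemma~\ref{l:local-commutation} and Lemma~\ref{l:concrete-local-commmutation}, which is precisely your instantiation with $\R_1 := \altp$ and $\R_2 := \szero$. Your remark on the orientation of the instantiation is accurate and is the only point of care needed.
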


\begin{lemma}
\label{l:alt-confluent}
The reduction relation $\Rew{\altp}$ is confluent.
\end{lemma}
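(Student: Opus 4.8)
The plan is to exploit that $\altp$ is \emph{terminating}, so that confluence follows from local confluence by Newman's Lemma, and to establish local confluence by a critical-pair analysis. Termination is already available: it is precisely the statement proved inside Lemma~\ref{lem:inf}, through the lexicographic measure $\nu(\cdot)$, which strictly decreases along every $\Rew{\altp}$-step. Hence it suffices to show that $\altp$ is locally confluent, i.e.\ that whenever $\s_0 \Rew{\altp} \s_1$ and $\s_0 \Rew{\altp} \s_2$ there is $\s_3$ with $\s_1 \Rewn{\altp} \s_3$ and $\s_2 \Rewn{\altp} \s_3$; Newman's Lemma then yields confluence.

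A useful preliminary observation is that none of the rules $\suno$--$\socho$ performs meta-level substitution: the only rule of $\Rew{}$ doing so is $\szero=(\rdeux)$, which is exactly the one removed in $\altp$. Rules $\ssiete$ and $\socho$ merely \emph{restructure} a term, introducing new explicit matchings $[\cdot/\cdot]$, while the remaining rules only propagate $\fail$. Consequently the overlaps to be inspected are purely structural, so the analysis avoids the usual higher-order substitution bookkeeping. Moreover all rules are left-linear, so when $\s_1$ and $\s_2$ arise from contracting redexes at disjoint positions, or at a position lying inside a metavariable of the other rule (a variable overlap), joinability is immediate. The work therefore concentrates on the genuine critical pairs.

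I would first dispatch the overlaps that yield $\fail$. In these, $\s_0$ contains a $\fail$ (or a subterm reducing to it) read in two ways at once: as the distinguished $\fail$ of one of $\sdos,\stres,\scuatro,\scinco,\sseis$, and simultaneously as a term in the hole of a list context targeted by $\suno$, or as the abstraction inside a $\socho$-redex, and so on. In every such case both contracta reduce, by further applications of $\suno$ (collapsing $\appctx{\slist}{\fail}$ to $\fail$ whenever $\slist\neq\Box$) together with $\sdos$ and $\scinco$, to the common reduct $\fail$. The side condition $\slist\neq\Box$ on $\suno$ is handled by a harmless case split on whether the residual list context is empty (in which case the term is already $\fail$).

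The remaining, and most delicate, critical pairs are those between the two \emph{constructive} rules $\ssiete$ and $\socho$ and the redexes that may be buried inside the list context $\slist$ appearing in their left-hand sides: a step performed inside $\slist$ overlaps with the distance rule matching $\slist$ as a whole. The key fact is that these operations commute: contracting the distance redex leaves the inner $\slist$-redex available, and conversely, so the two orders reach a common term (for two overlapping $\ssiete$-steps, for instance, both orders reach $\appctx{\slist'}{\s[\p_1/\uu_1][\p_2/\uu_2][\q_1/\w_1][\q_2/\w_2]}$). I expect this commutation of a restructuring step with an independent inner step to be the main obstacle, since it requires carefully tracking how the list context is split and recombined across the distance rule; nevertheless every such diagram closes, which establishes local confluence and hence, by Newman's Lemma, confluence of $\Rew{\altp}$.
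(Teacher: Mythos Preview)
Your proposal is correct but takes a genuinely different route from the paper. The paper proves confluence of $\Rew{\altp}$ via van Oostrom's \emph{decreasing diagram} technique: it fixes the order $s_1 < \cdots < s_8$ on the rule labels and checks, case by case, that every local peak $\s_1\ \LRew{l}\ \s_0 \Rew{m} \s_2$ closes by a sequence whose labels are decreasing in the required sense. You instead observe that termination of $\altp$ is already established inside the proof of Lemma~\ref{lem:inf} (through the lexicographic measure $\nu$), so Newman's Lemma reduces confluence to local confluence, which you then obtain by a critical-pair analysis.

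The underlying case analysis is essentially the same in both proofs: the same overlaps (fail-propagation rules interacting with each other, and the distance rules $\ssiete,\socho$ interacting with redexes buried in the list context $\slist$) must be closed, and the closing diagrams coincide. The difference is purely in the meta-theorem invoked. Your approach is more elementary, since it avoids the decreasing-diagram bookkeeping; the price is a dependency on termination, but since Lemma~\ref{lem:inf} already supplies that for independent reasons and precedes the confluence subsection, this is a net simplification. The paper's use of decreasing diagrams makes the confluence argument self-contained (it would survive even if $\altp$ were non-terminating), but in the present setting your route is perfectly sound and arguably cleaner.
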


\begin{proof} 
 We use the decreasing diagram technique~\cite{vOdecreasing}.
For that, we first order the reduction rules of the system $\altp$
by letting $\sregle{i} < \sregle{j}$ iff $i < j$. We write
$\s \Rew{i} \uu$ if $\s \Rew{\sregle{i}} \uu$. Given a set $\I$ of natural numbers we
write  $\s \Rewn{\borne \I} \uu$ 
if every $\Rew{j}$-reduction step  in the sequence $\s \Rewn{\borne  \I} \uu$
verifies $j < \I$, \ie\ if for every $\Rew{j}$-reduction step 
in the sequence  $\exists i \in \I$ such that $j < i$. 
The system $\Rew{\altp}$ is
said to be decreasing iff for any $\s_0, \s_1, \s_2$ 
such that $\s_0 \Rew{l} \s_1$  and $\s_0 \Rew{m} \s_2$,
there exists $\s_3$ such that $\s_1 \Rewn{\borne  \set{l}} \Reweq{m} \Rewn{\borne  \set{l,m}} \s_3 $
and $\s_2 \Rewn{\borne  \set{m}} \Reweq{l} \Rewn{\borne  \set{l,m}} \s_3 $,
where $\s \Reweq{l} \s'$ means $\s \Rew{l} \s'$ or $\s = \s'$.

We now  show that the system $\Rew{\altp}$ is decreasing.
As a matter of notation, we write for example  $\s \Rew{3=, 5}
\s'$ to denote a rewriting sequence
of length 2 or 1, composed respectively   by a $\Rew{3}$-step followed by a  $\Rew{5}$-step or
  by a single  $\Rew{5}$-step.
\begin{itemize}
\item We consider the cases $\s_0  \Rew{1} \s_1$ and $\s_0 \Rew{i} \s_2\ (i = 1 \ldots 8)$. 
We only show the interesting ones. 

\[  \begin{array}{clccc}
\appctx{\slist_2}{\appctx{\slist_1}{\fail}[\pair{\p_1}{\p_2}/\fail]}   & \Rew{1} & \fail  \\
    \mbox{}_{5}\downarrow{} & & =  \\
   \appctx{\slist_2}{\fail}  &  \Reweq{1}  & \fail \\  \\
    \end{array} \]
  
\[     \begin{array}{clccc}
 \appctx{\slist_2}{\appctx{\slist_1}{\fail}[\pair{\p_1}{\p_2}/\appctx{\slist}{\pair{\uu_1}{\uu_2}}]}  & \Rew{1} & \fail \\
    \mbox{}_{7}\downarrow{} & & = \\
   \appctx{\slist_2}{ \appctx{\slist}{\appctx{\slist_1}{\fail}[\p_1/\uu_1][\p_2/\uu_2]}}  & \Reweq{1} & \fail \\  
     \end{array} \] 

All of them are decreasing diagrams as required.

\item The cases $\s_0  \Rew{2} \s_1$ and $\s_0 \Rew{i} \s_2\ (i = 2 \ldots 8)$
are straightforward.

\item The interesting case $\s_0 \Rew{3} \s_1$ and $\s_0
  \Rew{i} \s_2\ (i = 3 \ldots 8)$ is the following.
\[
    \begin{array}{clc}
   \appctx{\slist}{\lambda \p. \fail} \uu & \Rew{3} &\appctx{\slist}{ \fail} \uu  \\
    \mbox{}_{8}\downarrow{} & & \mbox{}_{1=,2 }\downarrow{} \\
    \appctx{\slist}{\fail[\p/ \uu]} & \Rew{1} & \fail \\
     \end{array}
\]
\item The interesting cases $\s_0 \Rew{4} \s_1$ and $\s_0
  \Rew{i} \s_2\ (i = 4 \ldots 8)$ are straightforward. 

\item The interesting cases $\s_0 \Rew{5} \s_1$ and $\s_0
  \Rew{i} \s_2\ (i = 5 \ldots 8)$ are the following.
\[
    \begin{array}{clc}
      \s[\pair{\p_1}{\p_2}/\appctx{\slist_1}{\appctx{\slist_2}{\pair{\uu_1}{\uu_2}}[\pair{\q_1}{\q_2}/\fail]}] & \Rew{5} &
      \s[\pair{\p_1}{\p_2}/\appctx{\slist_1}{\fail}] \\
    \mbox{}_{7}\downarrow{} & & \mbox{}_{1^=,5}\downarrow{} \\
   \appctx{\slist_1}{\appctx{\slist_2}{\s[\p_1/\uu_1][\p_2/\uu_2]}[\pair{\q_1}{\q_2}/\fail]} & \Rew{5} &  \appctx{\slist_1}{\fail}\\ \\
     \end{array}
\]

\[
    \begin{array}{clc}
      \appctx{\slist_1}{ \appctx{\slist_2}{\lambda \p. \s}[\pair{\q_1}{\q_2}/\fail]} \uu  & \Rew{5} & \appctx{\slist_1}{ \fail} \uu\\
    \mbox{}_{8}\downarrow{} & & \mbox{}_{1^=,2}\downarrow{} \\
   \appctx{\slist_1}{\appctx{\slist_2}{\s[\p/\uu]}[\pair{\q_1}{\q_2}/\fail]} & \Rew{5,1=} &  \fail \\
     \end{array}
\]
\item The cases for  $\s_0 \Rew{6} \s_1$ and $\s_0
  \Rew{\sregle{i}} \s_2\ (i = 6 \ldots 8)$ are similar. 

\item The cases for  $\s_0 \Rew{7} \s_1$ and $\s_0
  \Rew{\sregle{i}} \s_2\ (i = 7 \ldots 8)$ have the following
  reduction scheme:
  
\[
    \begin{array}{clc@{\hspace{1.5cm}}clc}
      \cdot  & \Rew{7} & \cdot & \cdot  & \Rew{7} & \cdot \\
      \mbox{}_{7}\downarrow{} & & \mbox{}_{7}\downarrow{}
      & \mbox{}_{8}\downarrow{} & & \mbox{}_{8}\downarrow{} \\
  \cdot  & \Rew{7} & \cdot & \cdot  & \Rew{7} & \cdot \\
     \end{array}
\]

  \item There is no other case.  \qedhere

\end{itemize}

\end{proof}

\begin{lemma}[Hindley-Rosen]
\label{l:confluence-from-commutation}
Let  $\Rew{\R_1}$ and $\Rew{\R_2}$ be two confluent reduction relations 
which commute. Then $\Rew{\R_1 \cup \R_2}$ is confluent.  
\end{lemma}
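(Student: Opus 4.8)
The plan is to prove this classical result (Hindley--Rosen) by a diagram-tiling argument. Writing the reflexive--transitive closures as $\Rewn{\R_1}$ and $\Rewn{\R_2}$, I first record three closure principles that are immediately available: an $\Rewn{\R_1}$-divergence closes back with $\Rewn{\R_1}$ (confluence of $\Rew{\R_1}$); an $\Rewn{\R_2}$-divergence closes back with $\Rewn{\R_2}$ (confluence of $\Rew{\R_2}$); and a mixed divergence, one side via $\Rewn{\R_1}$ and the other via $\Rewn{\R_2}$, closes by completing each side with the opposite closure (this is exactly the commutation hypothesis, in the form of Lemma~\ref{l:local-commutation}). The first observation is then that $\Rewn{\R_1 \cup \R_2}$ coincides with the iterated composition of $\Rewn{\R_1}$ and $\Rewn{\R_2}$: since both closures are reflexive, any reduction in the union can be reorganized into finitely many \emph{phases}, each consisting of an $\Rewn{\R_1}$-reduction followed by an $\Rewn{\R_2}$-reduction.

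Next I would reduce global confluence to these three local tiles. Given co-initial union-reductions $\s_0 \Rewn{\R_1 \cup \R_2} \s_1$ and $\s_0 \Rewn{\R_1 \cup \R_2} \s_2$, I decompose each into its phases and lay them out as the two borders of a rectangular grid. I then fill the grid cell by cell: each cell is a divergence of one of the three kinds above, so it is closed by the corresponding principle, and the two newly produced sides are again pure $\Rewn{\R_1}$- or $\Rewn{\R_2}$-reductions, ready to serve as borders of the adjacent cells. Pasting all the closed cells together produces a common reduct $\s_3$ with $\s_1 \Rewn{\R_1 \cup \R_2} \s_3$ and $\s_2 \Rewn{\R_1 \cup \R_2} \s_3$, which is precisely confluence of $\Rew{\R_1 \cup \R_2}$.

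To turn this picture into a genuine proof I would package the tiling as nested inductions. First a \emph{strip lemma}: if one side is a single pure reduction $\s_0 \Rewn{\R_i} \s_1$ and the other is an arbitrary union-reduction $\s_0 \Rewn{\R_1 \cup \R_2} \s_2$, then they can be joined, by induction on the number of phases of the latter, pushing the single $\Rewn{\R_i}$ across one phase at a time using confluence and commutation. The full statement then follows by a second induction on the number of phases of $\s_0 \Rewn{\R_1 \cup \R_2} \s_1$, applying the strip lemma once per phase.

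I expect the only real difficulty to be bookkeeping, not mathematics: one must track precisely which closure ($\Rewn{\R_1}$ or $\Rewn{\R_2}$) labels each freshly created border of a tile, so that the premises of the principle invoked at the next tile are actually met and the composed joining reductions are genuine union-reductions; and one must check that the nested induction is well-founded, with the total phase count strictly decreasing at each step. Beyond the three closure principles there is no creative content; all the care lies in orchestrating the grid.
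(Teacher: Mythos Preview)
Your proposal is correct and is precisely the standard Hindley--Rosen argument. Note, however, that the paper does not actually prove this lemma: it is stated with the attribution ``Hindley--Rosen'' and invoked as a classical black box, with no proof environment following it. So there is nothing to compare your approach against; you have simply supplied the textbook proof that the paper chose to omit.
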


\noindent {\bf Lemma~\ref{l:confluence}.}
The reduction relation $\Rew{}$ is confluent.

\begin{proof}
Since $\Rew{\szero}$ is trivially confluent,
$\Rew{\altp}$ is confluent by Lemma~\ref{l:alt-confluent}
and $\Rew{\altp}$ and $\Rew{\szero}$ commute by Corollary~\ref{l:commutation},
then $\Rew{\altp} \cup \Rew{\szero} = \Rew{}$ turns out to be confluent 
by Lemma~\ref{l:confluence-from-commutation}, 
which concludes the proof. 
\end{proof}


\section{The Type System $\Pu$}
\label{s:type-system}

In this section we present a type system for the $\Lp$-calculus, and
we show that it characterizes terms having \canonical, \ie\ that a
  term $\s$ is typable if and only if $\s$ has canonical form.  \deft{The
  set ${\mathcal T}$ of types} is generated by the following grammar:
\[ \hfill \begin{array}{llll}
     \mbox{(Types)} &  \sig, \tau & ::= & \alpha  \mid \PT \mid \A \rew \sig \\
     \mbox{(Product Types)} & \PT & :: = & \prodt{\A}{\D} \\
     \mbox{(Multiset Types)} & \A, \D & :: = & \mult{\sig_k}_{\kK} \\
   \end{array} \hfill     \]

where $\alpha$ ranges over a countable set of constants, $K$
is a (possibly empty) finite set of indices, and a multiset is an
unordered list of (not necessarily different) elements. The arrow type
constructor $\arrow$ is right associative.

\deft{Typing environments}, written $\Gamma,
\Delta, \Lambda$, are functions from variables to multiset types, assigning the
empty multiset to almost all the variables. The \deft{domain} of $\Gamma$,
written $\dom{\Gam}$, is the set of variables whose image is different
from $\emul$.  We may write $\Gam \# \Del$ iff 
$\dom{\Gam} \cap \dom{\Del} = \es$. 

\begin{notation}
Sometimes we will use symbols $\mu,\nu$ to range over the union of
types and multiset types. We abbreviate by the constant $\oprod$ the
product type $\prodt{\emul}{\emul}$.
We write $\munion$ to denote multiset union and $\sqsubseteq$ multiset
inclusion; these operations take multiplicities into account.
Moreover, abusing the notation we will use $\in$ to denote both set
and multiset membership.

Given typing environments $\{\Gamma_i\}_{\iI}$, we write
$+_{\iI}\Gamma_i$ for the environment which maps $\x$ to
$\munion_{\iI}\Gamma_i(\x)$. If $I=\es$, the resulting environment is the one having an empty
domain. Note that $\Gamma +\Delta$ and $\Gamma +_{\iI} \Delta_i$ are
just particular cases of the previous general definition.  When $\Gam
\# \Del$ we may write $\Gamma; \Delta$ instead of $\Gamma +\Delta$.
The notation  $\Gamma\sm \Delta$  is used  for the environment 
whose domain is $\dom{\Gamma} \sm \dom{\Delta}$, defined 
as expected;  ${\x}_1\!:\!{\A}_1;\ldots ; {\x}_n\!:\!{\A}_n$
is the environment assigning ${\A}_i$ to ${\x}_i$, for $1\leq i\leq
n$, and $\emul$ to any other variable; 
$\Gam|_{\p}$ denotes the environment such that 
$\Gam|_{\p}(\x)=\Gam(\x)$, if $\x \in \fv{\p}$,
$\emul$ otherwise. We also assume that $\Gam; \x:\emul$ is identical to $\Gam$.
Finally, $\Gam \sqsubseteq \Del$ means that
  $\x \in \dom{\Gam}$ implies
  $\x \in \dom{\Del}$
  and $\Gam(\x) \sqsubseteq \Del(\x)$.
\end{notation}

The \deft{type assignment system} $\Pu$
(\cf\ Figure~\ref{fig:system-pattern}) is a set of typing rules
assigning both types and multiset types of ${\mathcal T}$ to terms of
$\Lp$.  It uses an auxiliary system assigning multiset types to
patterns.  We write $\Pi \dem \Gam \der \s:\sig$ (resp.  $\Pi \dem
\Gam \der \s:\A$ and $\Pi \dem \Gam \pder \p:\A$) to denote a
\deft{typing derivation} ending in the sequent $\Gam \der \s:\sig$
(resp. $\Gam \der \s:\A$ and $\Gam \pder \p:\A$), in which case $\s$
(resp.  $\p$) is called the {\bf subject} of $\Pi$ and $\sig$ or $\A$
its {\bf object}. By abuse of notation, $\Gam \der \s:\sig$
(resp. $\Gam \der \s:\A$ and $\Gam \pder \p:\A$) also denotes the
existence of some typing derivation ending with this sequent. A
derivation $\Pi \dem \Gam \der \s:\mu$ is \deft{meaningful} if $\mu$
is either a type or a multiset $\not= \emul$.  A
    pattern $\p$ is {\bf typable} if there is a derivation whose
    subject is $\p$; a term $\s$ is {\bf typable} if there is
    derivation whose subject is $\s$ and whose object is a type, or
    equivalently if there is a a meaningful derivation whose subject
    is $\s$.  We will prove later that every pattern is in fact
    typable (\cf\ Corollary~\ref{cor:gen-pair}).  The \deft{measure}
    of a typing derivation $\Pi$, written $\meas(\Pi)$, is the number
    of all the typing rules in $\Pi$ except $(\many)$.

\begin{figure}[t]
\begin{framed}
\begin{center}
\[ \hfill \begin{array}{c}
       \infer{}{\x:\A  \pder \x:\A}\ (\trvarpat) \sep \sep 
      \infer{\Gam \pder \p: \A \sep \Delta \pder \q:\D \sep \p \# \q }
        { \Gam +  \Delta \pder \pair{\p}{\q} :
        \mult{\prodt{\A}{\D}} }\ (\trpairpat)  \\\\
     {\bf Patterns} \\
     \hline \\
        \infer{  } 
      {\seq{\x: \mult{\sig}}{  \x: \sig} }\ (\ax)    \sep \sep \sep
      \infer{(\seq{\Gamk}{\s: \sigk})_{\kK}}
        {\seq{+_{\kK}\Gamk}{\s: \mult{\sigk}_{\kK}}}\ (\many) \\ \\ 
        \infer{ \seq{\Gam}{\s : \sig}\sep        \Gam|_{\p} \pder  \p:\A}  
       {\seq{\Gam\sm \Gam|_\p}{\lambda \p.\s: \A \rew  \sig}}\ (\introarrow) \sep \sep       
 \infer{ \seq{\Gam}{\s: \A \rew \sig} \sep\sep 
        \seq{\Del}{\uu : \A }}
        {\seq{\Gam + \Del}{ \s\,\uu: \sig}}\ (\app)      \\ \\ 
 \infer{\seq{\Gam}{\s:\A} \sep
        \seq{\Del}{\uu:\D}}
             {\seq{\Gam +\Del}{\pair{\s}{\uu} : \prodt{\A}{\D}}}\ (\trpair) \sep\sep 
        \infer{\seq{\Gam}{\s:\sig} \sep
        \Gam|_{\p} \pder \p:\A \sep
        \seq{\Del}{\uu:\A}}
        {\seq{(\Gam\sm \Gam|_\p) +  \Del}{\s[\p/\uu]:\sig}}\ (\trsub) \\ \\
        {\bf Terms} \\
            \hline
          \end{array} 
     \hfill \]  
\end{center}
\caption{The Type Assignment System $\Pu$.}
\label{fig:system-pattern}
\end{framed}
\end{figure}

Rules $(\ax)$ and
$(\app)$ are those used for the $\lambda$-calculus
in~\cite{bkdlr14,deC09}. Rule $(\introarrow)$
is the natural extension to patterns of the standard
rule for abstraction used in the $\lambda$-calculus. 
Linearity of patterns is guaranteed by the
clause $\p \# \q$ in rule $(\trpairpat)$. Rule $(\trvarpat)$ with
$\A = \emul$ is essential to type {\it erasing} functions such as for
example $\lambda \x. \id$.  The rule $(\many)$ allows to assign
  multiset types to terms, and it cannot be iterated; in particular
  note that every term can be assigned with this typing rule the empty
  multiset type by setting $K=\emptyset$.  The rule $(\trpair)$ is self
  explanatory; note that, in case both the objects of its premises are
  empty multisets, it allows to type a pair like
  $\pair{\dup \dup}{\dup \dup}$ without assigning types to its
  components, thus
  $(\lambda \pair{\x}{\y}. \id) \pair{\dup \dup}{\dup \dup}$ will be
  typable too, whereas $\dup \dup$ will not.  This choice reflects
the fact that every pair is canonical, so any kind of pair needs to be
typed.
Rule ($\trsub$) is the more subtle one: in order to type
$ \s[\p/\uu]$, on one hand we need to type $\s$, and on the other one,
we need to check that $\p$ and $\uu$ can be assigned the same
types.

The system is relevant, in the sense that only the used premises are registered in the typing environments. This
property, formally stated in the following lemma, will be an important
technical tool used to develop the inhabitation algorithm.

\begin{lemma}[Relevance]
\label{l:relevance} \mbox{}
\begin{itemize}
\item\label{l:relevance-uno} If $\Gam \pder \p:\A$, then $\dom{\Gam} \subseteq \fv{\p}$.
\item\label{l:relevance-due} If $\Gam \der \s:\sig $, then $\dom{\Gam} \subseteq \fv{\s}$.
\end{itemize}
\end{lemma}

\begin{proof}
By induction on the typing derivations. 
\end{proof}

A first elementary property of
the type system is that head occurrences are always typed:

\begin{lemma}
  \label{l:head-contexts}  If $\appctx{\hcontext}{\s}$ is typable, then $t$ is typable.
\end{lemma}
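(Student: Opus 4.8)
The plan is to proceed by structural induction on the head context $\hcontext$. The base case $\hcontext = \Box$ is immediate, since then $\appctx{\hcontext}{\s} = \s$, so typability of $\appctx{\hcontext}{\s}$ is literally typability of $\s$.

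For the inductive step I would treat the three remaining productions of the grammar of head contexts, writing $\hcontext'$ for the immediate sub-context. In each case the outermost constructor of the filled term $\appctx{\hcontext}{\s}$ is fixed: it is $\l \p. \appctx{\hcontext'}{\s}$, or $\appctx{\hcontext'}{\s}\,\uu$, or $\appctx{\hcontext'}{\s}[\p/\uu]$. The key observation is that system $\Pu$ is syntax-directed on type objects: each of these three constructors is the subject of exactly one typing rule whose object is a type, namely $(\introarrow)$, $(\app)$ and $(\trsub)$ respectively, the only other rule being $(\many)$, which produces a multiset object. Hence any derivation witnessing the typability of $\appctx{\hcontext}{\s}$ — which, by the definition of typability, may be assumed to have a type as object — must conclude with the rule dictated by that top constructor. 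In all three cases the premise concerning the head sub-term yields a derivation $\Gam \der \appctx{\hcontext'}{\s} : \sig$ whose object is a type: the function premise of object $\A \rew \sig$ in $(\app)$, the body premise of object $\sig$ in $(\introarrow)$, or the subject premise of object $\sig$ in $(\trsub)$. Thus $\appctx{\hcontext'}{\s}$ is typable, and applying the induction hypothesis to the structurally smaller head context $\hcontext'$ (with the subject $\s$ held fixed) gives that $\s$ is typable.

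The only point requiring care — the mild ``obstacle'' — is the inversion step just described, i.e.\ arguing that a type derivation of $\appctx{\hcontext}{\s}$ is necessarily headed by the rule associated to its top constructor. This is routine precisely because $\Pu$ attaches a unique type-producing rule to each term former, so no genuine generation lemma beyond a one-line inspection of Figure~\ref{fig:system-pattern} is needed; and the recursion is well founded since $\hcontext'$ is a proper sub-context of $\hcontext$.
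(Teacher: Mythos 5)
Your proposal is correct and follows essentially the same route as the paper: the paper's proof is a one-line induction on $\hcontext$ observing that the occurrence of $\s$ filling the hole is always typed, which is exactly the inversion argument you spell out (each head-context constructor forces the last rule to be $(\introarrow)$, $(\app)$, or $(\trsub)$, whose head premise carries a type object). Your additional care about $(\many)$ producing only multiset objects is precisely why the inversion is sound, so the extra detail is a faithful expansion of the paper's ``it is easy to prove'' rather than a different argument.
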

\begin{proof}
  Let $\Pi \dem \Gam \der \appctx{\hcontext}{\s}: \mu$ be meaningful, then it is easy to prove, by induction
 on $\hcontext$, that the occurrence of $\s$ filling the hole of $\hcontext$ is always typed.
\end{proof}

\subsection{On the Typing of Patterns}
The system $\Pu$ features two kinds of typings: those of the form
$\Gam\der\s:\mu$, for terms, and those of the form
$\Gam\pder\p:\A$, for patterns.  These are, of course, fundamentally
dissymetric notions: $\der$ is undecidable and non deterministic (a
given term may have several types in a given environment), whereas
$\pder$ is decidable and deterministic.  As a matter of fact  the {\it unique} type $\A$ such that
$\Gam\pder\p:\A$ could have been denoted by $\Gam(\p)$, as 
we do for
variables.  However, we decided to keep the typing judgements $\Gam
  \pder \p:\A$ since they allow for a clearer formulation of
the typing rules of $\Pu$.
Some preliminary definitions are given below
to prove the uniqueness of the typing of patterns.

Given two patterns $\p$ and $\q$, we say that $\p$ \deft{occurs} in  $\q$ if
\begin{itemize}
\item either $\p=\q$
  \item or
    $\q=\pair {\q_1} {\q_2}$ and either  $\p$ occurs  in $\q_1$ or  $\p$ occurs occurs in $\q_2$.
    \end{itemize}
  Remark that, by linearity of patterns, at most one of the  conditions in the second item above  may hold.

  If $\p$ occurs in $\q$, the multiset type $\A_\p^\q$ is defined as follows:
\begin{itemize}
   \item  $ \A_\p^\p  =\A$
   \item
$ \mult{\prodt{\A}{\D}}_\p^{\pair {\q_1} {\q_2}}  = \left\{\begin{array}{ll}
\A_\p^{\q_1} & \mbox{if $\p$ occurs in $\q_1$}\\
\D_\p^{\q_2} & \mbox{if $\p$ occurs in $\q_2$}\\
\end{array} \right .$
 \item $ \A_\p^{\pair {\q_1} {\q_2}} $ is undefined if $\A$ is not of the shape $ \mult{\prodt{\C}{\D}}$, for some $\C,\D$.
        \end{itemize}

Typings of patterns can  be characterized as follows:

\begin{lemma}\label{l:patterns}
For every  environment $\Gamma$ and pattern $\p$,   $\Gamma \pder \p:  \A$ if and only if  $\dom{\Gam} \subseteq \fv{\p}$ and for all $\q$ occurring in $\p$, $\A^\p_\q$ is defined and $\Gamma|_\q \pder \q:\A^\p_\q$.
\end{lemma}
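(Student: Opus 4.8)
The plan is to prove both directions of the biconditional by induction on the structure of the pattern $\p$, since the typing judgement $\pder$ for patterns is syntax-directed: rule $(\trvarpat)$ applies when $\p$ is a variable and rule $(\trpairpat)$ applies when $\p$ is a pair. The statement relates a single typing $\Gamma \pder \p : \A$ to the family of typings of all subpatterns $\q$ occurring in $\p$, so the inductive structure of patterns matches exactly the recursive structure of the $\A^\p_\q$ construction.

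For the base case, where $\p = \x$ is a variable, the only pattern occurring in $\p$ is $\p$ itself, and $\A^\p_\p = \A$ by definition. By rule $(\trvarpat)$ we have $\x : \A \pder \x : \A$, and by Relevance (Lemma~\ref{l:relevance}) any environment $\Gamma$ with $\Gamma \pder \x : \A$ satisfies $\dom{\Gamma} \subseteq \fv{\x} = \{\x\}$, forcing $\Gamma = \x : \A$; moreover $\Gamma|_\x = \Gamma$. This makes both directions immediate for the base case.

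For the inductive step, where $\p = \pair{\p_1}{\p_2}$, I would first observe that rule $(\trpairpat)$ forces $\A$ to be of the shape $\mult{\prodt{\C}{\D}}$ (so that $\A^\p_\q$ is defined whenever the recursion descends into a pair), with $\Gamma = \Gamma_1 + \Gamma_2$ where $\Gamma_1 \pder \p_1 : \C$ and $\Gamma_2 \pder \p_2 : \D$, the two environments having disjoint domains by linearity and Relevance. For the forward direction, given $\Gamma \pder \p : \A$, I invert the only applicable rule to recover these premises, apply the induction hypothesis to $\p_1$ and $\p_2$, and then check that for any $\q$ occurring in $\p$ we have either $\q = \p$ (handled directly) or $\q$ occurs in some $\p_i$, in which case $\A^\p_\q = \C^{\p_1}_\q$ or $\D^{\p_2}_\q$ by the definition of the $\A^\p_\q$ operation, so the conclusion follows from the induction hypothesis together with the fact that $\Gamma|_\q = \Gamma_i|_\q$ (using disjointness of $\dom{\Gamma_1}$ and $\dom{\Gamma_2}$ and $\fv{\q} \subseteq \fv{\p_i}$). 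For the backward direction, I would use the hypotheses restricted to $\q = \p_1$ and $\q = \p_2$ to obtain $\Gamma|_{\p_1} \pder \p_1 : \C$ and $\Gamma|_{\p_2} \pder \p_2 : \D$, then reassemble with $(\trpairpat)$, checking that $\Gamma = \Gamma|_{\p_1} + \Gamma|_{\p_2}$ again via Relevance and linearity.

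I expect the main obstacle to be the careful bookkeeping of environment restrictions, specifically verifying the identities $\Gamma|_\q = \Gamma_i|_\q$ and $\Gamma = \Gamma|_{\p_1} + \Gamma|_{\p_2}$, which rely crucially on linearity of patterns (guaranteeing $\p_1 \# \p_2$) and on the Relevance lemma to ensure that each $\Gamma_i$ sees only the free variables of $\p_i$. The subtlety is that the statement quantifies over \emph{all} subpatterns $\q$ occurring in $\p$, not merely the immediate components, so I must confirm that the recursive definition of $\A^\p_\q$ commutes correctly with the inductive decomposition, and that the universally quantified hypothesis in the backward direction supplies exactly the instances needed to reconstruct the derivation. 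Once the restriction identities are established, the rest is a routine matching of the two syntax-directed rules against the two cases of the definition of occurrence.
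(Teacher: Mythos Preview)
Your proposal is correct and follows essentially the same approach as the paper: both directions by structural induction on $\p$, inverting the syntax-directed rules $(\trvarpat)$ and $(\trpairpat)$, and using Relevance together with linearity to manage the environment restrictions. One simplification you may have missed: since $\p$ occurs in $\p$ and $\A^\p_\p = \A$, the backward direction is in fact immediate by instantiating the hypothesis at $\q = \p$ (using $\dom{\Gamma}\subseteq\fv{\p}$ to get $\Gamma|_\p = \Gamma$), so the inductive reassembly you sketch, while correct, is not needed.
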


\begin{proof}

  ($\Rightarrow$):
  If  $\Gamma \pder \p:  \A$, then   $\dom{\Gam} \subseteq \fv{\p}$ by Lemma~\ref{l:relevance}. The proof is
by  induction on $\p$. If $\p=\x$ then either $\x\not\in \dom{\Gamma}$ and
  $\A=\emul$, or $\Gamma(\x)=\D\not=\emul$ and $\A=\D^\x_\x=\D$.
  If the considered pattern is $\pair {\p}{\q}$, then the last rule of its type derivation is
  
  $$\infer{\Gam \pder \p: \D \quadt
\Del\pder \q:\C \quadt
\p \# \q}
{ \Gam + \Del \pder \pair{\p}{\q} :   \mult{\prodt{\D} {\C} }}
$$
where $\A= \mult{\prodt{\D} {\C} }$.
By the induction hypothesis $\dom{\Gam} \subseteq \fv{\p}$, $\dom{\Del} \subseteq \fv{\q}$ and
for every $\p' ,\q'$ occurring respectively in $\p$ and $\q$, $\Gam|_{\p'}\pder \p': \D_{\p'}^{\p}$ and
 $\Del|_{\q'}\pder \q': \D_{\q'}^{\q}$.
 Note that every pattern occurring in $\pair{\p}{\q}$ is either $\pair{\p}{\q}$ itself or it occurs 
 in exactly  one of the two components of the pair. In the first case the proof is trivial.
In the second one, since the domains of $\Gam$ and $\Del$ are
disjoint, by linearity of $\pair{\p}{\q}$,  $\Gam|_{\p'}+ \Del|_{\p'}=(\Gam +\Del)|_{\p'}$
and $\Gam|_{\q'}+ \Del|_{\q'}=(\Gam +\Del)|_{\q'}$,
and the proof follows by induction.

($\Leftarrow$): The proof is again by induction on $\p$.
  \end{proof}

  \begin{example} \mbox{} Let $\p=\pair {\pair \x \y } \w$ and
    $\Gamma=\x:\mult{\alpha,\beta},\y:\mult{\gamma}$. The (sub)patterns
    occurring in $\p$ are $\x$, $\y$, $\w$, $\pair \x \y$ and $\p$. In the
    typing 
    environment $\Gam$ restricted to its free variables, each (sub)pattern can be
    typed by a unique multiset:
 \[ \begin{array}{lll@{\hspace{1.5cm}}lll}
\Gam|_\x & = & \x:\mult{\alpha,\beta} \pder \x:\mult{\alpha,\beta} &
\Gam|_{\pair \x \y } & = & \Gam \pder {\pair \x \y }: \mult{\prodt{\mult{\alpha,\beta}}{\mult{\gamma}}} \\
\Gam|_\y & = & \y:\mult{\gamma} \pder \y:\mult{\gamma} &
\Gam|_ \p & =& \Gam \pder \p: \mult{\prodt{\mult{\prodt{\mult{\alpha,\beta}}{\mult{\gamma}}}}{\emul}} \\                                               
\Gam|_\w & = & \emptyset\pder \w:\emul \\
    \end{array} \] 
If we  denote  by $\A$ the type $ \mult{\prodt{\mult{\prodt{\mult{\alpha,\beta}}{\mult{\gamma}}}}{\emul}}$, then it is easy to verify that:
 \[ \begin{array}{lll@{\hspace{1.5cm}}lll}
\A^\p_\x & = &  \mult{\alpha,\beta} & \A^{\p}_{\pair \x \y }& = & \mult{\prodt{\mult{\alpha,\beta}}{\mult{\gamma}}} \\
\A^\p_\y & = & \mult{\gamma} & \A^\p_\p &= & \A   \\ 
\A^\p_\w & = &  \emul \\
 \end{array} \] 
 \end{example}
 The following corollary follows immediately. 
\begin{corollary}\label{cor:gen-pair}\mbox{}
  For every pattern $\p$ and every environment  $\Gam$ such that $\dom{\Gam} \subseteq \fv{\p}$, there exists a unique multiset $\A$ such that $\Gam \pder \p:\A$.
  Moreover if $\p=\pair{\p_1}{\p_2}$  then  $\A=\mult{\prodt{\D}{\C}}$, for some $\D,\C$.

\end{corollary}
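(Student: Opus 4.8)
The plan is to derive Corollary~\ref{cor:gen-pair} directly from Lemma~\ref{l:patterns}, which already provides the essential machinery. The corollary makes two assertions: \emph{existence} of a multiset $\A$ with $\Gam \pder \p:\A$ whenever $\dom{\Gam} \subseteq \fv{\p}$, and \emph{uniqueness} of such $\A$; plus a structural refinement when $\p$ is a pair.

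For existence, I would proceed by a straightforward induction on the structure of the pattern $\p$, mirroring the pattern-typing rules $(\trvarpat)$ and $(\trpairpat)$. If $\p = \x$, then either $\x \in \dom{\Gam}$, in which case the hypothesis $\dom{\Gam} \subseteq \fv{\x} = \{\x\}$ forces $\dom{\Gam} = \{\x\}$ and $(\trvarpat)$ yields $\Gam \pder \x : \Gam(\x)$; or $\x \notin \dom{\Gam}$, and then $(\trvarpat)$ with $\A = \emul$ applies. If $\p = \pair{\p_1}{\p_2}$, linearity gives $\p_1 \# \p_2$, so $\fv{\p} = \fv{\p_1} \uplus \fv{\p_2}$; I would split $\Gam$ into $\Gam|_{\p_1}$ and $\Gam|_{\p_2}$, apply the induction hypothesis to each to obtain $\Gam|_{\p_1} \pder \p_1 : \D$ and $\Gam|_{\p_2} \pder \p_2 : \C$, and then conclude with $(\trpairpat)$ that $\Gam \pder \pair{\p_1}{\p_2} : \mult{\prodt{\D}{\C}}$. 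This simultaneously establishes the stated shape of $\A$ in the pair case, so the ``Moreover'' clause comes for free.

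For uniqueness, I would invoke Lemma~\ref{l:patterns} in its forward direction: any $\A$ with $\Gam \pder \p:\A$ must satisfy $\A^{\p}_{\q} = \Gam|_{\q}$-derivable for every subpattern $\q$ occurring in $\p$; taking $\q = \p$ itself, the lemma forces $\A = \A^{\p}_{\p}$, which is determined solely by $\Gam$ and $\p$. Thus two derivations $\Gam \pder \p : \A$ and $\Gam \pder \p : \A'$ must yield $\A = \A'$. Alternatively, uniqueness follows by the same induction as existence, observing that at each step the rule forced by the syntactic form of $\p$ is unique and its conclusion is determined by the premises.

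I do not expect any serious obstacle here, since Lemma~\ref{l:patterns} already carries the technical weight. The only point requiring mild care is the treatment of the empty-multiset case in $(\trvarpat)$, which is precisely what allows a variable absent from $\dom{\Gam}$ to be typed; this is what guarantees existence holds under the weak hypothesis $\dom{\Gam} \subseteq \fv{\p}$ rather than equality. Handling this case correctly in the base of the induction is the one step where the argument could slip, and I would state it explicitly.
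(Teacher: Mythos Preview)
Your proposal is correct and close in spirit to the paper, which simply records that the corollary ``follows immediately'' from Lemma~\ref{l:patterns} without spelling out any induction. Your direct induction on $\p$ for existence is precisely the argument the backward direction of that lemma encapsulates, and the ``Moreover'' clause does come for free from the pair case of $(\trpairpat)$.

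One small correction on uniqueness: your first argument, invoking Lemma~\ref{l:patterns} with $\q = \p$, does not work as written, since $\A^{\p}_{\p} = \A$ holds \emph{by definition} and therefore carries no information independent of $\A$ itself; saying it is ``determined solely by $\Gam$ and $\p$'' begs the question. The genuine content of Lemma~\ref{l:patterns} for uniqueness lies at the \emph{leaves}: for every variable $\x$ occurring in $\p$ the lemma gives $\Gam|_{\x} \pder \x : \A^{\p}_{\x}$, which by inspection of $(\trvarpat)$ forces $\A^{\p}_{\x} = \Gam(\x)$, and these leaf values together with the definedness of all $\A^{\p}_{\q}$ reconstruct $\A$ uniquely along the structure of $\p$. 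Your alternative argument---the same syntax-directed induction on $\p$, observing that each rule is forced and its conclusion determined by the premises---is correct and is in fact the cleaner route; you should present that one and drop the $\q = \p$ shortcut.
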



\subsection{Main Properties of system $\Pu$}
We are going to define the notion of {\it typed occurrences} of a
typing derivation, which plays an essential role in the rest of this
paper: indeed, thanks to the use of non-idempotent intersection types,
a combinatorial argument based on a measure on typing derivations
(\cf\ Lemma~\ref{lem:red:exp}(\ref{lem:reduction})), allows to prove
the termination of reduction of redexes occurring in typed occurrences
of their respective typing derivations.

Given a typing derivation $\Pi\dem
\Gam\vdash \ccontext[\uu]:\sig$,
the occurrence of $\uu$
in the hole of $\ccontext$ is a 
typed occurrence of $\Pi$
if and only if $\uu$  is the
subject of a  meaningful subderivation of $\Pi$. More precisely:

\begin {definition}\label{def:occ}
Given a type derivation $\Pi$, the set of
\deft{typed occurrences} of $\Pi$, written $\toc{\Pi}$,
is the set of  contexts defined by induction on $\Pi$ as follows. 
\begin{itemize}
\item If $\Pi$ ends with $(\ax)$,   then 
$\toc{\Pi} := \set{\Box}$.
\item If $\Pi$ ends with $(\trpair)$ with subject $\pair{\uu}{\vv}$ and premises $\Pi_i$ ($i=\{1,2\}$) then \\
$\toc{\Pi} := \set{\Box} \cup \set{ \pair{\ccontext}{\vv}\mid \ccontext \in \toc{\Pi_1} }  \cup \set{ \pair{\uu}{\ccontext}\mid \ccontext \in \toc{\Pi_2} }$.
\item If $\Pi$ ends with $(\introarrow)$ 
 with subject $\lambda \p.\uu$ and 
premise $\Pi'$  then \\
$\toc{\Pi} := \set{\Box} \cup \set{ \lambda\p.\ccontext\mid \ccontext \in \toc{\Pi'}}$ .
\item If $\Pi$ ends with $(\app)$
 with subject $\s\uu$ and 
premises $\Pi_1$ and $\Pi_2$
with subjects  $\s$
and $\uu$ respectively, then 
$\toc{\Pi} := \set{\Box} \cup  \set{\ccontext \uu \mid \ccontext \in \toc{\Pi_1}
  }\cup \set{\s \ccontext\mid \ccontext \in \toc{\Pi_2}}$. 
\item  If $\Pi$ ends with $(\trsub)$ with subject $\s[\p/\uu]$ and
  premises  $\Pi_1$ and $\Pi_2$
with subjects $\s$ and  $\uu$ respectively, then $\toc{\Pi} := \set{\Box} \cup
  \set{\ccontext[\p/\uu]\mid \ccontext \in \toc{\Pi_1}} \cup 
  \set{\s[\p/\ccontext] \mid \ccontext  \in \toc{\Pi_2}}$.
  \item If $\Pi$ ends with $(\many)$, with premises $\Pi_k\ (k \in K)$, then
  $\toc{\Pi} := \cup_{\kK } \toc{\Pi_k}$.
\end{itemize}
\end{definition}

\begin{example}
Given the following derivations $\Pi$ and $\Pi'$, 
the occurrences $\Box$ and $\Box \y$ belong to both $\toc{\Pi}$ and $\toc{\Pi'}$ while
$\x \Box$ belongs to $\toc{\Pi}$ but not to $\toc{\Pi'}$.

\[ \begin{array}{c}
\Pi \dem 
   \infer{\infer{\mbox{}}{\x: \mult{\mult{\tau} \arrow  \tau} \der \x:  \mult{\tau} \arrow  \tau} \sep
         \infer{\mbox{}}{\infer{\y: \mult{\tau} \der \y: \tau}{\y: \mult{\tau} \der \y: \mult{\tau}}}}
     {\x: \mult{\mult{\tau} \arrow  \tau}, \y: \mult{\tau} \der \x \y: \tau } \\ \\
     \Pi' \dem 
   \infer{\infer{\mbox{}}{\x: \mult{\emul \arrow \tau} \der \x: \emul \arrow \tau} \sep \infer{}{\es \der \y: \emul} }
         {\x: \mult{\emul \arrow \tau}  \der \x \y: \tau }
     \end{array} \]
\end{example}

The type assignment system $\Pu$ enjoys the fundamental properties of
subject reduction and subject expansion, based respectively on
substitution and anti-substitution properties whose proofs
can be found in~\cite{Alves}.
  \begin{lemma}[{\bf     Substitution/Anti-Substitution Lemma}]
    \label{l:substitution-lemma} \mbox{}
\begin{enumerate}
\item \label{l:substitution} If  $\Pi_\s \dem \Gam_\s ; \x : \A \der \s : \tau$
and $\Pi_\uu \dem \Gam_\uu   \der \uu : \A$, then
there exists $\Pi \dem \Gam_\s + \Gam_\uu  \dem  \s\isubs{\x}{\uu} : \tau$
such 
that $\meas(\Pi)\leq\meas(\Pi_\s)+\meas(\Pi_\uu)$.
\item \label{l:antisubstitution} If  $\Pi \dem \Gam \der  \s\isubs{\x}{\uu} : \tau$, then there
  exist derivations $\Pi_t$ and $\Pi_\uu$, environments 
  $\Gam_\s, \Gam_\uu$ and multiset $\A$ such that
  $\Pi_\s \dem \Gam_\s ; \x : \A \der \s : \tau$,
  $\Pi_\uu \dem \Gam_\uu \der \uu:\A$ and $\Gam =
  \Gam_\s + \Gam_\uu$. 
\end{enumerate}
\end{lemma}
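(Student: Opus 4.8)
The plan is to prove both parts of the Substitution/Anti-Substitution Lemma by structural induction on the typing derivation, tracking the measure $\meas$ carefully. Both statements are standard for non-idempotent intersection type systems, where the linear (relevant) bookkeeping makes the environment splitting completely rigid, so the main work is to verify that the multiset structure is preserved under substitution and its inverse.

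For the Substitution part (\ref{l:substitution}), I would proceed by induction on $\Pi_\s \dem \Gam_\s; \x:\A \der \s:\tau$. The key observation is that, by relevance (Lemma~\ref{l:relevance}), the multiset $\A$ is exactly the collection of types assigned to the typed occurrences of $\x$ in $\s$; since types are non-idempotent, each occurrence consumes exactly one element of $\A$, and $\Pi_\uu \dem \Gam_\uu \der \uu:\A$ must itself come from a $(\many)$ rule whose premises $\Pi_\uu^k \dem \Gam_\uu^k \der \uu:\sig_k$ are distributed one-per-occurrence. In the base case $\s = \x$ with $\A = \mult{\tau}$, we simply replace the $(\ax)$ leaf by $\Pi_\uu$ (a single premise of the $(\many)$), obtaining $\Gam_\uu \der \uu:\tau$; the measure inequality holds since we trade one $(\ax)$ rule for a possibly larger derivation, and the $(\many)$ rule is not counted by $\meas$. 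The inductive cases $(\app)$, $(\introarrow)$, $(\trpair)$, $(\trsub)$ all split $\A$ and $\Gam_\uu$ along the corresponding split of the premises, apply the induction hypothesis to each premise, and recombine; the additivity of $\meas$ and the use of $+$ on environments make the measure bound propagate. The case $(\introarrow)$ (and symmetrically $(\trsub)$) requires care because the abstracted pattern $\p$ removes variables from the environment via $\Gam \sm \Gam|_\p$; here one uses that $\x \notin \fv{\p}$ (by $\alpha$-renaming) so the pattern typing $\Gam|_\p \pder \p:\A'$ is untouched by the substitution.

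For the Anti-Substitution part (\ref{l:antisubstitution}), the strategy is dual: induction on $\Pi \dem \Gam \der \s\isubs{\x}{\uu}:\tau$, reconstructing the split of $\Gam$ into $\Gam_\s + \Gam_\uu$ and the multiset $\A$ by inspecting which subderivations type copies of $\uu$ that originated from occurrences of $\x$ in $\s$. The subtlety here is that $\s\isubs{\x}{\uu}$ may type occurrences of $\uu$ that were \emph{already present} in $\s$ (not arising from $\x$), so the decomposition is not unique as a matter of syntax; one resolves this by fixing, at the outset, the occurrences corresponding to the substituted variable $\x$ and peeling off exactly those subderivations into $\Pi_\uu$ via a $(\many)$ rule collecting their objects into $\A$. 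The remaining derivation, with $(\ax)$ leaves $\x:\mult{\sig_k}$ reinserted at those positions, yields $\Pi_\s \dem \Gam_\s; \x:\A \der \s:\tau$.

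The main obstacle I anticipate is the correct handling of the $(\many)$ rule and the one-to-one correspondence between typed occurrences of $\x$ and premises of the $(\many)$ supplying $\uu:\A$. In a non-idempotent setting this correspondence is forced, but making it precise requires being explicit about how a derivation of $\uu:\A = \uu:\mult{\sig_k}_{\kK}$ decomposes into $|K|$ independent derivations $\uu:\sig_k$, and conversely how these recombine without duplicating or dropping environment multiplicities. Since the excerpt states that detailed proofs appear in~\cite{Alves}, I would structure the argument around this occurrence-counting invariant and treat the routine rule-by-rule recombination briefly, flagging only the pattern-binding cases $(\introarrow)$ and $(\trsub)$ as needing the $\alpha$-convention $\x \# \p$.
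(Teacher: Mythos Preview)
Your proposal is correct and follows the standard route for non-idempotent intersection type systems. Note, however, that the paper does not actually prove this lemma: it states that the proofs ``can be found in~\cite{Alves}'' and moves on. So there is no in-paper argument to compare against; your sketch is exactly the kind of proof one expects the cited reference to contain (induction on the typing derivation for substitution, exploiting relevance and the additive splitting of environments and multisets forced by non-idempotency).

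One small point of presentation for part~(\ref{l:antisubstitution}): you phrase the argument as ``induction on $\Pi \dem \Gam \der \s\isubs{\x}{\uu}:\tau$'' and then patch the ambiguity by marking the occurrences of $\uu$ that come from $\x$. This works, but it is cleaner---and more honest about what the induction is really on---to proceed by induction on $\s$ (with $\Pi$ as auxiliary data). The term $\s$ is part of the hypothesis, and its structure is precisely what tells you which subterms of $\s\isubs{\x}{\uu}$ are substituted copies of $\uu$; once you case-split on $\s$, the last rule of $\Pi$ is determined by the shape of $\s\isubs{\x}{\uu}$ and the rest is mechanical. Your handling of the binding cases $(\introarrow)$ and $(\trsub)$ via the $\alpha$-convention $\x \# \p$ (and, implicitly, $\fv{\uu} \cap \fv{\p} = \emptyset$ so that $\Gam_\uu$ does not interfere with $\Gam|_\p$) is the right observation.
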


On the other hand, the measure of 
any typing derivation is not increasing by reduction.  Moreover, the
measure strictly decreases for the reduction steps that are
{\it typed}.  This property  makes easier the proof of the ``only
  if'' part of Theorem~\ref{l:characterization-canonical}.
\begin{lemma} \label{lem:red:exp} \mbox{}
\begin{enumerate}
\item {\bf (Weighted Subject Reduction)}
\label{lem:reduction}
If $\Pi \dem\Gam \der \s:\tau$ and $\s\Rew{} \s'$, then $\Pi'\dem \Gam
\der \s':\tau$ and $\meas(\Pi') \leq \meas(\Pi)$. Moreover, if the
reduced redex occurs in a typed occurrence of $\Pi$,  then $\meas(\Pi') <
\meas(\Pi)$.
\item {\bf (Subject Expansion)}
\label{lem:subexp}
If $\Pi' \dem \Gam' \der \s':\sig$ and $\s \Rew{} \s'$, then  $\Pi \dem \Gam \der \s:\sig$.
\end{enumerate}
\end{lemma}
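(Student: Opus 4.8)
The plan is to prove both statements by reducing each to the case in which the contracted redex sits at the root, and then to induct on the term context $\ccontext$ with $\s=\appctx{\ccontext}{r}$ and $r\mapsto r'$ one of the rewrite rules. For the induction to close I would strengthen both clauses so that the object $\mu$ ranges uniformly over types \emph{and} multiset types, a multiset typing being a bundle of type typings glued by $(\many)$. The strict-decrease proviso then matches Definition~\ref{def:occ} exactly: I would first record that, in every clause of that definition, $\Box\in\toc{\Pi}$ holds iff $\Pi$ is meaningful (the only exception being a $(\many)$ with empty index set, which types its subject with $\emul$), so that ``the contracted occurrence is typed'' is synonymous with ``the exhibited subderivation of the redex is meaningful''.

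\textbf{The productive rules.} For $(\run)$, $(\rdeux)$ and $(\rtrois)$ I would invert the typing of the redex and re-assemble that of its reduct. For $(\rdeux)$, $\s[\x/\uu]$ is typed by $(\trsub)$ whose pattern premise is forced to be $\x:\A\pder\x:\A$; Lemma~\ref{l:substitution-lemma}(\ref{l:substitution}) then yields a derivation of $\s\isubs{\x}{\uu}$ with $\meas(\Pi')\le\meas(\Pi_\s)+\meas(\Pi_\uu)$, strictly below the original because the $(\trsub)$ and $(\trvarpat)$ rules are discarded. For $(\run)$, typing $\appctx{\slist}{\l\p.\s}\,\uu$ by $(\app)$ forces, after peeling off the list-context $(\trsub)$'s, an $(\introarrow)$ at the head; trading the pair $(\app)$/$(\introarrow)$ for a single $(\trsub)$ around $\s[\p/\uu]$ and moving the premise $\uu:\A$ inside rebuilds the reduct with $\meas$ dropping by exactly one and with the environment $\Gam$ preserved. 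For $(\rtrois)$, Corollary~\ref{cor:gen-pair} forces the matched argument to carry a product type, so the head $(\trpair)$ and the $(\trpairpat)$ of the pair pattern are exchanged for the two nested $(\trsub)$'s and the two split pattern premises; the count of genuine term rules is unchanged, but the $(\trpairpat)$ rule vanishes, so $\meas$ again strictly decreases. Subject expansion is the mirror image, using Lemma~\ref{l:substitution-lemma}(\ref{l:antisubstitution}) for $(\rdeux)$ and the same re-bracketing read backwards for $(\run)$ and $(\rtrois)$.

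\textbf{The failure rules.} For each of $(\rsix)$, $(\rsept)$, $(\rhuit)$, $(\rquatre)$, $(\rcinq)$ and $(\rneuf)$ I would first show that its left-hand side is never meaningfully typable. The head of the redex is either $\fail$, an abstraction, or a pair sitting inside a list context; inversion together with Corollary~\ref{cor:gen-pair} shows that such a head can only receive, respectively, the empty multiset, an arrow type, or a product type, none of which matches the type the surrounding rule would demand (an arrow for $(\app)$, a product for a pair-matching $(\trsub)$). Hence any derivation types such a redex occurrence with $\emul$ via $(\many)$ with empty index set; this is precisely a non-typed occurrence, the reduct $\fail$ is likewise typed with $\emul$, both contribute $0$ to $\meas$, so the measure is preserved and no strict-decrease obligation is incurred. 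Read backwards this gives expansion for the failure rules, the only non-vacuous object being $\mu=\emul$.

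\textbf{Contextual step and main obstacle.} The induction on $\ccontext$ is then routine: decomposing $\Pi$ through $\ccontext$ exposes a subderivation of the redex with some object $\mu$, the base cases supply its reduct, and re-assembling preserves both $\Gam$ and $\mu$; the strict-decrease clause propagates because the hole of $\ccontext$ lies in $\toc{\Pi}$ exactly when that exposed subderivation is meaningful, which is exactly when the base case delivers strict decrease. The part I expect to be most delicate is the bookkeeping for $(\run)$ and $(\rtrois)$: one must commute the list-context explicit matchings past the head redex while tracking how environments split under $+$, and one must keep in mind that $\meas$ counts the pattern rules $(\trvarpat)$ and $(\trpairpat)$ — it is precisely their disappearance that forces strict decrease in $(\rdeux)$ and $(\rtrois)$, where the number of genuine term rules is otherwise left unchanged.
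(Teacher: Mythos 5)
Your proposal is correct and takes essentially the same approach as the paper's proof: induction on the contextual closure of $\Rew{}$, with the productive root cases $(\run)$, $(\rdeux)$, $(\rtrois)$ handled by inversion and the Substitution/Anti-Substitution Lemma (the paper defers these to the treatment in~\cite{Alves}), the failure rules dismissed because their left-hand sides and $\fail$ admit no meaningful typing, and occurrences typed only by an empty $(\many)$ leaving the measure unchanged. The points you spell out explicitly---strengthening the statement to multiset objects so that the induction closes, and the fact that $\meas$ counts the pattern rules, without which $(\rtrois)$ would merely preserve the measure---are precisely the bookkeeping the paper leaves implicit.
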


\begin{proof}
Both proofs are by induction on the reduction relation
    $\s \Rew{} \s'$. For the base cases (reduction at the
    root position): the rules $(\run), (\rdeux), (\rtrois)$ are treated
    exactly as in~\cite{Alves}, and the rules 
    $(\rquatre),(\rhuit),(\rcinq),(\rneuf),(\rsix),(\rsept)$
    do not apply since the term $\s$ (resp. $\s'$) would not be typable.
    All the inductive cases are straightforward, and in particular,
    when the reduction occurs in an untyped position, then
    the measures of $\Pi$ and $\Pi'$ are equal.
 \end{proof}

Given $\Pi \dem \Gam \der \s: \tau$, the term $\s$ is said to be in
\deft{$\Pi$-normal form}, also written \deft{$\Pi$-nf}, if for every
typed occurrence $\ccontext \in \toc{\Pi}$ such that $\s =
\appctx{\ccontext}{\uu}$, the subterm $\uu$ is not a redex.

We are now ready to provide the logical characterization of terms
having \canonical. This proof has been already given in~\cite{Alves}, based on the property that, if a term has a canonical
  form, then there is a \deft{head} reduction strategy reaching
  this normal form. But in
  order to make this paper self contained, we reformulate here the
  proof, using (for the completeness proof) a more general approach,
  namely that every reduction strategy choosing at least all the typed
  redexes reaches a canonical form. 
\begin{theorem}[Characterizing Canonicity]
\label{l:characterization-canonical}
A term $\s$ is typable iff $\s$ has a canonical form. 
\end{theorem}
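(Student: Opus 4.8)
The plan is to prove the two implications separately, using the two parts of Lemma~\ref{lem:red:exp} together with a characterization of canonical forms as exactly the typable \emph{normal-form-for-typed-redexes} terms.

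For the \textbf{completeness} direction (if $\s$ has a canonical form then $\s$ is typable), I would first show directly that every canonical form is typable, by induction on the grammars generating $\cal J$ and $\cal K$. The base case $\cal K ::= \x$ is the axiom; abstractions, pairs, applications and matchings are handled by the corresponding typing rules, using Corollary~\ref{cor:gen-pair} to type the patterns appearing in $(\introarrow)$ and $(\trsub)$. The crucial case is the pair $\pair{\s}{\s}$: here the rule $(\trpair)$ with both objects set to the empty multiset $\emul$ (via $(\many)$ with $K=\es$) lets us type \emph{any} pair regardless of whether its components are typable, which is exactly why $\pair{\dup\dup}{\dup\dup}$ is canonical and typable. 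Once every canonical form is typable, I would close the direction by iterating Subject Expansion (Lemma~\ref{lem:red:exp}, Subject Expansion part) along the reduction $\s \Rewn{} \can$, transporting typability back from the canonical form to $\s$.

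For the \textbf{soundness} direction (if $\s$ is typable then $\s$ has a canonical form), I would argue by a measure-decreasing reduction strategy. Fix a meaningful derivation $\Pi \dem \Gam \der \s:\tau$. If $\s$ is not already in $\Pi$-normal form, then by definition some typed occurrence of $\Pi$ is a redex; reduce it. By Weighted Subject Reduction (Lemma~\ref{lem:red:exp}), the reduct $\s'$ is still typable with a derivation $\Pi'$ of the same object, and since the contracted redex was in a typed occurrence, $\meas(\Pi') < \meas(\Pi)$. Since $\meas$ takes values in $\mathbb{N}$, this process must terminate, yielding a term $\uu$ with $\s \Rewn{} \uu$ that is in $\Pi^{\uu}$-normal form for its derivation. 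It then remains to prove the key structural claim: \emph{every typable term that is in $\Pi$-nf is a canonical form}. This is established by induction on the $\Pi$-nf term, analyzing its syntactic shape against the typing rule that produced $\Pi$; the point is that a typed redex-free typable term cannot have a head application $(\l\p.\s)\uu$, a head matching against a pair or abstraction, or a propagated $\fail$ in a typed position, so its shape is forced into the grammar $\cal J$/$\cal K$.

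I expect the \textbf{main obstacle} to be this final structural claim, namely that typable $\Pi$-normal forms coincide with canonical forms. The subtlety is that $\Pi$-normality only forbids redexes occurring in \emph{typed} occurrences, so one must argue that the remaining, untyped occurrences of a typable term cannot prevent it from being canonical — for instance, the argument of an application or a component of a pair may be an arbitrary untypable term, yet the canonical-form grammar permits exactly such unrestricted subterms ($\cal K \s$ and $\pair{\s}{\s}$ allow arbitrary $\s$). Care is also needed with explicit matchings $[\p/\uu]$: the matching must be unblockable only when $\uu$ reduces to a pair, so the Relevance Lemma (Lemma~\ref{l:relevance}) and the pattern-typing analysis (Lemma~\ref{l:patterns}) are needed to ensure the head of $\uu$ is a typed variable occurrence, matching the $\cal K$-clause $\cal K[\pair{\p}{\q}/\cal K]$. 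Matching the case analysis of the $\Pi$-nf term precisely against the six clauses of $\cal J$, $\cal K$ is where the real work lies.
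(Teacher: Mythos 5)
Your proposal follows essentially the same route as the paper: its (if) direction is exactly your induction on the canonical-form grammar followed by subject expansion, and its (only if) direction is exactly your strategy of contracting typed redexes, with termination by the strictly decreasing measure of Weighted Subject Reduction, followed by an induction showing that a typable term with no typed redex occurrences is canonical (your ``key structural claim''). The only detail worth flagging is that in the application and matching cases of the grammar induction you need the strengthened inductive statement that every ${\cal K}$-form can be assigned \emph{every} type (so that ${\cal K}\,\s$ can be typed via $\emul \arrow \sigma$ even when $\s$ is untypable, and ${\cal K}[\pair{\p}{\q}/{\cal K}]$ via a product type), which the paper makes explicit.
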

\begin{proof} \mbox{}
  
\begin{itemize} 
\item(if) We reason by induction on the grammar defining the canonical forms.

  We first consider ${\cal K}$-canonical forms, for which we prove a
  stronger property, namely that for every $ \s \in {\cal K}$, for
  every $\sig$ there is an environment  $\Gam$
   such that $\Gam \der
  \s:\sig$. We reason by induction on the grammar defining ${\cal K}$.

If $\s = \x$, the proof is straightforward. If $\s=\vv\uu$, then by
the \ih\ there is a typing derivation
$\Gam \der \vv:\emul \arrow \sig$. Since to every term the
   multiset $\emul$ can be assigned by rule ($\many$) with an empty
  premise, the result follows by application of rule
$(\app)$.

 Let $\s = \uu[\pair{\p}{\q}/\vv]$. By the \ih\ for every $\sigma$,
 there is $\Gam$ such that $\Gam \der \uu: \sig$. By
Corollary~\ref{cor:gen-pair}, $\Gam|_{\pair{\p}{\q}} \pder
 \pair{\p}{\q}:\mult{\prodt{\A}{\D}}$ for some $\A,\D$.
 Then   by the \ih\ again there is $\Delta$
 such that $\Delta\der
 \vv:\prodt{\A}{\D}$, and so, by rule $(\many)$, $\Delta \der
 \vv:\mult{\prodt{\A}{\D}}$. Then, by applying rule $(\trsub)$, we get $\Gam\sm
 \Gam|_{\pair{\p}{\q}} + \Delta \der
 \uu[\pair{\p}{\q}/\vv]: \sig$.

Now, let $\s$ be a ${\cal J}$-canonical form. If
  $\s=\pair{\uu}{\vv}$ then by rules ($\many$) and ($\trpair$)
  $\der \pair{\uu}{\vv}:\prodt{\emul}{\emul}$.  If $\s=\lambda\p.\uu$,
then $\uu$ can be typed by the \ih\ so that let $\Gam \der
\uu:\sig$. If $\p=\x$, for some $\x$, then by applying rule
$(\introarrow)$,
$\Gam\sm \Gam|_\x \der \lambda \x.\uu:\Gam(\x) \arrow \sig$,
otherwise, by Corollary~\ref{cor:gen-pair},
  $\Gam|_{\p}\pder \p:\mult{\prodt{\A}{\D}}$, for some $\A,\D$,
  and then
  $\Gam\sm \Gam|_\p \der \lambda\x.\uu:\mult{\prodt{\A}{\D}} \arrow
  \sig$, always by rule $(\introarrow)$. 
Let
$\s=\s'[\pair{\p}{\q}/\vv]$,
where $\s'$ (resp. $\vv$) is a ${\cal J}$ (resp. ${\cal K}$) canonical form. By the \ih\ there are
$\Gam,\sigma$ such that $\Gam \der \s':\sigma$. 
Moreover, Corollary~\ref{cor:gen-pair} gives
$\Gam|_{\pair{\p}{\q}}
\pder \pair{\p}{\q}:\mult{\prodt{\A}{\D}}$ for some
$\A,\D$. Since $\vv$ is a ${\cal K}$-canonical
form, then $\Del\der \vv:\prodt{\A}{\D}$ as shown
above, and  then $\Del\der \vv:\mult{\prodt{\A}{\D}}$, by rule $(\many)$. Thus $\Gam +\Del \der \s'[\pair{\p}{\q}/\vv]:\sigma$ by
applying rule $(\trsub)$.
\item (only if) Let $\s$ be a typable term, \ie\ $\Pi \dem \Gam \der
  \s:\sig$.  Consider a reduction strategy ${\cal ST}$ that always
  chooses a {\it typed} redex occurrence. By
  Lemma~\ref{lem:red:exp}(\ref{lem:reduction}) and Lemma~\ref{lem:inf}
  the strategy ${\cal ST}$ always terminates. Let $\s'$ be a
  normal-form of $\s$ for the strategy ${\cal ST}$, \ie\ $\s$ reduces
  to $\s'$ using ${\cal ST}$, and $\s'$ has no typed redex occurrence.  We know that
  $\Pi' \dem \Gam \der \s': \sig$ by
  Lemma~\ref{lem:red:exp}(\ref{lem:reduction}).
  We now proceed by
  induction on $\Pi'$, by taking into account the notion of typed
  occurrence of $\Pi'$.

  If $\Pi'$ ends with $(\ax)$, then its subject is $\x$, which is
  canonical.  If $\Pi'$ ends with $(\introarrow)$ with subject $\lambda
  \p.\uu$ and premise $\Pi''$ with subject $\uu$, then
  $\uu$ has no typed
    redex occurrences, so it is canonical by the \ih\ 
  We conclude that $\lambda \p.\uu$ is canonical too, by
  definition of canonical form.  If $\Pi$ ends with $(\app)$ with
  subject $\s\uu$ and premises $\Pi_1$ and $\Pi_2$ having
  subjects $\s$ and $\uu$ respectively, then $\s$, which is
    also typable, has no typed redex occurrences, so that it is
    canonical by the \ih\  Moreover $\square \in \toc{\Pi}$, so
  $\s\uu$ cannot be a redex. This implies that $\s$ cannot be an abstraction, so it is a ${\cal K }$ canonical form,  and consequently $\s\uu$ is a ${\cal K}$ canonical form too.  Suppose $\Pi$ ends with
  $(\trsub)$ with subject $\s[\p/\uu]$ and premises $\Pi_1$ and
  $\Pi_2$ with subjects $\s$ and $\uu$
  respectively. The term $\s$, which is typable,  has no typed redex occurrences, so
it is canonical by the \ih\ 
Moreover, $\p$ cannot be a variable, otherwise the term would have a
typed $\rdeux$-redex occurrence, so, by Corollary~\ref{cor:gen-pair},
$\Gam|_{\p}\pder \p:\mult{\prodt{\A}{\D}}$, for some $\A,\D$, where $\Gam$ is the
typing environment of $\s$. The term   $\uu$ is
also typed. Since $\uu$ cannot be neither an abstraction
(rule $\rquatre$) nor a pair (rule $\rtrois$), it is
necessarily a ${\cal K}$ canonical form, and consequently $\s[\p/\uu]$
is canonical. Finally, if $\Pi'$ ends with $(\trpair)$,
then its subject is a pair, which is a
  canonical form. This concludes the proof.  \qedhere
\end{itemize}
\end{proof}

\section{Inhabitation for System $\Pu$}
\label{s:inhabitation}

Given $\mu$, a type or a multiset type, the inhabitation problem consists
in finding a closed term $\s$ such that $\der \s: \mu$ is derivable.
These notions will naturally be generalized later to non-closed terms.
Since system $\Pu$ characterizes canonicity, it is natural to look for
inhabitants in {\it canonical form}. The next Lemma proves that the
problem can be simplified, namely that it is sufficient to look for
inhabitants in {\it pure canonical form}, \ie\ without nested
substitution (we postpone the proof of this lemma to
Section~\ref{sub:pure}).

\begin{restatable}{lemma}{lempure} \label{lem:pure} Let $\s$ be a canonical form.  If
  $\Pi \dem \Gam \der \s: \mu$ is derivable, then there is some type
  derivation $\Pi'$ and some {\it pure} canonical form $\s'$ such that
  $\Pi' \dem \Gam \der \s':\mu$ is derivable.
  \end{restatable}

  We already noticed that the system $\Pu$ allows to assign
  the multiset $\emul$ to terms through the rule $(\many)$.
  As a consequence, a typed term may contain untyped subterms. 
In order to  identify 
inhabitants in such cases we  introduce a term constant $\Omega$ to denote a generic
untyped subterm. Accordingly, the type system $\Pu$ is extended to
the new grammar of terms possibly containing $\Omega$, which can
only be typed using  a particular case of the $(\many)$ rule:
\[ \infer{ }{\der \Omega: \emul}\ (\many)
\]

So the inhabitation algorithm should produce 
\deft{approximate normal forms} (denoted $\ap, \bp, \cp$), also written $\anf$, defined as follows: 
\[
\begin{array}{lll}
     {\ap},{\bp},{\cp} & ::= & \Omega \mid {\cal N} \\ 
     {\cal N}          & ::= & \lambda \p. {\cal N}  \mid  
                               \pair{\ap}{\bp}  \mid 
                               {\cal L} \mid   {\cal N} [\pair{\p}{\q}/{\cal L}]\\
  {\cal L}          & ::= &  \x  \mid {\cal L} \ap   \\
   \end{array}  
\]
The grammar defining $\anf s$ is similar to
   that of pure canonical forms, starting, besides variables, also
   from $\Omega$. 
 The notion of typed occurrences in 
the new extended system is  straightforward.    
  Moreover,  an $\anf$
     does not contain any redex, differently from canonical
     forms. Roughly speaking, an $\anf$ can be seen as a representation
     of an infinite set of pure canonical forms, obtained by replacing
     each occurrence of $\Omega$ by any term.
     \begin{example}
The term  $\lambda \pair{\x}{\y}.   (\x    (\id    \id))
[\pair{\z_1}{\z_2}/\y\id]$ is (pure) canonical but not an $\anf$, 
while  $\lambda  \pair{\x}{\y}. (\x \Omega) [\pair{\z_1}{\z_2}/\y\id]$ is
an    $\anf$.
\end{example}

$\Anf s$ 
are ordered by  the smallest contextual  
order $\leq$ such that $\Omega \leq \ap$, for any  $\ap$.
We also write $\ap \leq \s$ when the term $\s$ is obtained from 
 $\ap$ by
 replacing each occurrence of $\Omega$ by a term of $\Lp$.
 Thus for example $\x \Omega \Omega \leq \x (\id \dup) (\dup \dup)$ is obtained by replacing 
the first (resp. second) occurrence of $\Omega$ by $\id \dup$ (resp. $\dup \dup$). 

Let $\Ap(\s)=\{\ap \ |\ \exists \uu\ \s \Rewn{} \uu \mbox{ and
}\ap\leq \uu\}$ be the set of \deft{approximants}  of the term $\s$, and
let $\bigvee$ denote the least upper bound with respect to $\leq$.  We write
$\uparrow_{\iI}\ap_i$ to denote the fact that $\bigvee\{\ap_i\}_{\iI}$
does exist. Note that $I =\es$ implies $\bigvee\{\ap_i\}_{\iI} = \Omega$.  It is easy to check that, for every $\s$ and
$\ap_1,\ldots \ap_n \in \Ap(\s)$,
$\uparrow_{i\in\{1,\ldots,n\}}\ap_i$. 
An $\anf$  $\ap$
is a \deft{head subterm} of $\bp$ if either $\bp = \ap$ or $\bp=\cp
\cp'$ and $\ap$ is a head subterm of $\cp$.  It is easy to check that, if
$\Gamma\der \ap:\sigma$ and $\ap\leq\bp$ (resp.  $\ap\leq\s$) then
$\Gamma\der \bp:\sigma$ (resp.  $\Gamma\der \s:\sigma$). 

 Given
$\Pi\dem\Gam   \der  \s:\mu$,   where  $\s$   is  in   $\Pi$-nf
(\cf\ Section~\ref{s:type-system}),  the
\deft{minimal approximant of $\Pi$}, written $\Ap(\Pi)$, is defined by
induction on $\Pi$ as follows:
\begin{itemize}
\item $\Ap(\Gam \der \x:\rho)= \x$.
\item If $\Pi\dem\Gam \der \lambda \p. \s: \A \arrow \rho$ follows  from 
  $\Pi' \dem\Gam' \der \s:\rho$, then $\Ap(\Pi)= \lambda \p. \Ap(\Pi')$,  $\s$ being in  $\Pi'$-nf. 
\item If $\Pi\dem \Gam \der \pair{\s}{\uu}:\prodt{\A}{\D}$ follows
  from $\Pi_1 \dem \Gam_1 \der \s: \A$ and  $\Pi_2 \dem \Gam \der \uu: \D$, then
  $\Ap(\Pi)= \pair{\Ap(\Pi_1)}{\Ap(\Pi_2)}$. 
\item If $\Pi\dem\Gam=\Gam'+\Delta\der \s\uu:\rho$ follows  from $\Pi_1\dem\Gam'\der \s:  \A \to \rho$
  and $\Pi_2\dem\Delta\der\uu:\A$,
  then $\Ap(\Pi)=\Ap(\Pi_1)(\Ap(\Pi_2))$.
\item If $\Pi\dem \Gam = \Gam' +\Del\der \s[\p/\uu]: \tau$
  follows  from $\Pi'\dem \Gam'' \der \s:\tau$ and 
  $\Psi \dem \Del \der \uu:\A$, then 
  $\Ap(\Pi)=  \Ap(\Pi')[\p/\Ap(\Psi)]$.
\item If $\Pi \dem +_{\iI}\Gam_i \der \s: \mult{\sig_i}_{\iI}$ follows from $(\Pi_i \dem \Gam_i\der \s: 
  \sig_i)_{\iI}$, then $\Ap(\Pi)=\bigvee_\iI\Ap(\Pi_i )$.
\end{itemize}

\begin{example}
  Consider the following derivation $\Pi$ (remember that $\oprod$ is an abbreviation for $\prodt{\emul}{\emul}$), built upon the subderivations $\Pi_1$
  and $\Pi_2$ below:
\[
  \infer{
    \infer{
      \Pi_1 \sep
      \infer{}{\pder \pair{\z_1}{\z_2}:\mult{\oprod}} 
      \sep 
      \Pi_2}
    {\x:\mult{\emul \to \oprod} , \y:\mult{\emul \to \oprod} \der \y(\dup \dup)
      [\pair{\z_1}{\z_2}/\x\id]:\oprod}}{\der \lambda \x\y.\y (\dup \dup)
    [\pair{\z_1}{\z_2}/\x\id]: \mult{\emul \to \oprod} \to \mult{\emul \to
      \oprod} \to \oprod}
\]
\vspace{.5cm}
\[
  \Pi_1 = \infer{
      \infer{}{\y:\mult{\emul \to \oprod}\der \y:\emul \to \oprod} \sep
      \infer{}
            {\es \der \dup \dup: \emul} }
                   {\y:\mult{\emul \to \oprod} \der \y  (\dup \dup):\oprod}
                   \hspace{1em}
                   \Pi_2 = \infer{\infer{
                       \infer{}{\x:\mult{\emul \to \oprod}\der
                         \x:\emul \to \oprod} \sep \infer{}
                       {\es \der \id: \emul}}
                     {\x:\mult{\emul \to \oprod}\der \x\id: \oprod}}
                    {\x:\mult{\emul \to \oprod}\der \x\id:\mult{ \oprod}}
\]
The minimal approximant of $\Pi$ is $\lambda \x\y.\y\Omega[\pair{\z_1}{\z_2}/\x\Omega]$.
\end{example}
A simple induction on  $\meas(\Pi)$ allows to show the following:

\begin{lemma}
\label{Prop:approx}
If $\Pi \dem \Gam \der \s:\mu$ and $\s$ is in $\Pi$-nf,  then    
$\Pi \dem \Gam \der \Ap(\Pi):\mu$.
\end{lemma}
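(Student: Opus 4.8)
The plan is to argue by induction on $\meas(\Pi)$, with a case analysis on the last rule of $\Pi$ that follows exactly the inductive definition of $\Ap(\Pi)$. The base cases are immediate. If $\Pi$ ends with $(\ax)$ then $\Ap(\Pi)=\x=\s$ and the required derivation is $\Pi$ itself. The other leaf situation is an instance of $(\many)$ with an empty family of premises, typing its subject with $\emul$: here $\Ap(\Pi)=\bigvee_{\es}=\Omega$, and $\der \Omega:\emul$ holds by the extended $(\many)$ rule. This single observation is what lets every \emph{untyped} subterm of $\s$ be replaced by $\Omega$ while preserving typability, since such subterms are precisely those sitting under an empty-premise $(\many)$.

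For the rules $(\introarrow)$, $(\app)$, $(\trpair)$ and $(\trsub)$ the argument is uniform: the subjects of the immediate subderivations are again in the corresponding subderivation-normal form (the typed occurrences of a subderivation embed into those of $\Pi$, so the $\Pi$-nf hypothesis propagates), hence the \ih\ applies and types each minimal approximant of a subderivation with the same environment and object. Re-applying the very same last rule then types $\Ap(\Pi)$ with $\mu$. The only point to check is that the auxiliary pattern-typing premises of $(\introarrow)$ and $(\trsub)$, of the form $\Gam|_\p \pder \p:\A$, are untouched: they speak about the pattern $\p$ and the environment alone, not about the body or the argument being approximated, so they carry over verbatim (this uses the determinism of pattern typing, \cf\ Corollary~\ref{cor:gen-pair}).

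The genuinely non-trivial case is $(\many)$ with a non-empty index set $I$, where $\Pi$ derives $\s:\mult{\sig_i}_\iI$ from $(\Pi_i\dem\Gam_i\der \s:\sig_i)_\iI$ and $\Ap(\Pi)=\bigvee_\iI\Ap(\Pi_i)$. Two facts are needed. First, the least upper bound must exist: I would record, by the same induction, the auxiliary fact that $\Ap(\Pi)\leq \s$ whenever $\s$ is in $\Pi$-nf; then all the $\Ap(\Pi_i)$ are bounded above by $\s$, hence pairwise compatible, so the finite join $\bigvee_\iI\Ap(\Pi_i)$ is well defined. Second, by the \ih\ we have $\Gam_i\der \Ap(\Pi_i):\sig_i$ for each $\iI$; since $\Ap(\Pi_i)\leq \bigvee_\iI\Ap(\Pi_i)$, the monotonicity of typing with respect to $\leq$ (if $\Gam\der\ap:\sigma$ and $\ap\leq\bp$ then $\Gam\der\bp:\sigma$) upgrades this to $\Gam_i\der \bigvee_\iI\Ap(\Pi_i):\sig_i$. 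A final application of $(\many)$ yields $+_\iI\Gam_i\der \bigvee_\iI\Ap(\Pi_i):\mult{\sig_i}_\iI$, which is exactly $\Gam\der\Ap(\Pi):\mu$.

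The main obstacle is thus localized in the $(\many)$ case: one must pass from typings of the \emph{individual} approximants $\Ap(\Pi_i)$ to a typing of their join. The point is that, although the join is a larger \anf\ than each $\Ap(\Pi_i)$, enlarging an \anf\ along $\leq$ never destroys a typing — replacing an occurrence of $\Omega$ by an actual subterm only adds material in untyped position — so each component type survives and the $(\many)$ rule can recombine them. Everywhere else the statement is a direct, rule-by-rule reconstruction of $\Pi$ with its subjects replaced by the corresponding minimal approximants.
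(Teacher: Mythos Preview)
Your proposal is correct and follows the same approach as the paper, which simply states that the result follows by ``a simple induction on $\meas(\Pi)$'' without spelling out the cases. Your detailed handling of the $(\many)$ case --- recording the auxiliary invariant $\Ap(\Pi)\leq\s$ to ensure the join exists, and then invoking monotonicity of typing along $\leq$ to lift each $\Gam_i\der\Ap(\Pi_i):\sig_i$ to $\Gam_i\der\bigvee_{\iI}\Ap(\Pi_i):\sig_i$ --- is exactly the content hidden behind the paper's one-line remark, and both ingredients are already stated in the paper just before the lemma.
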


\subsection{From Canonical Forms to Pure Canonical Forms}
\label{sub:pure}

In this section we prove that, when a giving typing is inhabited, then
it is necessarily inhabited by a pure canonical form. This property
turns out to be essential to prove  the completeness property of our
algorithm (Theorem~\ref{t:soundness-completeness}),  since the algorithm
only builds pure canonical forms.

\lempure*

  \begin{proof}
  By induction on $\Pi \dem \Gam \der \s: \mu$.
  The only interesting case is when $\s = \s_0[\p/\s_1]$, where  $\p$ is some pair pattern.
  By construction of $\Pi$, there is a type derivation of the following form:
 \[\infer{\Gam \der \s_0 : \tau \sep 
            \Gam|_{\p} \pder \p: \mult{\prodt{\A}{\D}} \sep   
          \infer{\Del \der \s_1: \prodt{\A}{\D}}{\Del \der \s_1:\mult{\prodt{\A}{\D}}}}
        { \Gam \sm \Gam|_{\p} + \Del\der \s_0[\p/\s_1]: \tau}
\]
where the shape of the type for $\p$ comes from Corollary \ref{cor:gen-pair}.
         By the \ih\ there are pure canonical terms $\s'_0, \s'_1$ such that
         $ \Gam \der \s'_0 : \tau$ and $ \Del\der \s'_1:\prodt{\A}{\D}$.
         If  $\s'_0[\p/\s'_1]$ is a pure canonical term, then we conclude
         with a derivation of $\Gam \sm \Gam|_{\p} + \Del \der \s'_0[\p/\s'_1]: \tau $.
         If $\s'_0[\p/\s'_1]$ is not a pure canonical term,
         then necessarily $\s'_1 = \uu[\q / \vv]$, $\q$ being a pair and $\uu, \vv$ being pure canonical terms.
        Then there is necessarily  a derivation of the following form:
        \[
    \infer{\Del'\der \uu : \prodt{\A}{\D} \sep 
            \Del'|_{\q} \pder \q: \mult{\prodt{\C}{\E}} \sep   
         \infer{\Gam' \der \vv: \prodt{\C}{\E}}{\Gam' \der \vv:\mult{\prodt{\C}{\E}}} }
        { \Del= \Del' \sm \Del'|_{\q} + \Gam' \der \uu[\q / \vv]:  \prodt{\A}{\D} }
  \]

We can then  build the following derivation:
\[
\infer{\Gam \der \s'_0 : \tau \sep 
            \Gam|_{\p} \pder \p: \mult{\prodt{\A}{\D}}  \sep   
          \infer{\Del' \der \uu: \prodt{\A}{\D}}{\Del '\der \uu:\mult{\prodt{\A}{\D}}} }
        { \Gam \sm \Gam|_{\p} +\Del'\der \s'_0[\p/\uu]: \tau }
        \]
 Note that $(\Gam \sm \Gam|_{\p} +\Del')|_\q = \Del'|_\q$, since we
 can always choose $\fv{\q} \cap \Gam = \emptyset$, by $\alpha$ conversion.
 So the following derivation can be built:
\[
\infer{ \Gam \sm \Gam|_{\p} +\Del' \der \s'_0[\p/\uu] : \tau \sep 
            \Del'|_\q \pder \q: \mult{\prodt{\C}{\E}}  \sep   
          \infer{\Gam' \der \vv: \prodt{\C}{\E}}{\Gam'\der \vv:\mult{\prodt{\C}{\E}}} }
        { ( \Gam \sm \Gam|_{\p} +\Del' ) \sm \Del'|_\q + \Gam'
          \der \s'_0[\p/\uu][\q/\vv]: \tau }
        \]
Since $\Gam \sm \Gam|_{\p} \sm \Del'|_\q = \Gam \sm \Gam|_{\p}$ and $\Del' \sm \Del'|_\q + \Gam' = \Del$, the proof is given.  
\end{proof}

\subsection{The Inhabitation Algorithm}

We now show a sound and complete algorithm to solve the inhabitation
problem for System $\Pu$.  
The  algorithm is presented in
Figure~\ref{fig:inhabitation-algorithm}. As usual, in order to solve
the problem for closed terms, it is necessary to extend 
the algorithm  to open
ones, so, given an environment $\Gam$ and a type $\sigma$,
the algorithm builds the set $\K(\Gam, \sig)$ containing {\it all} the
$\anf s$ $\ap$ such that there exists a derivation
$\Pi\dem\Gam \der \ap:\sig$, with $\ap=\Ap(\Pi)$, then stops\footnote{
It is worth noticing that,  given $\Gam$ and $\sig$, the set of $\anf s$   $\ap$ such that there exists a derivation $\Pi\dem\Gam \der \ap:\sig$
is possibly infinite. However, the subset of those verifying  $\ap=\Ap(\Pi)$ is finite.This is proved in Corollary~\ref{cor:finite}.}. Thus,
our algorithm is not an extension of the classical inhabitation
algorithm for simple types~\cite{BenYellesPhd,hindley08}. In
particular, when restricted to simple types, it constructs all the
$\anf s$  inhabiting a given type, while the original algorithm
reconstructs just the {\it long $\eta$-normal forms}.
The algorithm  uses three  auxiliary predicates, namely 
\begin{itemize}

\item
$\Pa{}{\V}(\A)$, where $\V$ is a finite set of variables, contains  the pairs $(\Gam,\p)$ 
such that  (i) $\Gam\pder\p:\A$, and (ii)
$\p$ does
not contain any variable in $\V$.
\item
$\KI(\Gam,\mult{\sig_i}_{\iI})$, contains all the $\anf s$ $\ap=\bu_{\iI} \ap_i$ such that  $\Gam =  +_{\iI} \Gam_i$,
$\ap_i\in\K(\Gam_i,\sigma_i)$ for all $i\in I$, and $\uparrow_{\iI}\ap_i$.
\item
$\LK{\bp}{\Del}(\Gam, \sig) \donne \tau$  contains all the $\anf s$ $\ap$ such that 
$\bp$ is a head subterm of $\ap$, and such that 
if $\bp \in \K(\Delta,\sigma)$ then 
$\ap \in \K(\Gamma+\Delta,\tau)$. As a particular case, notice that
$\bp \in \LK{\bp}{\Del}(\es, \sig) \donne \sig$, for all $\bp \in {\cal L}$,
environment $\Del$ and type $\sig$.
\end{itemize}
\begin{figure}[h!]
\begin{framed}
\begin{center}
$ \begin{array}{c}
\infer{\x \notin \V}
      { (\x:\A,   \x) \in \Pa{}{\V}(\A) }\ (\Varp) \\\\
\infer{(\Gam ,   \p) \in \Pa{}{\V}(\A) \sep  
 (\Delta ,   \q) \in \Pa{}{\V}(\D) \sep  
 \p \# \q}
{( \Gam + \Delta,  \pair{\p}{\q}  ) \in   \Pa{}{\V}(\mult{\prodt{\A}{\D}})}\ (\Pairp) \\\\
\infer
   {\ap \in \K(\Gam +   \Delta ,   \tau)  \sep  
   (\Delta,  \p)  \in  \Pa{}{\dom{\Gam}}(\A)}
{ \lambda \p. \ap \in \K(\Gam  ,  \A \arrow \tau)}\ (\Abs) \sep 
\infer{\ap \in \KI(\Gam, \A) \sep 
         \bp \in \KI(\Del, \D)
      }{ \pair{\ap}{\bp} \in \K(\Gam+\Delta,\prodt{\A}{\D})}\ (\Prod) \\\\
\infer{ (\ap_i \in \K( \Gam_i,  \sig_i))_{\iI} \sep \uparrow_{\iI}\ap_i }
         {\bigvee_{\iI} \ap_i \in  \KI(+_{ \iI} \Gam_i,  \mult{\sig_i}_{\iI})}\ (\Unionk) 
   \\ \\

\infer{ \ap \in \LK{\x}{\x: \mult{\sig}}( \Gam ,  \sig) \donne   \tau }
         { \ap  \in \K( \Gam +  \x: \mult{\sig}, \tau)}\ (\Head)
\quadd
\infer {\sig = \tau}{\ap \in \LK{\ap}{\Del}(\es, \sig) \donne  \tau}\ (\Final) \\\\
\infer{   \bp \in \KI(\Gam, \A) \sep 
            \ap \in \LK{\cp \bp}{\Del + \Gam }(\Lam, \sig) \donne \tau  }
            {  \ap\in \LK{\cp}{\Del}( \Gam +\Lam, \A \to \sig) \donne  \tau   }\ (\Prefix)\\\\
\infer{ \cp \in \LK{\x}{\x:\mult{\sig}}(\Gam, \sig) \donne \fin{\sig}(*) \minisep 
                    (\Del, \pair{\p}{\q}) \in \Pa{}{\dom{\Gam + \Lam  + ( \x: \mult{\sig}) }}(\mult{\fin{\sig}}) \minisep 
                    \bp \in \K( \Del +  \Lam, \tau)  }
         {\bp[\pair{\p}{\q}/\cp] \in \K( \Gam+\Lam  +  \x: \mult{\sig}, \tau)} (\Subs) \\\\
\end{array}$ 
(*) where the operator \fin{} on types is defined as follows:

$\begin{array}{lll@{\hspace{.8cm}}lll}
\fin{\alpha} &  := & \alpha \\ 
\fin{\PT} &  := & \PT \\
\fin{\A \to \tau} & := & \fin{\tau} & \\
\end{array}$
\end{center}
\caption{The inhabitation algorithm}
                    
\label{fig:inhabitation-algorithm}
\end{framed}
\end{figure}
Note that a special case of rule $(\Unionk)$
with $I = \es$
is $\infer{\phantom{\Omega \in}  }{\Omega \in \KI(\emptyset, \emul)}$.
Note also that the algorithm has different kinds of non-deterministic
behaviours, \ie\ different choices of rules can produce different
results.  Indeed, given an input $(\Gam, \A \to\sig)$, the algorithm
may apply a rule like $(\Abs)$ in order to decrease the type $\sig$,
or a rule like $(\Head)$ in order to decrease the environment $\Gam$.
Moreover, every rule $(R)$ which is based on some decomposition of the
environment and/or the type, like $(\Subs)$, admits different
applications.  In what follows we illustrate the non-deterministic
behaviour of the algorithm.  For that, we represent a {\bf run of the
  algorithm} as a tree whose nodes are labeled with the name of the
rule being applied.

\begin{example}\label{Exe1}
We consider different inputs of the form $(\es, \sig)$, for different
types $\sig$. For every such input, we give an output and the corresponding run.
\begin{enumerate}
\item\label{primo}
$\sigma =\mult{\mult{\alpha} \to \alpha}\arrow \mult{\alpha} \to \alpha$.
\begin{enumerate}
\item output: $\lambda \x\y.\x\y$, run: \\
  $\Abs(\Abs(\Head(\Prefix(\Unionk(\Head(\Final)), \Final)),\Varp),\Varp)$.
\item output: $\lambda \x.\x$, run: \\
  $\Abs(\Head(\Final),\Varp)$.
\end{enumerate}
\item $\sigma =\mult{\emul \to \alpha}\arrow \alpha$. output:  $\lambda\x.\x\Omega$, run: $\Abs(\Head(\Prefix(\Unionk, \Final)),\Varp)$.
\item $\sigma =\mult{\mult{\oprod} \to \oprod,\oprod}\arrow \oprod$.
\begin{enumerate}
\item output: $\lambda \x.\x\x$, run:
$\Abs(\Head(\Prefix(\Unionk(\Head(\Final)), \Final)),\Varp)$. 
\item Explicit substitutions may be used to consume some, or all, the resources in $\mult{\mult{\oprod} \to \oprod,\oprod}$.
output: $\lambda \x.\x[\pair{\y}{\z}/\x\pair{\Omega}{\Omega}]$, run:\\
{\small 
$\Abs(\Subs(\Prefix(\Unionk(\Prod),\Final),\Pairp(\Varp,\Varp),\Head(\Final)), \Varp)$.
}
\item There are four additional runs, producing the following outputs:

$\lambda \x.\x\pair{\Omega}{\Omega}[\pair{\y}{\z}/\x]$, 

$\lambda \x.\pair{\Omega}{\Omega}[\pair{\y}{\z}/\x\x]$, 

$\lambda \x.\pair{\Omega}{\Omega}[\pair{\y}{\z}/\x][\pair{\w}{\tt{s}}/\x\pair{\Omega}{\Omega}]$,
 
$\lambda \x.\pair{\Omega}{\Omega}[\pair{\y}{\z}/\x\pair{\Omega}{\Omega}] [\pair {\w}{\tt{s}}   /\x]$.
\end{enumerate}
\end{enumerate}
\end{example}

Along the recursive calls of the inhabitation algorithm, 
the parameters (type and/or environment)
decrease  strictly, for a suitable notion of measure,  
so that every run is finite:


\begin{restatable}{theorem}{thtermination}
\label{th:termination}
The inhabitation algorithm terminates.
\end{restatable} 
\begin{proof}
See Subsection~\ref{sub:termination}.
\end{proof}


We now prove soundness and completeness of our inhabitation algorithm.

\begin{lemma}
\label{Lem:main}
$\ap\in\K(\Gamma,\sigma)$ $\Leftrightarrow$
$\exists \Pi \dem \Gam \der \ap:\sigma$ such that $\ap=\Ap(\Pi)$.
\end{lemma}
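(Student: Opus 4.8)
The plan is to prove the two directions of Lemma~\ref{Lem:main} by a simultaneous induction covering all three auxiliary predicates $\K$, $\KI$ and $\LK{\bp}{\Del}(\cdot,\cdot)\donne\cdot$, since they are mutually recursive. The statement for $\K$ is the one displayed; for $\KI$ the analogous claim is that $\ap\in\KI(\Gam,\A)$ iff there is $\Pi\dem\Gam\der\ap:\A$ with $\ap=\Ap(\Pi)$; and for $\LK{\bp}{\Del}(\Gam,\sig)\donne\tau$ the claim, matching its intended meaning, is that $\ap\in\LK{\bp}{\Del}(\Gam,\sig)\donne\tau$ iff $\bp$ is a head subterm of $\ap$ and for every derivation witnessing $\bp=\Ap(\Psi)$ with $\Psi\dem\Del\der\bp:\sig$ there is a derivation $\Pi\dem\Gam+\Del\der\ap:\tau$ with $\ap=\Ap(\Pi)$. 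The induction is on the termination measure supplied by Theorem~\ref{th:termination}, which guarantees the recursive calls strictly decrease, so the induction is well founded.

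For the soundness direction ($\Rightarrow$), I would proceed by induction on the derivation of membership in the algorithm, i.e.\ on the run tree, doing a case analysis on the last rule applied. Each algorithmic rule is designed to mirror a typing rule of $\Pu$ (Figure~\ref{fig:system-pattern}): rule $(\Abs)$ corresponds to $(\introarrow)$, using the pattern-typing facts recorded in $\Pa{}{\V}$ (which are exactly the $\Gam\pder\p:\A$ judgements, correct by Corollary~\ref{cor:gen-pair} and the side conditions in $(\Varp)$, $(\Pairp)$); rule $(\Prod)$ corresponds to $(\trpair)$; rule $(\Unionk)$ to $(\many)$, where the condition $\uparrow_{\iI}\ap_i$ ensures the common subject $\bu_\iI\ap_i$ can carry all the $\sig_i$; rule $(\Subs)$ to $(\trsub)$; and the head-context rules $(\Head)$, $(\Final)$, $(\Prefix)$ build a spine of applications corresponding to iterated uses of $(\app)$ on top of an axiom $(\ax)$ for the head variable. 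In each case I assemble the $\Pu$-derivation from the subderivations given by the induction hypothesis, and I must check that the assembled derivation $\Pi$ satisfies $\ap=\Ap(\Pi)$; this is where the clauses of the definition of $\Ap$ must be matched step by step against the algorithmic rule, and where the fact (stated before Lemma~\ref{Prop:approx}) that $\der\Omega:\emul$ is the empty-premise case of $(\many)$ handles the $\Omega$'s introduced by $I=\es$ in $(\Unionk)$.

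For the completeness direction ($\Leftarrow$), I would start from a derivation $\Pi\dem\Gam\der\ap:\sig$ with $\ap=\Ap(\Pi)$. Because $\ap$ is an $\anf$, it contains no redexes, so $\ap$ is trivially in $\Pi$-nf and Lemma~\ref{Prop:approx} applies. I would argue by induction on $\meas(\Pi)$, inspecting the syntactic shape of the $\anf$ $\ap$ and correspondingly the last rule of $\Pi$. When $\ap$ is an abstraction, a pair, or an explicit matching, the last rule of $\Pi$ is forced ($(\introarrow)$, $(\trpair)$, or $(\trsub)$ respectively), the environment/type decomposes as in the corresponding algorithmic rule, and the induction hypothesis on the immediate subderivations yields membership of the components, from which I reconstruct the algorithmic run via $(\Abs)$, $(\Prod)$, or $(\Subs)$. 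The genuinely delicate case is when $\ap$ is an application spine $\x\,\bp_1\cdots\bp_k$ headed by a variable: here I must peel off the spine using the $\LK{}{}$ predicate, showing by an auxiliary induction on the spine length that the chain $(\Head)$, $(\Prefix),\ldots,(\Prefix)$, $(\Final)$ reproduces exactly the nested $(\app)$ structure of $\Pi$, with each argument handled by the $\KI$ case via $(\Unionk)$.

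The main obstacle will be this last case: keeping the bookkeeping of the head-subterm predicate $\LK{\bp}{\Del}(\Gam,\sig)\donne\tau$ aligned with the way $(\app)$ splits environments as $\Gam+\Del$ and consumes one multiset argument $\A$ at a time while traversing the spine from the head outward. I expect to need a precise separate lemma characterizing exactly which triples lie in $\LK{\bp}{\Del}(\Gam,\sig)\donne\tau$ in terms of derivability, matching the informal description of the predicate, and to prove that lemma by induction on the spine before combining everything through rule $(\Head)$. A secondary subtlety throughout is verifying the $\ap=\Ap(\Pi)$ constraint in both directions: the algorithm only ever produces minimal approximants, so in the completeness direction I must confirm that no non-minimal typing of an $\anf$ can arise under the hypothesis $\ap=\Ap(\Pi)$, which is exactly what rules out, for instance, assigning a nonempty multiset to a subterm sitting under an $\Omega$.
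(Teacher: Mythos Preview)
Your proposal is correct and follows essentially the same architecture as the paper's proof: mutual induction over the three predicates $\K$, $\KI$, $\LK{}{}$, with soundness by induction on the algorithm's run and completeness by structural analysis of the $\anf$.

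Two differences are worth knowing. First, for the completeness direction in the spine case the paper does not attempt a full iff characterization of $\LK{\bp}{\Del}(\Gam,\sig)\donne\tau$. Instead it introduces the notion of a \emph{left-subtree} of a derivation (the subderivation reached by repeatedly taking the major premise of $(\app)$) and proves, as the auxiliary statement, the inclusion
\[
\LK{\ap}{\Del+\Gam'}(\Theta,\sigma)\donne\rho \ \subseteq\ \LK{\bp}{\Del}(\Theta+\Gam',\tau)\donne\rho
\]
whenever $\Sigma\dem\Del\der\bp:\tau$ is a left-subtree of $\Pi\dem\Gam'+\Del\der\ap:\sigma$ (both subjects $\cal L$-$\anf$s, with $\bp=\Ap(\Sigma)$ and $\ap=\Ap(\Pi)$). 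Combined with the trivial $(\Final)$ membership $\ap\in\LK{\ap}{\Gam}(\es,\sigma)\donne\sigma$ and rule $(\Head)$, this yields completeness for spines directly. Your proposed iff for $\LK{}{}$ is slightly too strong in the backward direction: if no $\Psi\dem\Del\der\bp:\sig$ exists the universal hypothesis is vacuous, yet membership in $\LK{\bp}{\Del}(\Gam,\sig)\donne\tau$ certainly does not hold for arbitrary $\Gam,\tau$. The inclusion formulation is exactly the ``precise separate lemma'' you anticipate, and it sidesteps this issue.

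Second, do not invoke Theorem~\ref{th:termination} to justify well-foundedness of the soundness induction: that theorem's proof already appeals to the $(\Rightarrow)$ direction of the present lemma. Lemma~\ref{lem:run-termination} (finiteness of a single run) is what you need, and it is proved independently.
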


\begin{proof} 
  The ``only if'' part is proved by induction on the rules in
  Figure~\ref{fig:inhabitation-algorithm}, and the ``if'' part is
  proved by induction on the definition of $\Ap(\Pi)$ (see
    Section~\ref{sub:termination} for full details).  In both parts, additional
  statements concerning the predicates of the inhabitation algorithm
  other than $\K$ are required, in order to strengthen the inductive
  hypothesis.
\end{proof}

\begin{theorem} [Soundness and Completeness]
\label{t:soundness-completeness} \mbox{}
\begin{enumerate}
\item 
\label{t:soundness}
If $\ap \in \K(\Gam, \sig)$  then, for all $\s$ such that $\ap\leq\s$,  $\Gam  \der \s:\sig$.

\item
 \label{t:completeness}
If $\Pi\dem\Gam \der \s:\sig$ then there exists  $\Pi'\dem\Gam \der \s':\sig$ such that  $\s'$ is in $\Pi'$-nf, 
and  $\Ap(\Pi') \in \K(\Gam,\sig)$.

\end{enumerate}
\end{theorem}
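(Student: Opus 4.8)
The plan is to obtain both statements as consequences of the key equivalence Lemma~\ref{Lem:main}, which links membership in $\K$ with the existence of a typing derivation whose minimal approximant is the given $\anf$; the surrounding work is only to produce, in the completeness direction, a derivation in the right normal form so that the approximant can be extracted.

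For soundness (\ref{t:soundness}) I would argue directly. Assuming $\ap \in \K(\Gam,\sig)$, the forward implication of Lemma~\ref{Lem:main} supplies a derivation $\Pi \dem \Gam \der \ap:\sig$ (with $\ap = \Ap(\Pi)$, though only $\Gam \der \ap:\sig$ is needed). For an arbitrary $\s$ with $\ap \leq \s$, I would invoke the monotonicity of typing with respect to the approximation order recorded just before Lemma~\ref{Prop:approx}, namely that $\Gam \der \ap:\sig$ together with $\ap \leq \s$ yields $\Gam \der \s:\sig$. This already establishes (\ref{t:soundness}).

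For completeness (\ref{t:completeness}) the idea is to reduce $\s$ to a $\Pi'$-normal form and then read off its minimal approximant. Starting from $\Pi \dem \Gam \der \s:\sig$, I would run the reduction strategy ${\cal ST}$ that always contracts a redex sitting in a typed occurrence of the current derivation, exactly as in the only-if part of Theorem~\ref{l:characterization-canonical}. Weighted subject reduction, Lemma~\ref{lem:red:exp}(\ref{lem:reduction}), makes $\meas$ drop strictly at each such contraction, so the strategy terminates (appealing to Lemma~\ref{lem:inf} as there), reaching a term $\s'$ with $\s \Rewn{} \s'$; iterating subject reduction along the way produces $\Pi' \dem \Gam \der \s':\sig$ over the same environment and type. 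Since ${\cal ST}$ halts precisely when no typed occurrence of $\Pi'$ contains a redex, $\s'$ is in $\Pi'$-nf.

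It then remains to land $\Ap(\Pi')$ inside $\K(\Gam,\sig)$. As $\s'$ is in $\Pi'$-nf, Lemma~\ref{Prop:approx} provides a derivation of $\Gam \der \Ap(\Pi'):\sig$ whose subject is the minimal approximant itself; because every untyped subterm has by then been replaced by $\Omega$, this derivation is its own minimal approximant, so the backward implication of Lemma~\ref{Lem:main} applies and delivers $\Ap(\Pi') \in \K(\Gam,\sig)$, which is what (\ref{t:completeness}) demands. The routine steps are the two uses of Lemma~\ref{Lem:main} and of monotonicity; the one genuinely delicate point is the termination and correctness of the typed-redex strategy, which I would carry out exactly as for Theorem~\ref{l:characterization-canonical}, making sure the measure decreases strictly at every (necessarily typed) contracted redex and that the limit $\s'$ really has no typed redex, so that $\Ap(\Pi')$ is well defined and idempotent under the minimal-approximant construction.
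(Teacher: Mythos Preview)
Your soundness argument is correct and coincides with the paper's.

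For completeness, however, there is a genuine gap: you skip Lemma~\ref{lem:pure}. After running the typed-redex strategy, the resulting $\s'$ is indeed a canonical form in $\Pi'$-nf, but it need not be a \emph{pure} canonical form. The $\anf$ grammar---which delimits the outputs of the inhabitation algorithm and hence the elements of $\K(\Gam,\sig)$---forbids nested explicit matchings: in ${\cal N}[\pair{\p}{\q}/{\cal L}]$ the argument ${\cal L}$ can only be $\x$ or ${\cal L}\ap$. If $\s'$ contains a configuration such as $\uu[\pair{\p}{\q}/\x[\pair{\p'}{\q'}/\y]]$, which is a perfectly good ${\cal J}$-canonical form with no redex anywhere, then $\Ap(\Pi')$ inherits this nested shape and is \emph{not} an $\anf$. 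Consequently the backward direction of Lemma~\ref{Lem:main}, whose proof proceeds by induction on the $\anf$ grammar, does not apply to $\Ap(\Pi')$.

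The paper closes this gap by first reducing to a canonical form $\s_0$ (via Theorem~\ref{l:characterization-canonical} and subject reduction) and then invoking Lemma~\ref{lem:pure} to replace $\s_0$ by a pure canonical form $\s'$ with the same typing; the paper even remarks, just before Section~\ref{sub:pure}, that this step is ``essential'' precisely because the algorithm only builds pure-shaped approximants. Your sentence ``because every untyped subterm has by then been replaced by $\Omega$, this derivation is its own minimal approximant'' is addressing a different issue (idempotency of $\Ap$) and does not help: the obstruction is the nesting of matchings in \emph{typed} positions, not the presence of untyped subterms.
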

\begin{proof}
Soundness follows from Lemma~\ref{Lem:main} ($\Rightarrow$) and the
fact that $\Gamma\der \ap:\sigma$ and $\ap\leq\s$ imply $\Gamma\der
\s:\sigma$. 
Concerning completeness: Theorem \ref{l:characterization-canonical}
and Lemma~\ref{lem:red:exp}(\ref{lem:reduction})
ensures that
$\s$ has a canonical form $\s_0$
such that $\Pi_0\dem\Gam \der \s_0:\sig$. Then,  
  Lemma~\ref{lem:pure} guarantees the existence of a pure canonical form $\s'$ such that 
  $\Pi'\dem\Gam \der \s':\sig$ and $\s'$ is in $\Pi'$-nf. Then 
  Lemma~\ref{Prop:approx} and Lemma~\ref{Lem:main} ($\Leftarrow$) allow us to conclude.
\end{proof}

\subsection{Properties of the Inhabitation Algorithm}
\label{sub:termination}

We prove several properties of the inhabitation algorithm,
  namely, termination, soundness and completeness.

\paragraph{Termination} Being the inhabitation algorithm non
deterministic, proving its termination means to prove both that a
single run terminates and that every input generates a finite number
of runs. We will prove these two properties separately.

First, let us define the following \deft{measure} on types and environments:
\[
   \begin{array}{lll}
   \ccount{\alpha} & := & 1 \\
  \ccount{\mult{\sig_i}_{\iI}} & := & 1 +_{\iI} \ccount{\sig_i}  \\
  \ccount{\prodt{\A}{\D}} & := & \ccount{\A} + \ccount{\D} +1\\
  \ccount{\A \to \sig } & := & \ccount{\A} + \ccount{\sig} + 1\\
   \ccount{\Gam} & := & \sum_{\x \in \dom{\Gam}} \ccount{\Gam(\x)} \\
    \end{array}
\]

The measures are extended to predicates in the following way:
\[
   \begin{array}{lll}
\ccount{\K(\Gam, \sig)} & : =& \ccount{\Gam} + \ccount{\sig}\\
\ccount{\LK{\ap}{\Del}(\Gam, \sig) \donne  \tau} & := & \ccount{\Gam} + \ccount{\sig}\\
  \ccount{\KI(\Gam, \A)} & := & \ccount{\Gam} + \ccount{\A} \\
  \ccount{\Pa{}{\V}(\A) } & := & \ccount{\A} \\
    \end{array}
\]

Notice that $\ccount{\Gam} \leq \ccount{\Gam + \Del}$, for
  any $\Del$. Also,  $\ccount{\Gam + \Del} \leq \ccount{\Gam} +
  \ccount{\Del}$, thus \eg\ $ \ccount{\x:\mult{\alpha} + \x:\mult{\alpha} } =
  \ccount{\x:\mult{\alpha, \alpha}} = 3 \leq \ccount{\x:\mult{\alpha}} + \ccount{\x:\mult{\alpha}} = 4$. Notice also that
  $\ccount{\Delta_1} < \ccount{\Del_2}$ does not imply
  $\ccount{\Del_1 + \Lam} < \ccount{\Del_2+\Lam}$, \eg\
 when  $\Del_1 = \x:\mult{\alpha}$, $\Del_2 = \y:\mult{\alpha,\alpha}$
 and $\Lam = \y: \mult{\alpha}$. However, as a particular useful case, 
 if $ \ccount{\Delta}+1 <  \ccount{\x:\A}$,
 then $\ccount{\Del + \Lam} < \ccount{\x:\A+\Lam}$.
 Indeed, if $\x \notin \dom{\Lam}$,
 then  $\ccount{\Del + \Lam}\leq   \ccount{\Del} + \ccount{\Lam}
 < \ccount{\x:\A}+ \ccount{\Lam} = \ccount{\x:\A+\Lam}$;
  otherwise,  $\ccount{\Del + \Lam}\leq   \ccount{\Del} + \ccount{\Lam}
  < 
  (\ccount{\x:\A} -1)  + \ccount{\Lam}  = \ccount{\x:\A+\Lam}$.

The following property follows directly.

\begin{property}\label{prop:ccount}
  Let
 $(\Gam,  \p)  \in  \Pa{}{\dom{\V}}(\A)$.
      Then   $\ccount{\Gam}\leq \ccount{\A}$.
      Moreover,  $\p=\pair{\p_{1}}{ \p_{2}}$ implies $\ccount{\Gam} +1 < \ccount{\A}$.
      \end{property}

We can now prove:

\begin{lemma}\label{lem:run-termination}
Every run of the algorithm terminates.
\end{lemma}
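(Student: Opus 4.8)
The plan is to exhibit a measure that strictly decreases along the premises of every rule, and then conclude that a single run, being a finitely-branching tree labelled by a well-founded measure, must be finite. Concretely, I would prove the following claim: for each rule of Figure~\ref{fig:inhabitation-algorithm}, the measure $\ccount{\cdot}$ of the predicate occurring in the conclusion is strictly greater than the measure of every predicate occurring in a premise. Side conditions such as $\x \notin \V$, $\p \# \q$, $\sig = \tau$ and the existence of least upper bounds carry no recursive call and are simply ignored, so the rules $(\Varp)$ and $(\Final)$ are base cases with nothing to check.

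For the remaining rules I would proceed case by case, the easy ones relying only on the two monotonicity facts $\ccount{\Gam} \leq \ccount{\Gam + \Del}$ and $\ccount{\Gam + \Del} \leq \ccount{\Gam} + \ccount{\Del}$ together with the strictly positive constants contributed by the type constructors (every type and multiset type has measure $\geq 1$). For $(\Pairp)$ this is immediate, since $\ccount{\mult{\prodt{\A}{\D}}} = \ccount{\A} + \ccount{\D} + 2$ strictly dominates both $\ccount{\A}$ and $\ccount{\D}$; the rules $(\Prod)$, $(\Unionk)$ and $(\Prefix)$ are handled likewise, using only monotonicity and the extra constants. For $(\Abs)$ I would use Property~\ref{prop:ccount} to get $\ccount{\Del} \leq \ccount{\A}$, whence $\ccount{\Gam + \Del} + \ccount{\tau} \leq \ccount{\Gam} + \ccount{\A} + \ccount{\tau} < \ccount{\Gam} + \ccount{\A} + \ccount{\tau} + 1 = \ccount{\K(\Gam, \A \to \tau)}$. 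For $(\Head)$ the key observation is $\ccount{\Gam + \x:\mult{\sig}} \geq \ccount{\Gam} + \ccount{\sig}$, so that adding the strictly positive $\ccount{\tau}$ yields the required strict decrease.

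The hard part will be the rule $(\Subs)$, and more precisely its third premise $\bp \in \K(\Del + \Lam, \tau)$, for which I must establish $\ccount{\Del + \Lam} < \ccount{\Gam + \Lam + \x:\mult{\sig}}$. Unlike all the other cases, here the premise environment is not manifestly smaller than the one in the conclusion, and this is exactly where the \emph{strict} part of Property~\ref{prop:ccount} enters: because $(\Del, \pair{\p}{\q})$ is built with a genuine pair pattern, we obtain $\ccount{\Del} + 1 < \ccount{\mult{\fin{\sig}}} = 1 + \ccount{\fin{\sig}} \leq 1 + \ccount{\sig} = \ccount{\x:\mult{\sig}}$, using also that the operator $\fin{}$ only erases leading arrows, so $\ccount{\fin{\sig}} \leq \ccount{\sig}$. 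The observation recorded just before Property~\ref{prop:ccount} then applies, with the common part $\Gam + \Lam$ added on both sides, giving $\ccount{\Del + \Gam + \Lam} < \ccount{\x:\mult{\sig} + \Gam + \Lam} = \ccount{\Gam + \Lam + \x:\mult{\sig}}$, and monotonicity $\ccount{\Del + \Lam} \leq \ccount{\Del + \Gam + \Lam}$ concludes. The first two premises of $(\Subs)$ are routine, the first because $\ccount{\Gam + \Lam + \x:\mult{\sig}} \geq \ccount{\Gam} + \ccount{\sig}$ and the second again using $\ccount{\fin{\sig}} \leq \ccount{\sig}$.

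Once all these strict decreases are in place, termination of a single run is immediate: every node of the run tree is labelled by the measure of its goal predicate, this measure strictly exceeds the measure of each child and lies in $\mathbb{N}$, and the branching is finite; hence the tree has bounded height and finite width, so it is finite. I expect the only genuine difficulty to be isolated in $(\Subs)$, precisely because it is the unique rule where the premise can enlarge the left-hand environment, and the strict pair-pattern inequality of Property~\ref{prop:ccount} is exactly what compensates for this.
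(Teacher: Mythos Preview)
Your proposal is correct and follows essentially the same approach as the paper: both exhibit the measure $\ccount{\cdot}$, verify that it strictly decreases from conclusion to every premise (with the only delicate cases being $(\Abs)$ and $(\Subs)$, handled via Property~\ref{prop:ccount}), and conclude by finite branching. Your treatment is in fact slightly more explicit than the paper's, which dismisses all rules other than $(\Abs)$ and $(\Subs)$ as uninteresting, and your handling of the third premise of $(\Subs)$---applying the observation preceding Property~\ref{prop:ccount} with the common part $\Gam+\Lam$ and then monotonicity---is a harmless variant of the paper's argument, which adds only $\Lam$ first and then uses monotonicity in $\Gam$.
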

\begin{proof}
  We associate a tree $\tree$ to each call of the algorithm, where the
  nodes are labeled with elements in the set $\{\K(\_
  ,\_),\KI(\_,\_),\LK{\_}{\_}(\_ ,\_) \donne \_ , \Pa{}{\_}(\_) \}$.
  A node $n'$ is a son of $n$ iff there exists some instance of a rule
  having $n$ as conclusion and $n'$ as premise. Thus, a run of the
  algorithm is encoded in the tree $\tree$, which turns to be
  finitely branching.  We now prove that the measure $\ccount{\_}$
  strictly decreases along all the branches of $\tree$, so that every
  branch has finite depth. We proceed by induction on the rules of the
  algorithm. The only interesting cases are rules $(\Abs)$ and
  $(\Subs)$.

  \begin{itemize}
    \item Consider rule $(\Abs)$, with conclusion $\lambda \p. \ap \in
      \K(\Gam , \A \arrow \tau)$ and premises  $\ap \in \K(\Gam + \Delta , \tau)
      $ and $(\Delta, \p) \in
      \Pa{}{\dom{\Gam}}(\A)$. By Property~\ref{prop:ccount}, $\ccount{\Delta} \leq
      \ccount{\A}$, so that 
      $\ccount{\K(\Gam + \Delta , \tau) } 
  \leq 
  \ccount{\Gam}
  +\ccount{\Delta}+\ccount{\tau} \leq \ccount{\Gam}
  +\ccount{\A}+\ccount{\tau}   <    
  \ccount{\K(\Gamma, \A \arrow \tau)}$
  and $\ccount{\Pa{}{\dom{\Gam}}(\A)} =
          \ccount{\A}  <  \ccount{\K(\Gamma, \A \arrow \tau)}$.
 \item Consider  rule $ (\Subs)$,
  with conclusion $\bp[\pair{\p}{\q}/\cp] \in \K( \Gam+\Lam +
  \x: \mult{\sig}, \tau)$ and premises $\cp \in
  \LK{\x}{\x:\mult{\sig}}(\Gam, \sig) \donne \fin{\sig}$, $(\Del,
  \pair{\p}{\q}) \in \Pa{}{\dom{\Gam + \Lam + ( \x: \mult{\sig})
  }}(\mult{\fin{\sig}})$ and $\bp \in \K( \Del + \Lam, \tau)$.
  Clearly $\ccount{ \LK{\x}{\x:\mult{\sig}}(\Gam, \sig) \donne
    \fin{\sig}} = \ccount{\Gam} + \ccount{\sig} <
   \ccount{\Gam} + \ccount{\Lam} + \ccount{\mult{\sig}} + \ccount{\tau}  = \ccount{\K( \Gam+\Lam + \x: \mult{\sig}, \tau)}$.
  Also, $\ccount{\Pa{}{\dom{\Gam + \Lam + ( \x: \mult{\sig})
      }}(\mult{\fin{\sig}})} = \ccount{\mult{\fin{\sig}}} \leq
     \ccount{\mult{\sig}}
    < \ccount{\K(
      \Gam+\Lam + \x: \mult{\sig}, \tau)}$.
    Finally, by Property~\ref{prop:ccount},
  $\ccount{\Del} +1 < \ccount{
    \mult{\fin{\sig} }} \leq \ccount{\mult{\sig}} = \ccount{\x:\mult{\sig}}$.
  So $\ccount{\K( \Del
    + \Lam, \tau)} = \ccount{\Del + \Lam} + \ccount {\tau}<
  \ccount{\x:\mult{\sig} + \Lam} + \ccount {\tau} \leq  \ccount{\Gam+ \x:\mult{\sig} + \Lam} + \ccount {\tau} =\ccount{\K(
    \Gam+\Lam + \x: \mult{\sig}, \tau) } $.
    \end{itemize}
So every branch has finite depth. Hence, $\tree$ is finite by
K\"onig's Lemma, \ie\ the algorithm terminates. 
\end{proof}

In order to complete the proof of termination we need to show that the 
number of different run of the algorithm on any  given input  is finite.

Let $\Pi \dem \Gam \der \ap:\sig$, where, by
  $\alpha$-conversion, we assume that $\fv{\ap} \cap \bv{\ap} =
  \emptyset$. We write $||\ap||^\Pi_\x$ (resp $|\ap|^\Pi_\x$) to
  denote 
the number of free (resp.  bound) occurrences of
  $\x$ in $\ap$ which are typed in $\Pi$.
The following property holds\footnote{Tighter upper bounds than those provided below may be found, but this is inessential here.}.
\begin{property}\label{prop:times}
Let $\ap$ be an approximate normal form.  Let $\Pi \dem \Gam \der
\ap:\sig$. Then, for every variable $\x$ occurring in $\ap$
we have $||\ap||^\Pi_\x\leq \ccount{\Gam(\x)}$ and $|\ap|^\Pi_\x \leq \ccount{\Gam} + \ccount{\sig}$.
\end{property}

\begin{proof} 
  If $\x \in \fv{\ap}$, then $\x:\multiset{\sig_i}_{\iI} \in \Gamma$,
  for some non-empty set $I$, and since every axiom corresponds to a free
  occurrence of $\x$ in  $\ap$ which is typed in $\Pi$, then
 the number of such occurrences is exactly the
  cardinality of $I$, which is trivially smaller than $\ccount{\Gam(\x)}$.

Let $\x \in \bv{\ap}$. The proof is by induction on $\Pi$.
  \begin{itemize}
    \item Let the last rule of $\Pi$ be $(\introarrow)$, with conclusion
  $\Pi\dem \Gam'\sm \Gam'|_{\p} \der \lambda \p.\bp:\A \arrow \tau$ and
      premises $\Pi'\dem\Gam' \der \bp: \tau$ and $\Gam'|_{\p}\pder \p:\A$, where  $\Gam = \Gam'\sm \Gam'|_{\p}$
      and $\sig = \A \arrow \tau$.
      Since  $\x$ is bound in $\lambda\p.\bp$, then either
      $\x$  is bound in $\bp$ or $\x$ occurs in the pattern
  $\p$. If $\x$ is bound in $\bp$, then the
        proof follows by induction. Otherwise,  by Lemma~\ref{l:patterns},
      $\Gam'|_{\x}\pder \x:\A^{\p}_{\x}$,
  \ie\ $\Gam'(\x)=\A^{\p}_{\x}$.  Then  the number of free occurrences
  of $\x$ in $\bp$ typed in $\Pi'$ is
  $||\bp||^{\Pi'}_\x\leq\ccount{\A^{\p}_{\x}}$, so the number of its bound
  occurrences in $\lambda \p.\bp$ typed in $\Pi$ is the same.  Since
  $\ccount{\A^{\p}_{\x}} \leq \ccount{\A}$ and $\ccount{\A} \leq \ccount{\A\arrow \tau} + \ccount{\Gam}$,
 then we are done. 

\item Let the last rule of $\Pi$ be $(\trsub)$, with conclusion
  $\Gam'\sm \Gam'|_{\pair{\p}{\q}}+\Delta \der
  \bp[\pair{\p}{\q}/\cp]:\sig$ and premises $\Pi_{\bp}\dem
  \Gam'\der\bp: \sig$, $\Gam'|_{\pair{\p}{\q}}\der
  \pair{\p}{\q}: \mult{\prodt{\A}{\D}}$, and $\Pi_{ \cp}\dem \Delta \der
  \cp: \mult{\prodt{\A}{\D}}$, where $\ap= \bp[\pair{\p}{\q}/\cp]$ and $\Gam = \Gam'\sm
    \Gam'|_{\pair{\p}{\q}}+\Delta$.  Since $\x$ is bound in
  $\bp[\pair{\p}{\q}/\cp]$, then either $\x$ is bound
  in $\bp$ or $\cp$, or $\x$ occurs in $\pair{\p}{\q}$
  and so $\x$ is free in $\bp$.  If $\x$ is bound in
  $\bp$ or $\cp$, then the proof follows by induction.  Let $\x$ occur
  in $\pair{\p}{\q}$.  Since $\x$ occurs free in $\bp$, by the \ih,
  $||\bp||^{\Pi_\bp}_\x\leq\ccount{\Gamma'(\x)}$.  Notice that,
  by definition of approximant, $\cp$ must be an ${\cal L}$
  approximate normal form, so that let
  $\cp=\y\ap_{1}...\ap_{n}$ ($n\geq 0$). The derivation $\Pi_{ \cp}$
  has necessarily been obtained by applying rule $(\many)$ to a
  derivation $\Pi'_{ \cp}\dem \Delta \der \cp: \prodt{\A}{\D}$, so
  $\Delta(\y)$ must be $\C+ \mult{\A_{1}\arrow \ldots \A_{n}\arrow
    \prodt{\A}{\D}}$, for some $\C, \A_{1}, \ldots ,\A_{n}$ such that
  $\Delta_{i}\der \ap_{i}:\A_{i}$, and $\Del = +_{\iI}\Del_{i} + \y :
  \mult{\A_{1}\arrow \ldots \A_{n}\arrow \prodt{\A}{\D}} $,
  where $(+_{\iI}\Del_{i})(y)=\C$. So
  $\ccount{\Del}\geq \ccount{\mult{\prodt{\A}{\D}}}$.  By
  the \ih\ $||\cp||^{\Pi'_{ \cp}}_\y =||\cp||^{\Pi_{ \cp}}_\y \leq
  \ccount{\Delta(\y)}$.  Moreover,  by Lemma~\ref{l:patterns}
 $\Gam'(\x) = \mult{\prodt{\A}{\D}}^{\pair{\p}{\q}}_{\x}$.
Then  the number of free occurrences
  of $\x$ in $\bp$ typed in $\Pi_\bp$ is
 $||\bp||^{\Pi_\bp}_\x=\ccount{\mult{\prodt{\A}{\D}}^{\pair{\p}{\q}}_{\x}} \leq
    \ccount{\mult{\prodt{\A}{\D}}}\leq \ccount{\Del}\leq \ccount{\Del} + \ccount{\sig}$. We conclude since $|\ap|^\Pi_\x  = ||\bp||^{\Pi_\bp}_\x$.

 \item  All other cases follow easily by induction. \qedhere
  \end{itemize}
\end{proof}

This property has an important corollary.
\begin{corollary}\label{cor:finite}
Given a pair $(\Gam, \sig)$, the number of approximate normal forms $\ap$ such that $\Pi\dem\Gam \der \ap:\sigma$ and $\ap=\Ap(\Pi)$ is finite.
\end{corollary}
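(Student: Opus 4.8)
The plan is to read Corollary~\ref{cor:finite} as a uniform size bound. First I would fix $(\Gam,\sig)$ and show that there is a constant $B=B(\Gam,\sig)$ bounding the total number of nodes of any $\anf$ $\ap$ with $\Pi\dem\Gam \der \ap:\sig$ and $\ap=\Ap(\Pi)$; then, up to $\alpha$-equivalence, only finitely many $\anf$s have at most $B$ nodes and free variables drawn from a fixed finite set. The starting observation is that, when $\ap=\Ap(\Pi)$, the approximant is \emph{minimal}: by the very definition of $\Ap(\Pi)$, each leaf of $\ap$ is either a variable produced by an $(\ax)$ rule, hence typed in $\Pi$, or an occurrence of $\Omega$ produced by an empty instance of $(\many)$. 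In particular every variable occurrence of $\ap$ is typed, so $||\ap||^\Pi_\x$ and $|\ap|^\Pi_\x$ count \emph{all} free, resp.\ bound, occurrences of $\x$ in $\ap$. Property~\ref{prop:times} therefore bounds these numbers by $\ccount{\Gam(\x)}$ and $\ccount{\Gam}+\ccount{\sig}$, uniformly in $(\Gam,\sig)$; moreover each typed free occurrence stems from an axiom and is thus recorded in $\Gam$, so $\fv{\ap}\subseteq\dom{\Gam}$ is finite.

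It remains to bound the non-variable skeleton of $\ap$, which I would do by induction following the grammar of $\anf$s. A leading block $\l \p_1\ldots \p_k$ removes one arrow of $\sig$ per abstraction, so $k$ is bounded by the arrow-depth of $\sig$, hence by $\ccount{\sig}$, and by Corollary~\ref{cor:gen-pair} each pattern $\p_i$ mirrors the product structure of the multiset it is assigned, so its size is bounded as well. Every variable occurrence heads a unique application spine $\x\,\ap_1\ldots\ap_m$, and $m$ is bounded by the arrow-depth of the type carried by $\x$ at that head, hence by $\ccount{\Gam(\x)}$ for a free $\x$ and by the multiset introduced at its binder for a bound $\x$; the arguments $\ap_i$ are handled recursively on strictly smaller types, and likewise the two components of a pair are typed by strictly smaller product types. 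The $\Omega$-leaves occur only in argument or pair-component positions, so their number is at most the number of application and pair nodes already counted. Finally, each explicit matching $\bp[\pair{\p}{\q}/{\cal L}]$ corresponds to a $(\trsub)$ rule whose argument ${\cal L}$ has a head variable consuming a nonempty product resource; since non-idempotency keeps these resources disjoint, the number of nested matchings is bounded by the product material present in $\Gam$ and $\sig$. Collecting all these bounds, using the arithmetic of Property~\ref{prop:ccount} and the inequalities on $\ccount{\_}$, yields the desired $B(\Gam,\sig)$.

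With the size bound in hand the conclusion is a finiteness count: up to $\alpha$-equivalence an $\anf$ is determined by its tree shape (finitely many with at most $B$ nodes), by the pattern shapes at its binders (bounded as above, their bound-variable names being $\alpha$-irrelevant), and by the labelling of its variable leaves, whose free labels range over the finite set $\dom{\Gam}$. Hence only finitely many $\ap$ satisfy $\ap=\Ap(\Pi)$ for some $\Pi\dem\Gam \der \ap:\sig$, which is the statement. The main obstacle is the structural bound of the second paragraph: the delicate points are the number of $\Omega$-leaves and, above all, the nesting of explicit matchings, which must be controlled by $\ccount{\Gam}+\ccount{\sig}$ and not by the measure $\meas(\Pi)$, the latter being a priori unbounded for a fixed $(\Gam,\sig)$. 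The key is to account, along the derivation, for how each constructor consumes arrow or product components of the finitely many types occurring in $\Gam$ and $\sig$, using non-idempotency to prevent any component from being reused.
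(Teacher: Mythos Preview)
Your plan and the paper's proof share the same starting point, Property~\ref{prop:times}, and the same overall shape: bound the variable occurrences of $\ap=\Ap(\Pi)$ uniformly in $(\Gam,\sig)$, then conclude finiteness. The paper, however, stops after the first step and simply asserts that ``the number of approximate normal forms with a bounded number of variable occurrences is finite''. Read literally this is not true---the families $\x,\x\Omega,\x\Omega\Omega,\ldots$ and $\pair{\Omega}{\Omega},\l\y.\pair{\Omega}{\Omega},\l\y\z.\pair{\Omega}{\Omega},\ldots$ each have at most one variable occurrence---so the paper is tacitly invoking the typing constraints to bound spine lengths, abstraction prefixes, pair depth, and the number of matchings. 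Your second paragraph makes this tacit step explicit via a structural induction on the $\anf$ grammar, which is exactly what is needed; in that sense your proposal is a completed version of the paper's argument rather than a genuinely different route.

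The one place where your sketch could still wobble is the recursive bookkeeping for arguments and for explicit matchings: ``handled recursively on strictly smaller types'' and ``bounded by the product material present in $\Gam$ and $\sig$'' are the right intuitions, but patterns introduced by outer binders and by matchings add fresh resources to the environment of the inner subterm, so the quantity that decreases is not literally $\ccount{\Gam}+\ccount{\sig}$ at every stage. To make the induction rigorous you should exhibit a measure on the pair (current environment, current type) that strictly decreases at each recursive step of the $\anf$ decomposition; this is precisely the content of the termination argument for the inhabitation algorithm (Lemma~\ref{lem:run-termination} and Property~\ref{prop:ccount}), and reusing that measure here closes your argument cleanly.
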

\begin{proof}
  Let $\Pi \dem \Gam \der \ap :\sig$.  
    By Property~\ref{prop:times},
  the number of typed occurrences of every variable 
 in $\Pi$
    is bounded by $\ccount{\Gam}+\ccount{\sig}$
(we suppose each variable to be either bound or free, but not both,  in $\Ap(\Pi)$, by $\alpha$-conversion). So the total
    number of typed occurrences of variables in $\Pi$ is bounded by an
    integer, let say ${\cal B}$.  By definition of $\Ap(\Pi)$,
    $\Omega$ is the only untyped subterm of $\ap$, then ${\cal B}$ is
    an upper bound for the number of {\it all}  occurrences of
    variables of $\ap$, which turn out to be all typed occurrences of variables of $\ap$ in $\Pi$. 
   
    It is easy to see that ${\cal B}$ is also a bound for the total number of
    axioms of each derivation $\Pi\dem\Gam \der \ap:\sigma$, so the number of
    such derivations is finite and the conclusion follows.
\end{proof}

Now we are able to complete the termination proof.

\thtermination*
\begin{proof}
Lemma~\ref{Lem:main} ($\Rightarrow$) ensures that the outputs of the inhabitation algorithm, called  on $(\Gam, \sig)$, are all of the form $\ap=\Ap(\Pi)$ for some $\Pi \dem \Gam \der \ap:\sigma$. 
By Corollary~\ref{cor:finite} there exist finitely many such  $\ap$'s, and by
 Lemma~\ref{lem:run-termination} producing any of these takes a finite number of steps. Altogether, the inhabitation algorithms always terminates.
\end{proof}


\noindent{\bf Soundness and Completeness.} In order to show Lemma~\ref{Lem:main} we first introduce the following
key notion.  A derivation $\Pi$ is a \deft{left-subtree} of a
derivation $\Sigma$ if either $\Pi=\Sigma$, or
$\Pi \dem \Del \der \uu:\sig$ is the major premise of
some derivation 
$\Sigma' \dem\Del'\der \uu\vv: \tau$, such that $\Sigma'$ is a left-subtree
of $\Sigma$.  \

\begin{property}\label{prop:sound-comple-pattern}
  $(\Gam,  \p)  \in  \Pa{}{\dom{\V}}(\A)$,
 if and only if there is a derivation $\Gam \pder \p: \A$, such that
 $\fv{\p}\cap \V = \emptyset$.
 \end{property}
\begin{proof}
Easy, by checking the rules.
\end{proof}

\noindent {\bf Lemma~\ref{Lem:main}.} 
$\ap\in\K(\Gamma,\sigma)$ $\Leftrightarrow$
$\exists \Pi \dem \Gam \der \ap:\sigma$ such that $\ap=\Ap(\Pi)$.

\begin{proof} 
\begin{description}
\item[$(\Rightarrow)$]
We prove by mutual induction the following statements: 
\begin{enumerate}[label=\alph*)]
\item
$\ap\in\K(\Gamma,\sigma)$ $\Rightarrow$
$\exists \Pi \dem \Gam \der \ap:\sigma$ such that $\ap=\Ap(\Pi)$.
\item 
$\ap\in\KI(\Gam,\A)$ $ \Rightarrow$ $\exists \Pi \dem \Gam \der \ap: \A$ such that $\ap=\Ap(\Pi)$.
\item
$\ap\in\LK{\bp}{\Del}(\Gam, \sig) \donne \tau$  $\Rightarrow$ 
if $\exists \Sigma \dem \Del \der \bp :\sigma$ such that $\bp=\Ap(\Sigma)$,
then  $\exists \Pi \dem \Gam+\Del \der \ap:\tau$ such that $\ap=\Ap(\Pi)$.

\end{enumerate}
Each statement is proved by induction on the 
rules in  Figure~\ref{fig:inhabitation-algorithm}.

\begin{enumerate}[label=\alph*)]
\item 
\begin{itemize}
\item Let the last rule be $(\Abs)$, with conclusion $\lambda\p.\ap\in \K(\Gam,\A\arrow \tau)$ and premises
$\ap\in \K(\Gam + \Del,\tau)$ and
  $(\Delta, \p) \in \Pa{}{\dom{\Gam}}(\A)$.
  By Property~\ref{prop:sound-comple-pattern}, there is a derivation
  $\Del \pder \p: \A$, and we conclude by the \ih\ (a) on
  $\ap\in \K(\Gam + \Del,\tau)$  and the typing
  rule $(\introarrow)$.
\item   Let the last rule be $(\Head)$, with conclusion $\ap\in \K(
\Gamma'+\x :\mult{\tau},\sig)$ 
 and premise
  $\ap\in\LK{\x}{\x:\mult{\tau}}(\Gam', \tau) \donne \sig$,
  where
  $\Gam = \Gamma'+\x :\mult{\tau}$.  Then consider the derivation
  $\Sigma \dem \x:\mult{\tau} \der \x :\tau$ where $\x
  = \Ap(\Sigma)$.  The \ih\ (c) provides $\Pi\dem \Gam'
  + \x:\mult{\tau}\der \ap:\sig$ such that $\ap=\Ap(\Pi)$.
\item Let the last rule be $(\Prod)$, with conclusion $\pair{\ap}{\bp}\in \K(\Gam_0 + \Gam_1,
    \prodt{\A}{\D})$ and premises $\ap \in \KI(\Gam_0, \A)$ and
     $\bp \in \KI( \Gam_1, \D)$, where $\Gam = \Gam_0
     + \Gam_1$ and $\sig = \prodt{\A}{\D}$. Then we conclude by
     the \ih\ (b) and the typing rule ($\trpair$).

\item 
Let the last rule be $(\Subs)$, with conclusion $\bp[\pair{\p}{\q}/\cp] \in \K( \Gam_0+\Gam_1 + \x:\mult{\tau},
    \sig)$ and premises
    $\cp \in \LK{\x}{\x:\mult{\tau}}(\Gam_0, \tau) \donne \fin{\tau},\
    (\Del, \pair{\p}{\q}) \in \Pa{}{\dom{\Gam_0 + \Gam_1 +
    x: \mult{\tau} }}(\mult{\fin{\tau}})$ and $\bp \in \K(\Gam_1
    + \Del, \sig)$, where $\Gam = \Gam_0+\Gam_1
    + \x:\mult{\tau}$.  Since there is
    $\Sigma \dem \x:\mult{\tau}\der\x:\tau$, by \ih\ (c) on
    $\cp \in \LK{\x}{\x:\mult{\tau}}(\Gam_0, \tau) \donne \fin{\tau}$
    there is $\Psi'$ s.t.
    $\Psi' \dem \Gam_{0}+\x:\mult{\tau}\der \cp:\fin{\tau}$, and
    $\cp=\Ap(\Psi')$.  Moreover, by rule $(\many)$ we obtain
    $\Psi\dem\Gam_0+\x:\mult{\tau}\der \cp:\mult{\fin{\tau}}$.  By
    Property~\ref{prop:sound-comple-pattern},
    $(\Del, \pair{\p}{\q}) \in \Pa{}{\dom{\Gam_0 + \Gam_1 +
    x: \mult{\tau} }}(\mult{\fin{\tau}})$ implies there is
    $\Psi''\dem \Delta\pder \pair{\p}{\q}:\mult{\fin{\tau}}$ and
    $\fv{\pair{\p}{\q}} \cap \dom{\Gam_0 + \Gam_1
    + \x: \mult{\tau}}= \es$. Now, by applying the
\ih\ (a) to $\bp \in \K(\Gam_1 + \Del, \sig)$, we get a
derivation $\Pi'
  \dem \Gam_1 + \Del \der \bp:\sig$ such that $\bp = \Ap(\Pi')$.
 We get the required proof $\Pi$ by
  using the typing rule $(\trsub)$ on the premises
  $\Psi$, $\Psi''$ and $\Pi'$.
 \end{itemize}

\item Let the last rule be $(\Unionk)$
with conclusion  $\ap\in\KI(+_{\iI}\Gamma_i,\mult{\sigma_i}_{\iI})$
and premises  $(\ap_i\in\K(\Gam_i,\sigma_i))_{\iI}$ and
  $\uparrow_{\iI}\ap_i$. The proof follows
    from the \ih\ (a) and then the typing rule ($\many$) 
  or the (new) typing rule ($\Omega$). 

\item
\begin{itemize}
\item  Let the last rule be  $(\Final)$,
with conclusion  $\ap\in\LK{\ap}{\Del}(\es, \sigma) \donne \tau$
and premise
$\sigma=\tau$. Suppose
$\Sigma \dem \Del \der \ap: \sigma$. The fact
that the there exists a derivation $\Del + \es \der \ap:\sigma$ is
then straightforward. 

\item Let the last rule be $(\Prefix)$, with conclusion
$\ap\in\LK{\cp}{\Del}(\Gam_0+\Gam_1, \A\arrow \sigma') \donne \tau$ and
premises $\bp \in \KI(\Gam_0,\A)$ and
$\ap\in\LK{\cp\bp}{\Del+\Gam_0}(\Gam_1, \sigma') \donne \tau$,
where $\sig = \A\arrow \sigma'$.  Suppose that there exists
a derivation $\Sigma\dem\Delta\der\cp:\A\arrow\sigma'$ such that
$\cp=\Ap(\Sigma)$.  The \ih\ (b) applied to $\bp \in \KI(\Gam_0,\A)$
provides a derivation $\Psi \dem \Gam_0 \der \bp:\A$, where $\bp
= \Ap(\Psi)$.  The typing rule $(\app)$ with premises $\Sigma$ and
$\Psi$ gives a derivation $\Pi' \dem \Delta + \Gam_0\der\cp\bp:\sig'$,
such that $\cp\bp=\Ap(\Pi')$.
Then, the \ih\ (d) applied to $\ap\in\LK{\cp\bp}{\Del+\Gam_0}(\Gam_1, \sigma') \donne \tau$ provides a derivation
  $\Pi \dem \Delta+\Gam_0+\Gam_1\der \ap:\tau$ such that
$\ap=\Ap(\Pi)$, as required. 

\end{itemize}

\end{enumerate}

\item[$(\Leftarrow)$] We prove by mutual induction the following statements: 

\begin{enumerate}[label=\alph*)]
\item \label{uuno}
Given $\Sigma\dem\Del \der \bp: \tau$ and $\Pi\dem\Gamma \der \ap:\sigma$, 
if $\bp=\Ap(\Sigma)$ and $\ap=\Ap(\Pi)$ are $\cal L$-$\anf s$, and
$\Sigma$ is a left-subtree of $\Pi$,
then there exists $\Gamma'$ s.t. $\Gam =\Gam'+\Delta$ and for every $\Theta$,
$\LK{\ap}{\Del + \Gam'}(\Theta,\sigma) \donne \rho \subseteq 
\LK{\bp}{\Del}(\Theta+\Gam',\tau) \donne \rho$. 
\item \label{ddos}  
$\Pi\dem \Gam  \der \ap:\sigma$ and $\ap=\Ap(\Pi)$ imply $\ap\in \K(\Gam,\sig)$.
\end{enumerate}

Each statement is proved by induction on
the definition of approximate normal forms.
\begin{enumerate}[label=\alph*)]

\item
\begin{itemize}
\item If  $\ap = \x$, then  $\Pi$ is an axiom
 $(\ax)$; $\Sigma$ being a left subtree of
$\Pi$, we get $\Sigma=\Pi$, $\bp=\x$, $\Gamma'=\es$, $\sigma=\tau$ and the inclusion $\LK{\ap}{\Del + \Gam'}(\Theta,\sigma) \donne \rho \subseteq 
\LK{\bp}{\Del}(\Theta+\Gam',\tau) \donne \rho$ trivially holds.
\item If  $\ap = \cp\ap'$, $\cp$ being  an   $\cal L$-$\anf$, then the
last rule of $\Pi$ is an instance of  $(\app)$, 
with premises  $\Pi_1\dem\Gam_1 \der \cp: \A \to \sigma$ and $\Pi_2\dem
\Gam_2 \der \ap': \A$,  so that
 $\Gam'=\Gam_1+\Gam_2$. Moreover, 
 $\Sigma \dem\Del \der \bp: \tau$ is also a left-subtree of $\Pi_1$ and
 $\Pi_2$ comes from $(\Pi^i_2\dem\Gam^i_2 \der \ap': \sig_i)_{\iI}$, where
 $+_{\iI}\Gam^i_2=\Gamma_2$ and $\mult{\sig_i}_{\iI}=\A$.
We have in this case  $\ap'=\bu_{\iI } \Ap(\Pi^i_2)$, where 
by the \ih\ (b), $\Ap(\Pi^i_2)\in \K(\Gam^i_2, \sig_i)$.
  Then  $\LK{\cp\ap'}{\Del + \Gam_1+\Gam_2}(\Theta,\sig) \donne \rho 
\subseteq_{(\Prefix)}\LK{\cp}{\Del + \Gam_1}(\Theta +\Gam_2,\A \to  \sig)\donne \rho  \subseteq_{(a)} 
 \LK{\bp}{\Del}(\Theta+ \Gam_1 + \Gam_2,\tau)\donne \rho $.
 \item Since by hypothesis both $\ap$ and $\bp$ are $\cal L$-$\anf s$, there are no other cases.
\end{itemize}

\item
\begin{itemize}
\item $\ap = \Omega$ does not apply, since $\sig$ is not the empty multiset. 
\item If $\ap$ is an ${\cal L}$-$\anf$, 
then  $\exists \x, \tau$ s.t. $\Gam = \Gam_0 + \x:\mult{\tau}$
and the type derivation $\Sigma \dem \x:\mult{\tau} \der \x: \tau$ is a 
left subtree of $\Pi\dem \Gam_0 + \x:\mult{\tau} \der \ap: \sig$.
Then we have 
$\ap \in \LK{\ap}{\Gam}(\es,\sigma) \donne \sig$
by rule $(\Final)$,
$\LK{\ap}{\Gam_0+\x:\mult{\tau}}(\es,\sigma) \donne \sig
\subseteq
\LK{\x}{\x:\mult{\tau}}(\Gam_0,\tau) \donne \sig$
by Point (a) and
$\LK{\x}{\x:\mult{\tau}}(\Gam_0,\tau) \donne \sig
\subseteq
\K(\Gam_0 + \x:\mult{\tau},\sig)$
by rule $(\Head)$. We thus conclude
$\ap \in \K(\Gam_0 + \x:\mult{\tau},\sig)$. 

\item Otherwise, we analyze all the other cases of ${\cal N}$-$\anf s$. 

\begin{itemize}
\item 
If $\ap = \lambda \p. \bp$ (resp. $\ap = \pair{\bp}{\cp}$) then it is
  easy to conclude by induction, using rule $\Abs$ (resp. $\Prod$).

\item     If $\ap = \cp[\pair{\p}{\q}/\bp]$, then $\cp$ (resp. $\bp$) is an
${\cal N}$
(resp. ${\cal L}$)-approximate normal form.  By construction, $\Pi$ is of the
following form:
\[
        \infer{\Pi' \dem \Gam \der \cp: \sig \sep 
                \Gam|_{\pair{\p}{\q}} \pder \pair{\p}{\q}: \mult{\prodt{\A_1}{\A_2}} \sep
                \infer{\Psi' \dem \Del \der \bp:\prodt{\A_1}{\A_2}}
                {\Psi \dem \Del\der \bp:\mult{\prodt{\A_1}{\A_2}}} 
                }
              {\Gam  \setminus \Gam|_{\pair{\p}{\q}} +\Del \der \cp[\pair{\p}{\q}/\bp]: \sig}
        \]
        By definition $\Ap(\Pi) =\Ap(\Pi')[\pair{\p}{\q}/\Ap(\Psi')]$.
        By the \ih\ (c) $\Ap(\Pi') \in \K(\Gam, \sig)$. Moreover,
        $\bp = \y \cp_1 \ldots \cp_h\ (h \geq 0)$, since it is an
        ${\cal L}$-canonical form, so that
        $\Del = \Del' + (\y:\mult{\tau})$ where
        $\fin{\tau} = \prodt{\A_1}{\A_2}$ and
        $\Sigma\dem\y: \mult{\tau} \der \y: \tau$ is a left
        prefix of $\Psi'\dem\Del \der \bp:\prodt{\A_1}{\A_2}$.
        Therefore, 
        $\Ap(\Psi)= \Ap(\Psi')$ belongs to the set
        $\LK{\Ap(\Psi')}{\Del}(\es,
        \fin{\tau}) \donne \fin{\tau}$, which is
        included in $\LK{\y}{\y:\mult{\tau}}(\Del', \tau) \donne
        \fin{\tau}$ by Point~(a).  
         By Property~\ref{prop:sound-comple-pattern} we also have
        $(\Gam|_{\pair{\p}{\q}} , \pair{\p}{\q}) \in \Pa{}{\dom{\Gam \setminus
            \Gam|_{\pair{\p}{\q}}+\Del}}(\mult{\fin{\tau}})$.  We thus
        obtain
        $\Ap(\Pi)= \Ap(\Pi')[\pair{\p}{\q}/\Ap(\Psi)] \in \K( \Gam
        \setminus \Gam|_{\pair{\p}{\q}} + \Del' +
        (\y:\mult{\tau}), \sig ) = \K( \Gam \setminus
        \Gam|_{\pair{\p}{\q}}+\Del, \sig )$
       by rule $(\Subs)$. \qedhere
\end{itemize}      \end{itemize}
      \end{enumerate}      
    \end{description}
  \end{proof}

\section{Characterizing Solvability}
\label{s:charact}

We are now able to state the main result of this paper, \ie\ the
characterization of the solvability property for the pattern
calculus  $\Lp$.

The logical characterization of canonical forms given in
Section~\ref{s:type-system} through the type assignment system $\Pu$
is a first step in this direction.  In fact, the system $\Pu$ is
complete with respect to solvability, but it
is not sound, as shown in the next theorem.
\begin{theorem}
\label{th:completess}
The set of solvable terms is a proper subset of the set of terms having canonical forms.
\end{theorem}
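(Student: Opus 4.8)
The plan is to prove the statement in two independent halves: first that every solvable term has a canonical form (the inclusion), and then that a term having a canonical form need not be solvable (strictness), by exhibiting an explicit witness.

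For the inclusion, suppose $\s$ is solvable, so there is a head context $\hcontext$ with $\appctx{\hcontext}{\s}$ closed and $\appctx{\hcontext}{\s} \Rewn{} \pair{\uu}{\vv}$. Since $\pair{\uu}{\vv}$ is generated by the grammar ${\cal J}$, it is a canonical form, so $\appctx{\hcontext}{\s}$ has a canonical form. By Theorem~\ref{l:characterization-canonical} this makes $\appctx{\hcontext}{\s}$ typable, and then Lemma~\ref{l:head-contexts} yields that $\s$ is typable; a second application of Theorem~\ref{l:characterization-canonical} gives that $\s$ has a canonical form. Thus the solvable terms are contained in the terms having canonical forms.

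For strictness I would take the witness $\s_0 = (\x\x)[\pair{\y_1}{\y_2}/\x]$. It is a ${\cal K}$-form (outer ${\cal K}$ being the application $\x\x$, inner ${\cal K}$ being $\x$), hence an ${\cal J}$-canonical form, so it certainly has a canonical form. The point is to prove $\s_0$ is \emph{not} solvable. Since the only free variable of $\s_0$ is $\x$, any closing head context $\hcontext$ binds $\x$ and, after the relevant $(\run)$, $(\rdeux)$ or $(\rtrois)$ steps (whichever realizes the binding, be it a $\l\x$ plus argument, an explicit matching $[\x/\cdot]$, or a pair pattern capturing $\x$), replaces \emph{all} occurrences of $\x$ by one common term $N$; so the subterm of interest reduces to $(NN)[\pair{\y_1}{\y_2}/N]$, possibly under further applications and substitutions supplied by $\hcontext$.

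The heart of the argument is then to show $(NN)[\pair{\y_1}{\y_2}/N]$ never reduces to a pair. A pair carries no outermost explicit matching, so to reach a pair the outer matching must first be consumed, which can only happen via rules $(\rtrois)$, $(\rquatre)$ or $(\rcinq)$, requiring $N$ to reduce to a pair inside a list context, to an abstraction inside a list context, or to $\fail$, respectively. In the latter two cases the matching reduces directly to $\fail$; in the first, $N \Rewn{} \appctx{\slist}{\pair{\uu_1}{\uu_2}}$ forces the body $NN \Rewn{} \appctx{\slist}{\pair{\uu_1}{\uu_2}}\,N \Rewn{} \fail$ by $(\rhuit)$ (exposed by $(\rneuf)$), so the whole term reduces to $\fail$ again. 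Hence whenever the matching is consumed the term reaches $\fail$, and otherwise the matching stays blocked and every reduct still exhibits the outer matching, so is not a pair. Since $\fail$ is a normal form and, by inspection of the rules, a pair reduces only to pairs (no root rule rewrites $\pair{\cdot}{\cdot}$ to $\fail$, and contextual reduction preserves the head pair constructor), confluence (Lemma~\ref{l:confluence}) forbids any reduction of $(NN)[\pair{\y_1}{\y_2}/N]$ to a pair; the extra applications and substitutions from $\hcontext$ cannot help, since this core term becomes $\fail$ or stays blocked but never an abstraction or a pair. Therefore $\s_0$ is canonical yet unsolvable, and the inclusion is proper. The main obstacle is exactly this non-solvability step: it must be argued \emph{uniformly} over all head contexts, which is why I first collapse the analysis to the single term $(NN)[\pair{\y_1}{\y_2}/N]$ and then use confluence to rule out a pair once a reduction to $\fail$ is available; the delicate bookkeeping is handling the list contexts and justifying the ``the matching must be consumed'' observation precisely.
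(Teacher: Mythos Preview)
Your inclusion argument is essentially the paper's: the paper invokes subject expansion directly rather than the ``if'' direction of Theorem~\ref{l:characterization-canonical}, but since the latter is proved via subject expansion this is only a cosmetic difference.

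For strictness you choose a different witness. The paper takes the \emph{closed} term
\[
\s_1 \;=\; \l \x.\ \id\,[\pair{\y}{\z}/\x]\,[\pair{\y'}{\z'}/\x\id],
\]
so head contexts can only supply arguments; the non-solvability then reduces to the one-line observation that no term $\uu$ can have both $\uu$ and $\uu\,\id$ reducing to pairs (if $\uu \Rewn{} \appctx{\slist}{\pair{\vv_1}{\vv_2}}$ then $\uu\,\id \Rewn{} \fail$ by $(\rhuit)$). Your open witness $(\x\x)[\pair{\y_1}{\y_2}/\x]$ exploits exactly the same tension---a term cannot behave simultaneously as a pair and as something applicable---but because $\x$ is free you must argue over \emph{all} closing head contexts, which is where your proof becomes delicate. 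Two points deserve more care than you give them. First, the claim that every closing head context eventually ``replaces all occurrences of $\x$ by one common term $N$'' with $N$ closed is not automatic: the binder for $\x$ may sit inside nested explicit matchings that themselves need unblocking, and you should explain why either this unblocking succeeds (yielding your $N$) or the whole term is stuck away from a pair. Second, your final sentence asserts that the core term ``never [becomes] an abstraction'', but it \emph{can} become an abstraction inside a list context (when $NN$ reduces to an abstraction), which lets outer applications fire via $(\run)$; the correct observation is that this only happens when $N$ itself reduces to an abstraction-in-list, whence $(\rquatre)$ already sends the core to $\fail$, so confluence still rules out a pair. With these two clarifications your argument goes through; the paper's closed witness simply sidesteps both issues.
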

\begin{proof} \mbox{}
\begin{itemize}
\item (Solvability implies canonicity) If $\s$ is solvable, then
  there is a head context $\hcontext$ such that $\appctx{\hcontext}{\s}$ is
  closed and reduces to $\pair{\uu}{\vv}$, for some $\uu$ and $\vv$.
  Since all pairs are typable, the term $\appctx{\hcontext}{\s}$ is typable by
  Lemma~\ref{lem:red:exp}(\ref{lem:subexp}), so that $\s$ is typable
  too by Lemma~\ref{l:head-contexts}. We conclude
    that $\s$ has canonical form by Theorem~\ref{l:characterization-canonical}.
\item (Canonicity does not imply solvability) Let 
  $\s_1= \lambda \x . \id [\pair{\y}{\z}/\x][\pair{\y'}{\z'}/\x \id]$.
  the term $\s_1$ 
  is canonical, hence typable by
  Theorem~\ref{l:characterization-canonical}. However $\s_1$ is not
  solvable. In fact, it is easy to see that there is no term $\uu$
  such that both $\uu$ and $\uu \id$ reduce to pairs.  Indeed, let
  $\uu \reds \pair{\vv_{1}}{\vv_{2}}$; then
  $\uu\id \reds \pair{\vv_{1}}{\vv_{2}}\id $, which 
  will reduce to $\fail$. \qedhere
\end{itemize}
\end{proof}

However, as explained in the introduction, we can use inhabitation of
system $\Pu$ to completely characterize solvability.  The following
lemma guarantees that the types reflect correctly the structure of the
data.

\begin{lemma}
\label{l:closed-pair}
Let $\s$ be a closed and typable term.
\begin{itemize}
\item If $\s$ has functional type, then $\s$ reduces to an abstraction.
\item 
If $\s$ has  product type, then it  reduces to a pair.
\end{itemize}
\end{lemma}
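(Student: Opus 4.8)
The plan is to reduce $\s$ to a canonical form, observe that the only closed canonical forms are abstractions and pairs, and then read off which of the two is forced by the type.

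First I would record that, since $\s$ is closed, Lemma~\ref{l:relevance} gives $\dom{\Gam}\subseteq\fv{\s}=\es$, so that $\der\s:\tau$ for some type $\tau$ which, by hypothesis, is either an arrow type $\A\arrow\sig$ or a product type $\prodt{\A}{\D}$. Being typable, $\s$ has a canonical form by Theorem~\ref{l:characterization-canonical}, say $\s\reds\s'$ with $\s'\in\can$. Two properties transfer to $\s'$: it is still closed, because no reduction rule creates free variables and hence $\fv{\s'}\subseteq\fv{\s}=\es$; and it still has type $\tau$, by iterating Weighted Subject Reduction (Lemma~\ref{lem:red:exp}) along the reduction sequence.

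Next I would analyse the shape of the closed canonical form $\s'$ against the grammar of ${\cal J}$. The case ${\cal K}$ is excluded because ${\cal K}$-canonical forms cannot be closed: an immediate induction shows that every ${\cal K}$ has a free variable (its head variable in the base case $\x$; the free variable supplied by the argument in the case ${\cal K}[\pair{\p}{\q}/{\cal K}']$, since that argument sits in a position not bound by $\pair{\p}{\q}$). The very same observation discards the blocked matching form ${\cal J}[\pair{\p}{\q}/{\cal K}]$: its argument ${\cal K}$ is not in the scope of the pattern $\pair{\p}{\q}$, so $\fv{\s'}\supseteq\fv{{\cal K}}\neq\es$, contradicting closedness. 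Hence $\s'$ is either an abstraction $\l\p.{\cal J}$ or a pair $\pair{\uu}{\vv}$.

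Finally I would match the shape to $\tau$ by inspecting the last rule of a derivation of $\der\s':\tau$, a routine generation argument: a pair receives a single type only through $(\trpair)$, whose conclusion is always a product type, while an abstraction receives one only through $(\introarrow)$, whose conclusion is always an arrow type. Consequently, if $\tau$ is an arrow type then $\s'$ cannot be a pair and is therefore an abstraction, which settles the first item; and if $\tau$ is a product type then $\s'$ cannot be an abstraction and is therefore a pair, which settles the second. The step I expect to require the most care is the case analysis on closed canonical forms, specifically checking that the blocked matching forms ${\cal J}[\pair{\p}{\q}/{\cal K}]$ are never closed; the remaining ingredients (relevance, subject reduction, and the generation argument) are standard.
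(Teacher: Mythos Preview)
Your proof is correct and follows the same strategy as the paper: reduce to a closed canonical form via Theorem~\ref{l:characterization-canonical} and subject reduction, rule out the ${\cal K}$ and ${\cal J}[\pair{\p}{\q}/{\cal K}]$ shapes by closedness (the paper also relies on the observation that ${\cal K}$-forms cannot be closed), and then match the remaining abstraction/pair shape against the arrow/product type. The paper phrases the reduction-to-canonical step as an induction on the length of the reduction sequence and leaves the generation argument (``an abstraction cannot receive a product type, a pair cannot receive an arrow type'') implicit in its ``the property trivially holds'', but the substance is identical.
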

\begin{proof}
Let $\s$ be a closed and typable term. By Theorem~\ref{l:characterization-canonical} 
we know that 
$\s$ reduces to a (closed) canonical form in ${\cal J}$. The proof is  by induction on the
maximal length of such reduction sequences.
If $\s$ is already a canonical form, we analyze all the cases.
\begin{itemize}
\item  If $\s$ is a variable, then this gives a contradiction with $\s$ closed. 
\item  If $\s$ is an abstraction, then the property trivially holds.
\item If $\s$ is a pair, then the property trivially holds.
\item If $\s$ is an application, then $\s$ necessarily 
  has a head (free)    variable  which
    belongs to the set of  free variables of $\s$, which leads to a
contradiction with $\s$ closed.
\item If $\s=\uu[ \pair{\p_1}{\p_2}/\vv]$ is closed,
  then in particular
  $\vv$ is closed, which leads to a contradiction with
  $\s \in {\cal J}$ implying $\vv
  \in {\cal K}$. So this case is not possible. 
\end{itemize}
Otherwise, there is a reduction sequence $\s \Rew{} \s' \Rewn{} \uu$, where $\uu$ is in ${\cal J}$. 
The term $\s'$ is also closed and typable
by Lemma~\ref{lem:red:exp}(\ref{lem:reduction}), then the \ih\ gives the desired result for
$\s'$, so the property holds also for $\s$.
\end{proof}

The notion of inhabitation can easily be extended to typing
  environments, by defining $\Gam$ inhabited if $\x:\C \in \Gam$
  implies $\C$ is inhabited.  The following lemma shows in
  particular that if the
type of a pattern is inhabited, then
its typing
environment is also inhabited.

\begin{lemma}
  \label{l:inhabited-patterns} \mbox{}
  \begin{enumerate}
 \item \label{l:inhabited-patterns-uno} If $\Pi \dem \Gam \pder \p: \A$ and $\A$ is
    inhabited, then $\Gam$ is also
    inhabited.
    \item \label{l:inhabited-patterns-dos}
      If $\Gam\der \s:\A$ and $\Gam$ is inhabited, 
      then $\A$ is inhabited.
  \end{enumerate}
\end{lemma}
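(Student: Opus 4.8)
The plan is to treat the two items separately; neither needs the other. For item~(\ref{l:inhabited-patterns-uno}) I would argue by induction on the pattern $\p$, and for item~(\ref{l:inhabited-patterns-dos}) by substituting inhabitants for the free variables of $\s$. The one reusable fact I would isolate first is the following \emph{projection property}: if $\mult{\prodt{\D}{\C}}$ is inhabited then both $\D$ and $\C$ are. Indeed, an inhabitant is a closed term $\s$ with $\der \s:\mult{\prodt{\D}{\C}}$, hence $\der \s:\prodt{\D}{\C}$ by $(\many)$; by Lemma~\ref{l:closed-pair}, $\s$ reduces to a pair $\pair{\uu}{\vv}$, which is again closed and, by subject reduction (Lemma~\ref{lem:red:exp}(\ref{lem:reduction})), still of type $\prodt{\D}{\C}$. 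Its derivation must end with $(\trpair)$, yielding closed derivations $\der \uu:\D$ and $\der \vv:\C$, so $\D$ and $\C$ are inhabited.

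For item~(\ref{l:inhabited-patterns-uno}), the base case $\p=\x$ is immediate: by Lemma~\ref{l:relevance}, $\Gam$ is either empty (vacuously inhabited) or $\x:\A$, and $\A$ is inhabited by hypothesis. For $\p=\pair{\p_1}{\p_2}$, Corollary~\ref{cor:gen-pair} forces $\A=\mult{\prodt{\D}{\C}}$ and the derivation to split as $\Gam_1\pder\p_1:\D$, $\Gam_2\pder\p_2:\C$ with $\Gam=\Gam_1+\Gam_2$ and $\p_1\#\p_2$. The projection property gives that $\D$ and $\C$ are inhabited, so the induction hypothesis makes $\Gam_1$ and $\Gam_2$ inhabited. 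Finally $\dom{\Gam_1}\cap\dom{\Gam_2}=\es$ (by $\p_1\#\p_2$ together with Lemma~\ref{l:relevance}), so every non-empty value of $\Gam_1+\Gam_2$ already occurs in $\Gam_1$ or in $\Gam_2$, whence $\Gam$ is inhabited. (Alternatively one can read off $\Gam(\x)=\A^\p_\x$ directly from Lemma~\ref{l:patterns} and iterate the projection property along the path from the root of $\p$ down to $\x$.)

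For item~(\ref{l:inhabited-patterns-dos}), write $\A=\mult{\sig_k}_{\kK}$, so that $\Gam\der\s:\A$ comes from $(\many)$ with $\Gam=+_{\kK}\Gamk$ and $\Gamk\der\s:\sigk$ for each $k$. For every $\x\in\fv{\s}$ I would fix, once and for all, a single closed inhabitant $\uu_\x$: if $\x\in\dom{\Gam}$ take $\uu_\x$ inhabiting $\Gam(\x)$ (available since $\Gam$ is inhabited), otherwise take any closed term, which inhabits $\emul$. Let $\vv$ be $\s$ with every free variable $\x$ replaced by $\uu_\x$; then $\vv$ is closed. I claim $\der\vv:\sigk$ for each $k$. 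Fixing $k$ and substituting the variables one at a time, note that $\Gamk(\x)\sqsubseteq\Gam(\x)$ and that an inhabitant of a multiset inhabits each of its sub-multisets (just drop the superfluous $(\many)$-premises); hence $\uu_\x$ inhabits $\Gamk(\x)$, and the Substitution Lemma~\ref{l:substitution-lemma}(\ref{l:substitution}) removes $\x$ from the environment while preserving the type $\sigk$ (substituting a variable carrying $\emul$ leaves the environment unchanged). After all substitutions we obtain $\der\vv:\sigk$. Since the same term $\vv$ works for every $k$, rule $(\many)$ yields $\der\vv:\mult{\sig_k}_{\kK}=\A$, so $\A$ is inhabited.

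The delicate point is exactly this last step: the Substitution Lemma is phrased for a single type $\sigk$, so I must decompose $\A$ via $(\many)$, argue componentwise, and recombine; and recombination is legitimate only because the chosen inhabitants $\uu_\x$ depend on the variable and not on $k$, so that one and the same closed $\vv$ simultaneously realises all the $\sigk$. Everything else is routine bookkeeping.
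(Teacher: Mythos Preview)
Your proof is correct and follows essentially the same approach as the paper: induction on $\p$ using Lemma~\ref{l:closed-pair} and subject reduction for item~(\ref{l:inhabited-patterns-uno}), and substitution of closed inhabitants for item~(\ref{l:inhabited-patterns-dos}). Your treatment of item~(\ref{l:inhabited-patterns-dos}) is in fact more careful than the paper's one-line appeal to the Substitution Lemma: you correctly observe that Lemma~\ref{l:substitution-lemma}(\ref{l:substitution}) is stated for types rather than multiset types, so the decomposition via $(\many)$ and the uniform choice of inhabitants $\uu_\x$ independent of $k$ are needed, and you also handle free variables outside $\dom{\Gam}$ so that the resulting term is genuinely closed.
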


\begin{proof} \mbox{}
  \begin{enumerate}
    \item The proof is by induction on $\p$.
  
      \begin{itemize}
        \item If $\p = \x$ then $\Gam$ is $\x:\A$ with $\A \neq \emul$ or
  it is $\es$. In both
  cases the property is trivial.

  \item 
  If $\p = \pair{\p_1}{\p_2}$, then $\Pi_1 \dem \Gam_1 \pder \p_1:
  \A_1$ and $\Pi_2 \dem \Gam_2 \pder \p_2: \A_2$, where  $\A = \mult{\prodt{\A_1}{\A_2}}$  and $\Gam = \Gam_1 + \Gam_2$. Let us see that $\A_i$ ($i=1,2$)
is inhabited.
Since  $\mult{\prodt{\A_1}{\A_2}}$ is inhabited, so is $\prodt{\A_1}{\A_2}$.  
By Lemma~\ref{l:closed-pair}, the closed term $\s$ inhabiting 
  $\prodt{\A_1}{\A_2}$ reduces to a pair $\pair {\s_1}{\s_2}$.
  We know by Lemma  \ref{lem:red:exp}(\ref{lem:reduction})  that $\der\pair {\s_1}{\s_2} : \prodt{\A_1}{\A_2}$, and we conclude that $\der \s_i: A_i$ ($i=1,2$).
  Now,  by applying  the \ih\ to $\Pi_1$ (resp. $\Pi_2$) we have that for every
  $\x:\A' \in \Gam_1$ (resp $\Gam_2$), $A'$ is inhabited. By linearity of
  $\p$, if $\x:\A' \in \Gam$ then either  $\x:\A'\in\Gam_1$ or
  $\x:\A'\in\Gam_2$ (otherwise stated: $\Gam_1+\Gam_2=\Gam_1;\Gam_2$). Hence we conclude
that for every
$\x:\A' \in \Gam$, $\A'$ is inhabited.
\end{itemize}
\item  For all $\x:\D\in\Gam$, let $\uu_\x$ be a closed term inhabiting $\D$.
  By Lemma \ref{l:substitution-lemma}(\ref{l:substitution}) the closed term obtained by replacing in $\s$
  all occurrences of $\x\in \dom{\Gam}$ by $\uu_\x$  inhabits
  $\A$.  \qedhere
  \end{enumerate}
\end{proof}

In order to simplify the following proofs, let us
introduce a new notation: let $\vec{\A}$ denote a sequence of multiset
types $\A_{1}, ...,\A_{n}$, so that
$\A_{1}\arrow...\arrow\A_{n}\arrow \sig$ will be
abbreviated by $\vec{\A} \arrow \sig$. Note that every type has this
structure, for some multisets $\A_{1}, \ldots, \A_{n}\ (n \geq 0)$ and
type $\sig$. Moreover we will say that $\vec{\A}$ is inhabited if all
its components are inhabited.

\begin{lemma}
  \label{l:extracting-typing}
  Let $\appctx{\hcontext}{\s}$ be such that 
$\Pi \dem \Gam \der \appctx{\hcontext}{\s}: \vec{\A} \arrow \pi$, where  $\Gam$ and $\vec{\A}$  are 
  inhabited. Then there are $\Pi', \Gam',\vec{\C}$ such that $\Pi' \dem \Gam' \der \s: \vec{\C}\arrow \pi$ where $\Gam'$ and $\vec \C$   are  inhabited.
\end{lemma}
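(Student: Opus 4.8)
The plan is to proceed by \emph{induction on the structure of the head context} $\hcontext$, peeling off one constructor at each step and invoking the induction hypothesis on $\appctx{\hcontext_1}{\s}$ for the immediate subcontext $\hcontext_1$. Two auxiliary facts about inhabitation of environments will be used repeatedly, both immediate from the definition of inhabitation together with rule $(\many)$: first, every summand of an inhabited environment is inhabited (if $\Gam + \Del$ is inhabited then so are $\Gam$ and $\Del$, since a closed term inhabiting $(\Gam+\Del)(\x) = \Gam(\x) \munion \Del(\x)$ carries, via $(\many)$, in particular all the types of $\Gam(\x)$); and second, a disjoint union $\Gam; \Del$ of two inhabited environments is again inhabited, because each variable lies in exactly one summand. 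Beyond these, the only real ingredients are the two parts of Lemma~\ref{l:inhabited-patterns}.

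In the base case $\hcontext = \Box$ we have $\appctx{\hcontext}{\s} = \s$, so $\Pi' := \Pi$, $\Gam' := \Gam$ and $\vec{\C} := \vec{\A}$ already satisfy the claim. For the inductive step, the shape of $\appctx{\hcontext}{\s}$ determines the last rule of $\Pi$ (it cannot be $(\many)$, since the object is a type), and we invert it. If $\hcontext = \l\p.\hcontext_1$, inversion on $(\introarrow)$ writes the type as $\A_1 \arrow \sig$ with $\A_1$ the leading multiset of $\vec{\A}$ and $\sig = \vec{\A}'' \arrow \pi$ (treating $\pi$ as the fixed final type, which in the intended application is a product type and hence not an arrow, so the leading arrow belongs to $\vec{\A}$), yielding $\Gam_0 \der \appctx{\hcontext_1}{\s} : \vec{\A}'' \arrow \pi$ and $\Gam_0|_\p \pder \p : \A_1$ with $\Gam = \Gam_0 \sm \Gam_0|_\p$. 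Since $\A_1$ is inhabited, Lemma~\ref{l:inhabited-patterns}(\ref{l:inhabited-patterns-uno}) makes $\Gam_0|_\p$ inhabited; as $\Gam_0 = \Gam ; \Gam_0|_\p$ is a disjoint union of inhabited environments (the domain of $\Gam = \Gam_0 \sm \Gam_0|_\p$ avoids $\fv{\p}$, whereas $\dom{\Gam_0|_\p} \subseteq \fv{\p}$), $\Gam_0$ is inhabited, and $\vec{\A}''$ is inhabited, so the induction hypothesis applies and gives exactly what we need.

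If $\hcontext = \hcontext_1\,\uu$, inversion on $(\app)$ gives $\Gam_1 \der \appctx{\hcontext_1}{\s} : \D \arrow \vec{\A} \arrow \pi$ and $\Gam_2 \der \uu : \D$ with $\Gam = \Gam_1 + \Gam_2$. Both $\Gam_1$ and $\Gam_2$ are inhabited as summands of $\Gam$; then Lemma~\ref{l:inhabited-patterns}(\ref{l:inhabited-patterns-dos}) applied to $\Gam_2 \der \uu : \D$ shows $\D$ is inhabited, so $\D, \vec{\A}$ is an inhabited sequence and the induction hypothesis on $\appctx{\hcontext_1}{\s}$ concludes. If $\hcontext = \hcontext_1[\p/\uu]$, inversion on $(\trsub)$ gives $\Gam_1 \der \appctx{\hcontext_1}{\s} : \vec{\A} \arrow \pi$, $\Gam_1|_\p \pder \p : \D$ and $\Gam_2 \der \uu : \D$, with $\Gam = (\Gam_1 \sm \Gam_1|_\p) + \Gam_2$. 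Again $\Gam_2$ is an inhabited summand, so $\D$ is inhabited by Lemma~\ref{l:inhabited-patterns}(\ref{l:inhabited-patterns-dos}), whence $\Gam_1|_\p$ is inhabited by Lemma~\ref{l:inhabited-patterns}(\ref{l:inhabited-patterns-uno}); since $\Gam_1 \sm \Gam_1|_\p$ is also an inhabited summand of $\Gam$ and its domain avoids $\fv{\p}$, the disjoint union $\Gam_1 = (\Gam_1\sm\Gam_1|_\p); \Gam_1|_\p$ is inhabited, and the induction hypothesis finishes the case.

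The main obstacle I anticipate is the bookkeeping of inhabitation of the intermediate environments, rather than any deep argument: the delicate point is that inhabitation of a multiset requires a \emph{single} term carrying \emph{all} its types, so one must extract inhabitation only from summands (which is sound) and reconstruct it only across disjoint unions (never across genuine multiset unions). Establishing the inhabitation of the pattern environments $\Gam_0|_\p$ and $\Gam_1|_\p$ is exactly where Lemma~\ref{l:inhabited-patterns} is indispensable, and verifying the disjointness of the relevant domains (guaranteed by relevance, Lemma~\ref{l:relevance}, together with the convention $\dom{\Gam|_\p} \subseteq \fv{\p}$) is what licenses the reconstruction steps.
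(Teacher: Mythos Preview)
Your proof is correct and follows the same approach as the paper's: induction on $\hcontext$, inversion on the last typing rule in each case, and use of Lemma~\ref{l:inhabited-patterns} to propagate inhabitation to the intermediate environment before invoking the induction hypothesis. Your extra bookkeeping on summands and disjoint unions of inhabited environments makes explicit what the paper leaves implicit (it just writes ``being sub-environments of $\Gam$'' and ``(and thus $\Gam'$)''), and your hedge in the abstraction case is unnecessary: in the paper's conventions $\pi$ is by definition a product-type metavariable, so $\vec\A$ is automatically non-empty there.
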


\begin{proof} By induction on $\hcontext$.
  \begin{itemize}
  \item If $\hcontext = \Box$, then the property trivially holds.
  \item If $\hcontext = \hcontext'\ \uu$, then $\Gam=\Gam'+\Delta$ and $\Pi$ is:
\[
     \infer{ \seq{\Gam'}{ \appctx{\hcontext'}{\s} : \D\rew \vec{\A}  \rew \pi} \sep\sep 
        \seq{\Del}{\uu : \D}}
      {\seq{\Gam' + \Del}{  \appctx{\hcontext'}{\s}\uu: \vec{\A}  \rew \pi    }}\ (\app)
\]

$\Gam'$ and $\Del$ are inhabited, being  sub-environments of $\Gam$.
By Lemma~\ref{l:inhabited-patterns}(\ref{l:inhabited-patterns-dos})  $\D$ is
inhabited. Then 
the proof follows by the \ih\  on the major premise.
\item  If $\hcontext =\lambda\p.\hcontext'$ then $\Gam=\Gam'\sm\Gam'|_\p$, $\vec \A= \A_0,\vec{\A'}$ and $\Pi$ is:
\[
\infer{ \seq{\Gam'}{ \appctx{\hcontext'}{\s}:{\vec \A'} \rew  \pi} \sep \Gam'|_{\p} \pder  \p:\A_0}  
{\seq{\Gam'\sm\Gam'|_\p}{\lambda \p. \appctx{\hcontext'}{\s}: \A_0\rew{\vec \A'} \rew  \pi}}\ (\introarrow)
\]

Since $\A_0$ is inhabited,
Lemma~\ref{l:inhabited-patterns}(\ref{l:inhabited-patterns-uno})
ensures that $\Gam'|_{\p}$ (and thus $\Gam'$) is
inhabited, too.  The proof follows by the \ih\ on the major
  premise.
\item If $\hcontext = \hcontext'[\p/\uu]$ then $\Gam=\Gam'\sm\Gam'|_\p+\Del$ and $\Pi$ is:
 \[ 
  \infer{\seq{\Gam'}{\appctx{\hcontext'}{\s}:\vec{\A}  \rew \pi} \sep
        \Gam'|_{\p} \pder \p:\D \sep
        \seq{\Del}{\uu:\D}}
      {\seq{(\Gam'\sm \Gam'|_\p) +  \Del}{\appctx{\hcontext'}{\s}[\p/\uu]:\vec{\A}  \rew \pi}}\ (\trsub)
      \]
$\Del$ is  inhabited, being a sub-environment of $\Gam$.
By Lemma~\ref{l:inhabited-patterns}(\ref{l:inhabited-patterns-dos})  $\D$ is
inhabited.
Hence by Lemma \ref{l:inhabited-patterns}(\ref{l:inhabited-patterns-uno})
    $\Gam'|_\p$ (and thus $\Gam'$) is inhabited. 
     Then 
the proof follows by the \ih\  on the major premise. \qedhere
   \end{itemize}
\end{proof}

\begin{theorem}[Characterizing  Solvability]
\label{t:main-result}
A term $\s$ is solvable iff $\Pi \dem \Gam \der \s: \vec{\C} \arrow
\sig$, where  $\sig$ is a product type and $\Gam$ and $\vec{\C}$ are inhabited.
\end{theorem}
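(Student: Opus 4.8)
The plan is to prove the two implications separately, leaning on the machinery already established; throughout I write $\vec{\C} = \C_1,\ldots,\C_n$ for the multiset prefix and recall that $\sig$ is, by hypothesis, a product type.

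For the forward direction (solvability implies typability plus inhabitation), suppose $\s$ is solvable, witnessed by a head context $\hcontext$ with $\appctx{\hcontext}{\s}$ closed and $\appctx{\hcontext}{\s} \Rewn{} \pair{\uu}{\vv}$. Since every pair is typable with a product type, in particular $\der \pair{\uu}{\vv}:\oprod$ (rule $(\trpair)$ with both components typed $\emul$ via $(\many)$), Subject Expansion (Lemma~\ref{lem:red:exp}(\ref{lem:subexp})) gives $\es \der \appctx{\hcontext}{\s}:\oprod$, the environment being empty as $\appctx{\hcontext}{\s}$ is closed. I would then read $\oprod$ as the degenerate case $\vec{\A}\arrow\oprod$ with $\vec{\A}$ empty, so that the empty environment and the empty $\vec{\A}$ are vacuously inhabited, and invoke Lemma~\ref{l:extracting-typing} directly. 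This yields $\Gam' \der \s:\vec{\C}\arrow\oprod$ with $\Gam'$ and $\vec{\C}$ inhabited and $\oprod$ a product type, which is exactly the conclusion. So this half is essentially immediate once the extraction lemma is available.

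For the converse, assume $\Gam \der \s:\vec{\C}\arrow\sig$ with $\sig$ a product type and $\Gam,\vec{\C}$ inhabited; the plan is to exhibit an explicit solving head context. Let $\fv{\s}=\{\x_1,\ldots,\x_m\}$. For each $\x_i$ I choose a closed term $\vv_i$ inhabiting $\Gam(\x_i)$ when $\x_i\in\dom{\Gam}$, and inhabiting $\emul$ (\eg\ $\vv_i=\id$, typed by $(\many)$ with empty index set) otherwise; by relevance (Lemma~\ref{l:relevance}) these latter variables, sent to $\emul$ by $\Gam$, are the only free variables outside $\dom{\Gam}$. For each $\C_j$ I choose a closed inhabitant $\uu_j$. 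I then set
\[
\hcontext \ :=\ (\Box\,[\x_1/\vv_1]\cdots[\x_m/\vv_m])\,\uu_1\cdots\uu_n,
\]
which is a head context by the grammar, and check that $\appctx{\hcontext}{\s}$ is closed, every free variable of $\s$ being captured by a matching and all $\vv_i,\uu_j$ being closed. The typing is verified by applying rule $(\trsub)$ once per matching — each step either removes $\x_i$ from the environment (when $\x_i\in\dom{\Gam}$) or leaves it unchanged (when $\Gam|_{\x_i}=\es$), always preserving the object type $\vec{\C}\arrow\sig$ and ending with the empty environment — and then rule $(\app)$ once per argument $\uu_j$, consuming the prefix $\vec{\C}$ and leaving the product type $\sig$. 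Thus $\es \der \appctx{\hcontext}{\s}:\sig$ with $\appctx{\hcontext}{\s}$ closed and $\sig$ a product type, so by Lemma~\ref{l:closed-pair} the term $\appctx{\hcontext}{\s}$ reduces to a pair, whence $\s$ is solvable.

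The main obstacle is not in either implication taken abstractly — both collapse to a handful of applications of prior results — but in the bookkeeping of the converse: one must ensure the constructed head context closes $\s$ \emph{completely}, including the free variables mapped to $\emul$ which are invisible to $\Gam$ yet must still be bound, and that the resulting closed term genuinely carries the \emph{product} type $\sig$, as this is precisely the hypothesis that licenses Lemma~\ref{l:closed-pair}. The conceptual weight of the theorem has already been discharged into Lemma~\ref{l:extracting-typing}, Lemma~\ref{l:inhabited-patterns} and Lemma~\ref{l:closed-pair}.
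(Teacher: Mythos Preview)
Your proof is correct and follows the same plan as the paper: the forward direction is identical (subject expansion to $\oprod$, then Lemma~\ref{l:extracting-typing}), and the converse builds an explicit solving head context and invokes Lemma~\ref{l:closed-pair}. The only difference is cosmetic --- the paper closes the term with abstraction-and-application, $\hcontext = (\l\x_k\ldots((\l\x_1.\Box)\uu_1)\ldots\uu_k)\vv_1\ldots\vv_m$, rather than with explicit matchings --- and you are in fact more careful than the paper in also binding the free variables of $\s$ that lie outside $\dom{\Gam}$, a point the paper's construction glosses over.
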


\begin{proof} \mbox{}
\begin{itemize}
\item (only if) If $\s$ is solvable, then there
  exists a head context $\hcontext$ such that $\appctx{\hcontext}{\s}$
  is closed and  $\appctx{\hcontext}{\s} \Rewn{} \pair \uu \vv$.
By subject expansion $\der\appctx{\hcontext}{\s}:  \oprod$.
Then Lemma  \ref{l:extracting-typing} allows to conclude.
\item (if) Let $\Gam = \x_1:\A_1, \ldots, \x_k: \A_k\ (k \geq 0)$ and
  $\vec{\C} = \C_1, \ldots, \C_m\ (m \geq 0)$. By hypothesis there exist
  closed terms $\uu_1, \ldots \uu_k, \vv_1, \ldots \vv_m$ such that
  $\es \der \uu_i : \A_i$ and
    $\es \der \vv_j: \C_j$ ($1\leq i \leq k, 1 \leq j\leq m$).  Let
  $\hcontext = (\lambda \x_k \ldots ((\lambda \x_1. \Box) \uu_1) \ldots \uu_k)
  \vv_1 \ldots \vv_m$ be a head context. We have
  $\appctx{\hcontext}{\s}$ closed and $ \es \der \appctx{\hcontext}{\s} : \sig$, where
  $\sig$ is a product type.  This in turn implies that $\appctx{\hcontext}{\s}$
  reduces to a pair, by Lemma~\ref{l:closed-pair}.  Then the term $\s$ 
  is solvable by definition. \qedhere
\end{itemize}
\end{proof}

Our notion of solvability is conservative with respect to that of the $\lambda$-calculus.

\begin{theorem}[Conservativity]\label{thm:con}
A $\lambda$-term $\s$ is solvable in the  $\lambda$-calculus if and only if $\s$ 
is solvable  in the  $\Lp$-calculus.
\end{theorem}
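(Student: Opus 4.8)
The plan is to reduce both implications to facts already at hand: the simulation of $\beta$-reduction inside $\Rew{}$, Theorem~\ref{th:completess}, and the classical syntactic characterization of $\l$-solvability recalled in the introduction (statements \textbf{(1)} and \textbf{(2)}). The key preliminary observation, which I would isolate as a lemma, is that on pure $\l$-terms the two calculi essentially agree. On the one hand, a $\beta$-step $(\l\x.\uu)\vv \redb \uu\isubs{\x}{\vv}$ is simulated in $\Lp$ by the two steps $(\l\x.\uu)\vv \Rew{\run} \uu[\x/\vv] \Rew{\rdeux} \uu\isubs{\x}{\vv}$. On the other hand, since a $\l$-term contains neither pairs, pair patterns nor $\fail$, only the rules $(\run)$ and $(\rdeux)$ can ever fire; hence every $\Rew{}$-reduct of a $\l$-term stays in the fragment of $\l$-terms with \emph{variable} explicit matchings, and resolving these matchings by $(\rdeux)$ projects any such reduction onto a genuine $\beta$-reduction.

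For the implication $\l$-solvable $\Rightarrow$ $\Lp$-solvable, suppose $\s$ is solvable in the $\l$-calculus. By definition there is a $\l$-head context $\hcontext$ with $\appctx{\hcontext}{\s}$ closed and $\appctx{\hcontext}{\s}\redbs \id$. Every $\l$-head context is in particular a head context of $\Lp$, and so is $\hcontext' := \hcontext\, \pair{\id}{\id}$. Then $\appctx{\hcontext'}{\s}=\appctx{\hcontext}{\s}\,\pair{\id}{\id}$ is still closed, and using the simulation above together with $\id\,\pair{\id}{\id}=(\l\x.\x)\pair{\id}{\id}\Rew{\run}\x[\x/\pair{\id}{\id}]\Rew{\rdeux}\pair{\id}{\id}$, I obtain $\appctx{\hcontext'}{\s}\Rewn{}\id\,\pair{\id}{\id}\Rewn{}\pair{\id}{\id}$. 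Thus $\hcontext'$ witnesses that $\s$ is solvable in $\Lp$.

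For the converse, suppose $\s$ is solvable in $\Lp$. By Theorem~\ref{th:completess}, $\s$ has a canonical form, say $\s \Rewn{} \uu$ with $\uu \in {\cal J}$. By the preliminary observation every step of this reduction is a $(\run)$- or $(\rdeux)$-step and $\uu$ is a $\l$-term with variable matchings. Inspecting the grammar of canonical forms when pairs, pair patterns and $\fail$ are absent, the clauses $\pair{\s}{\s}$, ${\cal J}[\pair{\p}{\q}/{\cal K}]$ and ${\cal K}[\pair{\p}{\q}/{\cal K}]$ are all unavailable, so ${\cal J}$ collapses to $\l\x_1\ldots\x_n.\,{\cal K}$ and ${\cal K}$ to $\x\,\vv_1\ldots\vv_m$; hence $\uu=\l\x_1\ldots\x_n.\x\,\vv_1\ldots\vv_m$, with no matching on its spine. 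Resolving the residual (variable) matchings inside the $\vv_i$ and projecting $\s\Rewn{}\uu$ onto the $\l$-calculus yields $\s\redbs\l\x_1\ldots\x_n.\x\,\vv_1'\ldots\vv_m'$, which is a head normal form, i.e.\ statement \textbf{(2)}. By the classical characterization \textbf{(2)}$\Rightarrow$\textbf{(1)} recalled in the introduction, $\s$ is solvable in the $\l$-calculus.

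The main obstacle is the preliminary simulation lemma, and specifically its \emph{projection} half: one must verify that $\Rew{}$-reductions of a $\l$-term, which may introduce nested variable explicit matchings and interleave $(\run)$- and $(\rdeux)$-steps in arbitrary order, always map back to $\beta$-reductions, and that the canonical spine $\l\vec{\x}.\x\,\vv_1\ldots\vv_m$ can never be corrupted by a leftover matching. This is routine for a calculus with reduction at a distance, but it is where the actual work lies; everything else is bookkeeping plus the appeals to Theorem~\ref{th:completess} and to the standard head-normal-form characterization of $\l$-solvability.
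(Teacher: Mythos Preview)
Your argument is correct. The forward direction ($\l$-solvable $\Rightarrow$ $\Lp$-solvable) is identical to the paper's: both compose the witnessing head context with an application to a pair. For the converse you take a genuinely different, and in fact more elementary, route than the paper.

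The paper argues the converse by contraposition and goes through the type system: if $\s$ is not $\l$-solvable then it has no head normal form, hence (implicitly) no canonical form in $\Lp$, hence it is not typable by Theorem~\ref{l:characterization-canonical}, hence not $\Lp$-solvable by Theorem~\ref{t:main-result}. You instead argue directly: $\Lp$-solvable implies having a canonical form by Theorem~\ref{th:completess}, and a canonical form of a $\l$-term must, after projecting out the variable matchings, be a head normal form. Your route avoids invoking the main characterization theorem (Theorem~\ref{t:main-result}) altogether and relies only on Theorem~\ref{th:completess}, which is lighter. Moreover, you make explicit the simulation/projection lemma that both proofs actually need: the paper's step ``$\s$ has no head normal form, hence $\s$ has no canonical form'' is exactly the contrapositive of your projection argument, but the paper states it without justification. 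Your identification of this as the only non-trivial point is accurate, and your sketch of why it holds (closure of the $\l$-fragment-with-variable-matchings under $\Rew{}$, and the collapse of the canonical-form grammar when pairs, pair patterns and $\fail$ are absent) is sound.
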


\begin{proof} \mbox{}
\begin{itemize}
\item (if) Let $\s$ be a $\lambda$-term which is not solvable, \ie\ which
  does not have head normal-form. Then $\s$ (seen as a term of our
  calculus) has no \canonical, and thus $\s$ is not typable by
  Theorem~\ref{l:characterization-canonical}.  It turns out that $\s$
  is not solvable in $\Lp$ by Theorem~\ref{t:main-result}.
\item (only if)  Let $\s$ be a solvable $\lambda$-term so that  there  exist  a
  head context $\hcontext$ such that $\appctx{\hcontext}{\s}$ is closed and reduces to $\id$, 
  then it is  easy to construct a head  context $\hcontext'$ such that
  $\appctx{\hcontext'}{\s}$   reduces  to  a   pair  (just   take  $\hcontext'=
  \hcontext\ \pair{\s_1}{\s_2}$ for some terms $\s_1, \s_2$). \qedhere
\end{itemize}
\end{proof}

\section{Conclusion and Further Work}
\label{s:conclusion}

We extend the classical notion of solvability, originally stated for
the $\lambda$-calculus, to a pair pattern calculus.  We provide a logical
characterization of solvable terms by means of typability
{\it and} inhabitation.

An interesting question concerns the consequences
  of changing non-idempotent types to idempotent ones in our
  typing system $\Pu$. Characterization
  of solvability will still need the two
  ingredients typability and inhabitation, however,
  inhabitation will become undecidable, in contrast to our
  decidable inhabitation problem for the non-idempotent
  system $\Pu$. This is consistent with the fact that
  the inhabitation problem for the $\lambda$-calculus
  is undecidable for idempotent types~\cite{Urzyczyn99}, but decidable for
 the non-idempotent ones~\cite{bkdlr14}. 

Notice however that changing the meta-level substitution operator
  to explicit substitutions would not change neither
  the notion nor the characterization of solvability: all the
  explicit substitutions are fully computed in normal forms.
  
Further work will be developed in different directions.  As we already
discussed in Section~\ref{s:calculus}, different definitions of
solvability would be possible, as for example
in~\cite{Garcia-PerezN16}. We explored the one based on a lazy
semantics, but it would be also interesting to obtain a full
characterization based on a strict semantics.  

On the semantical side, it is well known that non-idempotent
intersection types can be used to supply a logical description of the
relational semantics of $\lambda$-calculus~\cite{deC09,paolini12draft}. We
would like to start from our type assignment system for building a
denotational model of the pattern calculus.  Last but not least, a
challenging question is related to the characterization of solvability
in a more general framework of pattern $\lambda$-calculi allowing the
patterns to be dynamic~\cite{JK09}.

\section*{Acknowledgment}
We are grateful to  Sandra Alves and Daniel Ventura
for fruitful discussions. This work was
partially done within the framework of ANR COCA HOLA (16-CE40-004-01).

\renewcommand{\em}{\it}
\bibliographystyle{abbrv}
\bibliography{structure}

\begin{thebibliography}{10}

\bibitem{AK10}
B.~Accattoli and D.~Kesner.
\newblock The structural {\it lambda}-calculus.
\newblock In A.~Dawar and H.~Veith, editors, {\em Proceedings of 24th EACSL
  Conference on Computer Science Logic}, volume 6247 of {\em LNCS}, pages
  381--395. Springer, Aug. 2010.

\bibitem{Alves}
S.~Alves, D.~Kesner, and D.~Ventura.
\newblock A quantitative understanding of pattern matching, 2019.
\newblock https://arxiv.org/abs/1912.01914.

\bibitem{Bal2012}
T.~Balabonski.
\newblock On the implementation of dynamic patterns.
\newblock In E.~Bonelli, editor, {\em HOR}, volume~49 of {\em EPTCS}, pages
  16--30, 2010.

\bibitem{BBBK17}
T.~Balabonski, P.~Barenbaum, E.~Bonelli, and D.~Kesner.
\newblock Foundations of strong call by need.
\newblock {\em {PACMPL}}, 1({ICFP}):20:1--20:29, 2017.

\bibitem{BBM18}
P.~Barenbaum, E.~Bonelli, and K.~Mohamed.
\newblock Pattern matching and fixed points: Resource types and strong
  call-by-need: Extended abstract.
\newblock In D.~Sabel and P.~Thiemann, editors, {\em Proceedings of the 20th
  International Symposium on Principles and Practice of Declarative
  Programming, {PPDP} 2018, Frankfurt am Main, Germany, September 03-05, 2018},
  pages 6:1--6:12. {ACM}, 2018.

\bibitem{barendregt84nh}
H.~Barendregt.
\newblock {\em The Lambda Calculus: Its Syntax and Semantics}, volume 103 of
  {\em Studies in logic and the foundation of mathematics}.
\newblock North-Holland, Amsterdam, revised edition, 1984.

\bibitem{tipoA-BCD:JSL}
H.~Barendregt, M.~Coppo, and M.~Dezani-Ciancaglini.
\newblock A filter lambda model and the completeness of type assignment.
\newblock {\em The Journal of Symbolic Logic}, 48(4):931--940, 1983.

\bibitem{BenYellesPhd}
C.~Ben-Yelles.
\newblock {\em Type-assignment in the lambda-calculus; syntax and semantics}.
\newblock PhD thesis, University of Wales Swansea, 1979.

\bibitem{BernadetTh}
A.~Bernadet.
\newblock {\em Types intersections non-idempotents pour raffiner la
  normalisation forte avec des informations quantitatives}.
\newblock PhD thesis, \'Ecole Polytechnique, France, 2014.

\bibitem{bkdlr14}
A.~Bucciarelli, D.~Kesner, and S.~{Ronchi Della Rocca}.
\newblock The inhabitation problem for non-idempotent intersection types.
\newblock In {\em TCS}, volume 8705 of {\em LNCS}, pages 341--354. Springer,
  2014.

\bibitem{DBLP:journals/lmcs/BucciarelliKR18}
A.~Bucciarelli, D.~Kesner, and S.~{Ronchi Della Rocca}.
\newblock Inhabitation for non-idempotent intersection types.
\newblock {\em Logical Methods in Computer Science}, 14(3), 2018.

\bibitem{DBLP:conf/tlca/BucciarelliKR15}
A.~Bucciarelli, D.~Kesner, and {Simona Ronchi Della Rocca}.
\newblock Observability for pair pattern calculi.
\newblock In {\em {TLCA}}, volume~38 of {\em LIPIcs}, pages 123--137, 2015.

\bibitem{CK04}
S.~Cerrito and D.~Kesner.
\newblock Pattern matching as cut elimination.
\newblock {\em Theoretical Computer Science}, 323(1-3):71--127, 2004.

\bibitem{CirsteaFK07}
H.~Cirstea, G.~Faure, and C.~Kirchner.
\newblock A rho-calculus of explicit constraint application.
\newblock {\em Higher-Order and Symbolic Computation}, 20(1-2):37--72, 2007.

\bibitem{deC09}
D.~de~Carvalho.
\newblock Execution time of lambda-terms via denotational semantics and
  intersection types.
\newblock {\em CoRR}, abs/0905.4251, 2009.

\bibitem{Dezani-Ghilezan:TYPES-2002}
M.~Dezani-Ciancaglini and S.~Ghilezan.
\newblock Two behavioural lambda models.
\newblock In H.~Geuvers and F.~Wiedijk, editors, {\em TYPES}, volume 2646 of
  {\em LNCS}, pages 127--147. Springer, 2002.

\bibitem{Garcia-PerezN16}
{\'{A}}.~Garc{\'{\i}}a{-}P{\'{e}}rez and P.~Nogueira.
\newblock No solvable lambda-value term left behind.
\newblock {\em Logical Methods in Computer Science}, 12(2), 2016.

\bibitem{hindley08}
J.~R. Hindley.
\newblock {\em Basic Simple Type Theory}.
\newblock Cambridge Tracts in Theoretical Computer Science. Cambridge
  University Press, Amsterdam, 2008.

\bibitem{JK09}
B.~Jay and D.~Kesner.
\newblock First-class patterns.
\newblock {\em Journal of Functional Programming}, 19(2):191--225, 2009.

\bibitem{Kahl-2004a}
W.~Kahl.
\newblock Basic pattern matching calculi: A fresh view on matching failure.
\newblock In Y.~Kameyama and P.~Stuckey, editors, {\em FLOPS}, volume 2998 of
  {\em LNCS}, pages 276--290. Springer, 2004.

\bibitem{KvOdV08}
J.-W. Klop, V.~{van Oostrom}, and R.~{de Vrijer}.
\newblock Lambda calculus with patterns.
\newblock {\em Theoretical Computer Science}, 398(1-3):16--31, 2008.

\bibitem{krivine93book}
J.~L. Krivine.
\newblock {\em Lambda-Calculus, Types and Models}.
\newblock Masson, Paris, and Ellis Horwood, Hemel Hempstead, 1993.

\bibitem{paolini12draft}
L.~Paolini, M.~Piccolo, and S.~{Ronchi Della Rocca}.
\newblock Essential and relational models.
\newblock {\em Mathematical Structures in Computer Science}, 27(5):626--650,
  2017.

\bibitem{Petit09}
B.~Petit.
\newblock A polymorphic type system for the lambda-calculus with constructors.
\newblock In {\em {TLCA}}, volume 5608 of {\em lncs}, pages 234--248. Springer,
  2009.

\bibitem{PeytonJones}
S.~Peyton-Jones.
\newblock {\em The Implementation of Functional Programming Languages}.
\newblock Prentice-Hall, Inc., 1987.

\bibitem{Scott70}
D.~S. Scott.
\newblock Outline of a mathematical theory of computation.
\newblock In {\em Fourth Annual Princeton Conference on Information Sciences
  and Systems}, pages 169--176. Departement of Elctrical Engeneering, Princeton
  Univ., 1970.

\bibitem{Urzyczyn99}
P.~Urzyczyn.
\newblock The emptiness problem for intersection types.
\newblock {\em Journal of Symbolic Logic}, 64(3):1195--1215, 1999.

\bibitem{vOdecreasing}
V.~{van Oostrom}.
\newblock Confluence by decreasing diagrams.
\newblock {\em Theoretical Computer Science}, 126(2):259–280, 1994.

\end{thebibliography}

\end{document}